\journal{Journal of Theoretical Biology}
\newtheorem{lemma}{Lemma}
\newtheorem{theorem}{Theorem}
\newtheorem{corollary}{Corollary}
\newtheorem{proposition}{Proposition}
\newcommand{\PP}{{\mathbb P}}
\newcommand{\MP}{{\mathtt{MP}}}
\newcommand{\AL}{{\mathcal{A}}}
\begin{document}

\begin{frontmatter}

\title{On the accuracy of ancestral sequence reconstruction for ultrametric trees with parsimony}

\author{Lina Herbst\fnref{lh}} 
\author{Mareike Fischer\fnref{mf}}
\address{Institute for Mathematics and Computer Science, Greifswald University, Walther-Rathenau-Straße 47, 17487 Greifswald, Germany}
\fntext[lh]{lina.herbst@uni-greifswald.de}
\fntext[mf]{email@mareikefischer.de}

\begin{abstract}
We examine a mathematical question concerning the reconstruction accuracy of the Fitch algorithm for reconstructing the ancestral
sequence of the most recent common ancestor given a phylogenetic tree
and sequence data for all taxa under consideration. In particular,
for the symmetric 4-state substitution model which is also known as Jukes-Cantor model, we answer
affirmatively a conjecture of Li, Steel and Zhang which states that for any
ultrametric phylogenetic tree and a symmetric model, the Fitch parsimony
method using all terminal taxa is more accurate, or at least as accurate, for
ancestral state reconstruction than using any particular terminal taxon or any particular pair of taxa. This
conjecture had so far only been answered for two-state data by Fischer and
Thatte. Here, we focus on answering the biologically more relevant case with four
states, which corresponds to ancestral sequence reconstruction from DNA or RNA data.
\end{abstract}

\begin{keyword}
Maximum Parsimony \sep ancestral sequence reconstruction \sep reconstruction accuracy \sep symmetric 4-state model
\MSC[2010] 00-01\sep  99-00
\end{keyword}

\end{frontmatter}


\section{Introduction}
\noindent
The reconstruction of ancestral sequences, e.g. DNA-sequences of common ancestors of present-day species, is an important approach in understanding the evolution and origin of these species \citep{moretaxa,liberles,ra2}. There exist various methods to do such reconstructions, e.g the Fitch algorithm \citep{links,phylogenetics,fitch}, which is based on the Maximum Parsimony criterion. However, how reliable is such a reconstruction? \\
Several studies analyzed the reliability, the so-called reconstruction accuracy, of the Fitch algorithm for reconstructing ancestral sequence data of the most recent common ancestor given a phylogenetic tree and sequences for all taxa under consideration \citep{moretaxa,subset,ra1}.
It seems intuitive that the root state is more likely to be conserved for taxa that are closer to the root, since over time more sequence changes can occur. Moreover, one might expect that the reconstruction accuracy is highest when all taxa are taken into account, which was also suggested by earlier simulation studies \citep{estimation}. However, it can be shown that there are cases in which the reconstruction accuracy improves when only a subset of taxa is considered \citep{moretaxa,subset}. In particular, the reconstruction accuracy can even improve when a taxon close to the root is ignored \citep{subset}. \\
Despite these counterintuitive results, in 2008 Li et al. conjectured that for any rooted binary ultrametric phylogenetic tree (i.e. a tree in which all branches have the same distance to the root) and a simple model of evolution, the Fitch algorithm using all taxa for ancestral state reconstruction is at least as accurate as using a single taxon \citep{moretaxa}. Note that ultrametric trees are also often referred to as clocklike trees or molecular clocks. So the conjecture by Li et al. means that under a molecular clock, the reconstruction accuracy is at least as good as the conservation probability of any taxon. Note that under a molecular clock all taxa have the same conversation probability, and that this conjecture provides a lower bound on the reconstruction accuracy for any rooted binary ultrametric phylogenetic tree under a simple model of evolution. Ignoring all data besides the data of one species displays the extreme case of throwing information away. Thus, showing that the conjecture holds is good news for Maximum Parsimony as a criterion for ancestral state reconstruction. \\
In 2009, Fischer and Thatte \citep{subset} proved the conjecture for two-state characters, but it remained unclear if it also holds for 4-state data like DNA or RNA. Thus, the aim of this paper is to consider this biologically relevant case with four states. In particular, we answer the conjecture affirmatively. Additionally, we also prove that the conjecture holds for three-state characters. Along the way, we also prove that the Fitch parsimony method applied to all taxa is always at least as good as applied to any pair of taxa if the underlying tree is clocklike. However, we also show that this does not improve the lower bound induced by single leaves.

\section{Preliminaries}
\noindent
Before we can present our results, we first have to introduce some basic concepts.
Recall that a rooted binary phylogenetic tree on the leaf set $X$ ($|X|=n \geq 2$) is a connected, acyclic graph in which the vertices of degree 1 are called leaves, and in which there is exactly one node $\rho$ of degree 2, which is referred to as root, and all other non-leaf nodes have degree 3. Moreover, in a rooted binary phylogenetic $X$-tree the leaves are bijectively labelled by the elements of $X$. Let each vertex of the tree be assigned a state element of a finite state set $\AL$ with $|\AL|\geq 2$. In particular, we are interested in the biologically relevant case with four states, e.g. $\AL=\{\alpha, \beta, \gamma, \delta\}$, which corresponds for instance to DNA or RNA data.\\ 
The states evolve from $\rho$ by the well-known symmetric $r$-state model $N_r$ with alphabet $\AL=\{\alpha_1, \dots, \alpha_r\}$ \citep{links}. In this model, a state of $\AL$ is selected as the root state with probability $\frac{1}{|\AL|}$. Assume that $e=(u,v)$ is an edge of the tree, and node $u$ is closer to the root than $v$. Then in this model, $p_e$ is the substitution probability on edge $e$: it is the probability that $v$ is in some state $\alpha$ under the condition that $u$ is in a distinct state, say, $\beta$. This is denoted by $\PP(v=\alpha | u=\beta)$. The model is supposed to be symmetric, thus $p_e=\PP(v=\alpha | u=\beta)=\PP(v=\beta | u=\alpha)$. Furthermore, we assume that $0 \leq p_e \leq \frac{1}{|\AL|}$, in particular for four states we have $0 \leq p_e \leq \frac{1}{4}$. The biologically relevant case with four states, namely the $N_4$-model, is also often referred to as Jukes-Cantor-model \citep{jc}. \\
Similar as in \citep{subset,diplom}, we consider ultrametric trees, often known as clocklike trees or molecular clocks by biologists.
It means that the expected number of substitutions from the root to any leaf is the same \citep{phylogenetics}. \\
In this manuscript we reconstruct ancestral states by the Maximum Parsimony criterion with the Fitch algorithm, which we briefly explain now. Assume that we have a rooted binary tree with leaf set $X$. To introduce the Fitch algorithm, we first consider the kind of data we will map onto the leaves of the tree.
The data is given by a character on a leaf set $X$, which is a function $f: X \rightarrow \AL$. Thus, each leaf is assigned a character state. Note that as we consider $X=\{1,\dots,n\}$, we often write $f=f(1)f(2)\dots f(n)$ instead of listing $f(1),\ldots,f(n)$ explicitly. \\
Then the Fitch algorithm \citep{fitch} assigns a set of states to all interior vertices by minimizing the number of changes. The algorithm is based on Fitch's parsimony operation. Therefore, let $\AL$ be a non-empty finite alphabet and let $A,B \subseteq \AL$. Then, Fitch's parsimony operation $*$ is defined by
$$A*B \coloneqq \begin{cases}
A \cap B, & \text{if } A \cap B \neq \emptyset, \\
A \cup B, & \text{otherwise.}
\end{cases}$$
Using this operation, the Fitch algorithm works as follows. Consider all vertices $v$, whose two direct descendants have already been assigned a set, say $A$ and $B$. Then, $v$ is assigned $A*B$. This step is continued upwards along the tree until the root $\rho$ is assigned a set, which is denoted by $\MP(f,T)$. An example can be seen in Figure \ref{fitchexp}.
\begin{figure}[H]
\centering
\includegraphics[scale=0.6]{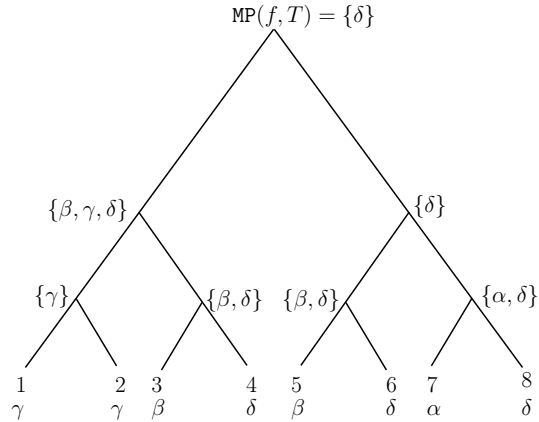}
\caption{Example for the Fitch algorithm for a rooted binary tree and the character $f:\gamma \gamma \beta \delta \beta \delta \alpha \delta$. At first each leaf is assigned the state specified by the character. Then all other vertices whose direct descendants have already been assigned a set are assigned a set by applying the parsimony operation. This step is continued until the root is assigned a set; here $\MP(f,T)=\{\delta\}$.}
\label{fitchexp}
\end{figure}
\noindent
Note that what we call the Fitch algorithm is in fact only one phase of the algorithm, but it is the only part we require to estimate potential root states. For more details we refer to \citep{fitch}. \\
For a 4-state-character there are $2^4-1=15$ possible sets for each interior vertex, since 16 is the cardinality of the power set of an alphabet with four elements minus one for the empty set, i.e.: $\{\alpha\},\{\beta\},\{\gamma\},\{\delta\},\{\alpha,\beta\},\dots,\{\alpha,\beta,\gamma,\delta\}$. \\
We say that the Fitch algorithm unambiguously reconstructs the root state if $|\MP(f,T)|=1$.
Otherwise the root state is reconstructed ambiguously, i.e. the method cannot decide between different states and therefore $|\MP(f,T)| > 1$.
\\
Note that real data usually comes in the form of an alignment, i.e. a sequence of characters, rather than in the form of an individual character. In this case, the Fitch algorithm would consider each character, i.e. each column (\enquote{site}) of the alignment, separately. This is why we focus on the case of a single character and its reconstruction accuracy.

\section{The accuracy of ancestral sequence reconstruction with 4-state characters} \label{4states}
\noindent 
Similar to Li et al., we now define the reconstruction accuracy for all $|\AL|\geq 2$ \citep{moretaxa}.
Therefore, let $\MP(f,T)$ denote the set of character states chosen by the Fitch algorithm as possible root states when applied to character $f$ on tree $T$. \\
Let $\mathcal{R} \subseteq \AL, \alpha \in \mathcal{R}$ and $|\mathcal{R}| \geq 1$. The probability that the root state $\alpha$ evolves on $T$ to a character $f$ for which the Fitch algorithm assigns $\mathcal{R}$ as possible root state set is given by $\PP(\MP(f,T)=\mathcal{R} | \rho=\alpha)$. \\
The reconstruction accuracy is then defined by
\begin{align}RA(X) \coloneqq \sum_{\substack{\mathcal{R} \subseteq \AL \\ \alpha \in \mathcal{R}}} \frac{1}{| \mathcal{R} |} \cdot \PP(\MP(f,T)=\mathcal{R} | \rho=\alpha). \label{RA}
\end{align}
To illustrate this definition, consider the case with $\AL=\{ \alpha, \beta, \gamma, \delta \}$. In this case, the reconstruction accuracy for the Fitch algorithm for ancestral state reconstruction is given by
\begin{align}
RA(X)=&P_{\alpha}(X)+\frac{1}{2} \cdot (P_{\alpha\beta}(X)+P_{\alpha\gamma}(X)+P_{\alpha\delta}(X)) \nonumber \\
&+ \frac{1}{3} \cdot (P_{\alpha\beta\gamma}(X)+P_{\alpha\beta\delta}(X)+P_{\alpha\gamma\delta}(X)) + \frac{1}{4} \cdot P_{\alpha\beta\gamma\delta}(X), \label{RA4_1}
\end{align}
where we define
\begin{align*}
&P_{\alpha}(X) \coloneqq \PP(\MP(f,T)=\{\alpha\}|\rho=\alpha),\\
&P_{\alpha\beta}(X) \coloneqq \PP(\MP(f,T)=\{\alpha,\beta\}|\rho=\alpha),\\
&P_{\alpha\gamma}(X) \coloneqq \PP(\MP(f,T)=\{\alpha,\gamma\}|\rho=\alpha),\\
&P_{\alpha\delta}(X) \coloneqq \PP(\MP(f,T)=\{\alpha,\delta\}|\rho=\alpha),\\
&P_{\alpha\beta\gamma}(X) \coloneqq \PP(\MP(f,T)=\{\alpha,\beta,\gamma\}|\rho=\alpha),\\
&P_{\alpha\beta\delta}(X) \coloneqq \PP(\MP(f,T)=\{\alpha,\beta,\delta\}|\rho=\alpha),\\
&P_{\alpha\gamma\delta}(X) \coloneqq \PP(\MP(f,T)=\{\alpha,\gamma,\delta\}|\rho=\alpha),\\
&P_{\alpha\beta\gamma\delta}(X) \coloneqq \PP(\MP(f,T)=\{\alpha,\beta,\gamma\}|\rho=\alpha).
\end{align*}
The main aim of this manuscript is to show that the reconstruction accuracy for a rooted binary ultrametric phylogenetic tree under the $N_4$-model using all terminal taxa is more accurate, or at least as accurate, for ancestral state reconstruction than using any particular terminal taxon. This provides a lower bound on $RA(X)$, and is stated in the following theorem.
\begin{theorem} \label{theoremRA}
For any rooted binary phylogenetic ultrametric tree and the $N_4$-model, the Fitch algorithm using all terminal taxa is more accurate, or at least as accurate, for ancestral state reconstruction than using any particular terminal taxon, that is
$$RA(X) \geq 1-3p.$$
\end{theorem}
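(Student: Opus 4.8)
The plan is to reduce the problem to a finite recursion driven by the tree's recursive structure and then prove the inequality by induction, exploiting ultrametricity at every internal merge. First I would fix the root state $\rho=\alpha$ and use the symmetry of the $N_4$-model: the stabiliser of $\alpha$ inside the permutation group of $\{\alpha,\beta,\gamma,\delta\}$ acts on the $15$ possible Fitch sets and collapses them into seven orbit classes, namely $\{\alpha\}$; the off-singletons $\{\beta\},\{\gamma\},\{\delta\}$; the $\alpha$-pairs $\{\alpha,\beta\},\{\alpha,\gamma\},\{\alpha,\delta\}$; the off-pairs $\{\beta,\gamma\},\{\beta,\delta\},\{\gamma,\delta\}$; the $\alpha$-triples; the off-triple $\{\beta,\gamma,\delta\}$; and the full set $\AL$. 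Writing $x_1,x_2,y_1,y_2,z_1,z_2,w$ for the (equal within a class) probabilities $\PP(\MP(f,T)=\mathcal R\mid\rho=\alpha)$, the defining sum (\ref{RA4_1}) collapses to the compact expression $RA(X)=x_1+\tfrac32 y_1+z_1+\tfrac14 w$, so the target $RA(X)\ge 1-3p$ becomes a single inequality in these seven quantities.

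Second, I would set up the two elementary operations that generate every ultrametric tree from its leaves. The first is an edge/substitution step: given the orbit vector of a subtree conditioned on its own root state, passing through an edge with substitution probability $p_e$ mixes in the three off-states by a symmetric channel and returns the vector conditioned on the parent state, which is a linear map on the seven-dimensional probability vector. The second is a merge step: at an internal vertex $\rho$ with maximal subtrees $T_1,T_2$ we have $\MP(f,T)=\MP(f,T_1)*\MP(f,T_2)$, and conditional on $\rho=\alpha$ the two sets are independent, so the orbit vector at $\rho$ is an explicit \emph{bilinear} combination of the two child vectors, obtained by enumerating, for each ordered pair of orbit classes, whether $A*B$ is an intersection or a union and reading off the resulting class. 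The base case is a single leaf, where $x_1=1$ and $RA=1=1-3\cdot 0$.

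Third, I would run the induction on the tree, and this is where ultrametricity enters decisively: at every merge the two child subtrees together with the edges above them have equal height, so the two vectors fed into the bilinear merge come from the \emph{same} feasible family indexed by a common root-to-leaf parameter $p'$, rather than from two unrelated distributions. This equal-height restriction is what makes the bilinear analysis tractable, and it is exactly the ingredient known to fail for non-ultrametric trees and intermediate subsets.

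Finally, the decisive and hardest part will be to identify a \emph{strengthened} inductive invariant — a system of inequalities among $x_1,x_2,y_1,y_2,z_1,z_2,w$ (for instance bounding the ``correct'' classes $x_1,y_1,z_1$ from below and the ``wrong'' classes $x_2,y_2,z_2$ from above in terms of $p'$) that simultaneously (a) holds for a single leaf, (b) is preserved by the edge step, (c) is preserved by the merge step applied to two vectors satisfying it with a common $p'$, and (d) implies $x_1+\tfrac32 y_1+z_1+\tfrac14 w\ge 1-3p'$. Preservation under the bilinear merge is where the real work lies: one must control the cross terms produced when two off-state sets coincide (yielding a \emph{wrong} intersection) against those produced when they are disjoint (yielding a larger, more ambiguous union), and then verify the resulting polynomial inequalities in $p'\in[0,\tfrac14]$. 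I expect monotonicity of the per-edge channel in $p_e$ together with the equal-height constraint to let these inequalities be reduced to the boundary or to a small number of scalar cases, closing the induction and delivering $RA(X)\ge 1-3p$ at the root.
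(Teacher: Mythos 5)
Your architecture matches the paper's: fix $\rho=\alpha$, collapse the $15$ Fitch sets into the seven symmetry classes, write the edge step \eqref{w1}--\eqref{w7} and the bilinear merge \eqref{alpha}--\eqref{alphabetagammadelta} from the standard decomposition, and induct on $n$ with base case a cherry. But the proposal stops exactly where the proof begins. You correctly identify that everything hinges on a ``strengthened inductive invariant'' that survives the edge and merge steps and implies the target inequality, and then you do not exhibit one; you only describe properties it should have and express the expectation that the resulting polynomial inequalities can be settled. That is the entire mathematical content of the theorem, so as it stands this is a plan, not a proof. For the record, the paper's invariant is Lemma~\ref{lemmageq}: $P_{\alpha}(X)\ge P_{\beta}(X)$, $P_{\alpha\beta}(X)\ge P_{\beta\gamma}(X)$, $P_{\alpha\beta\gamma}(X)\ge P_{\beta\gamma\delta}(X)$, i.e.\ the ``$\alpha$-containing'' class dominates the ``$\alpha$-free'' class of the same cardinality. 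Notably this holds for \emph{arbitrary} rooted binary trees, with no ultrametricity and no common height parameter, because each difference propagates through an edge by simple multiplication by $P_i=1-4p_i$ (equations \eqref{4(a)(b)}--\eqref{4(abc)(bcd)}), which keeps the induction self-contained. The second ingredient you are missing is the explicit identity \eqref{8D(X)}, which rewrites $8D(X)=8\bigl(RA(X)-(1-3p)\bigr)$ as a nonnegative combination of the Lemma~\ref{lemmageq} differences, the subtree deficits $P_iD_i$, and terms $3P+4P_iD_i$; it is only here, via $p=p_i+p_i'-4p_ip_i'$, that ultrametricity is used.

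Two further points where your sketch would need repair. First, your claim that at every merge ``the two vectors fed into the bilinear merge come from the same feasible family indexed by a common root-to-leaf parameter $p'$'' is not right: the two maximal pending subtrees generally have different internal leaf-depth probabilities $p_1'\neq p_2'$; what ultrametricity gives is only the single constraint $p_1+p_1'-4p_1p_1'=p_2+p_2'-4p_2p_2'=p$. An invariant formulated as bounds in a shared parameter $p'$ therefore does not directly apply to the two children of a merge, whereas the paper's parameter-free comparisons do. Second, your hope that ``monotonicity of the per-edge channel'' reduces the merge-preservation inequalities to boundary cases is not how the difficulty resolves: the cross terms in the bilinear merge are controlled by the telescoping identity \eqref{4(a)(a)(b)(b)}, which expresses $P_{(A_1)}(Y_1)P_{(A_2)}(Y_2)-P_{(B_1)}(Y_1)P_{(B_2)}(Y_2)$ as a sum of manifestly nonnegative products, not by an extremal-case analysis in $p'$. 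Without supplying either the invariant or an equivalent mechanism for the cross terms, the induction does not close.
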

\noindent
The proof of Theorem \ref{theoremRA} requires some more general properties. Therefore, we first turn our attention to the following. If not stated otherwise, we always consider rooted binary ultrametric phylogenetic trees under the $N_4$-model. Due to the symmetry of the model, we can assume without loss of generality that the root is in state $\alpha$, so $\alpha$ evolves along the tree to a character $f$ on $X$. 
Let $p$ be the probability that from the root to one leaf the state changes from $\alpha$ to one specific state in $\AL \setminus \{\alpha\} = \{\beta,\gamma,\delta\}$, i.e. $3p$ is the probability that a given leaf is not in state $\alpha$. \\
Therefore, in the case of the $N_4$-model, $1-3p$ is the probability that the root is in the same state as one leaf, since three different changes ($\alpha \rightarrow \beta,\alpha \rightarrow \gamma, \alpha \rightarrow \delta$) can occur. This is at the same time the reconstruction accuracy when only one leaf is taken into account. The main aim of this paper is to show that $1-3p$ is a lower bound for $RA(X)$; that is considering all taxa under a molecular clock is always better, or as good as, considering just one taxon. \\
As shown in Figure \ref{tree}, every binary tree $T$ can be decomposed into two maximal pending subtrees $T_1$ and $T_2$ with leaf sets $Y_1$ and $Y_2$ ($X=Y_1 \cup Y_2, Y_1 \cap Y_2 = \emptyset$). This is the so-called standard decomposition \citep{phylogenetics}. We denote the children of $\rho$ by $y_1$ and $y_2$, and with probability $p_i$ one specific change occurs from $\rho$ to $y_i$ ($i \in \{1,2\}$). Analogously, one specific change occurs from $y_i$ to any leaf with probability $p_i^{'}$ ($i \in \{1,2\}$). Note that $p$ can then be calculated by all possibilities given for one specific change from $\rho$ to any leaf. Suppose that the root is in state $\alpha$ and leaf $l$ in state $\beta$ (without loss of generality we have $l \in Y_1$). Then there are four different possibilities for a change from $\rho=\alpha$ to $l=\beta$:
\begin{align*}
&\rho=\alpha \rightarrow y_1=\alpha \rightarrow l=\beta, \\
&\rho=\alpha \rightarrow y_1=\beta \rightarrow l=\beta, \\
&\rho=\alpha \rightarrow y_1=\gamma \rightarrow l=\beta, \\
&\rho=\alpha \rightarrow y_1=\delta \rightarrow l=\beta
.
\end{align*}
Thus,
\begin{align}
p &= (1-3p_i)p_i^{'} + p_i(1-3p_i^{'})+p_i p_i^{'}+p_i p_i^{'} \nonumber \\
&=p_i+p_i^{'}-4p_i p_i^{'}. \label{defp}
\end{align}
Furthermore, for $i \in\{1,2\}$ we define $P_i \coloneqq 1-4p_i$, and similarly $P \coloneqq 1-4p$. \\
\begin{figure}[H]
\centering
\includegraphics[scale=0.5]{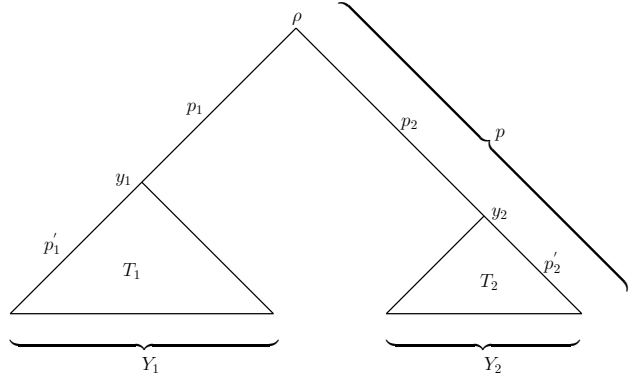}
\caption{Illustration of a rooted binary ultrametric phylogenetic tree and the standard decomposition into its two maximal pending subtrees $T_1$ and $T_2$ \citep{subset}.}
\label{tree}
\end{figure}
\noindent
Under the model assumptions of the $N_4$-model, due to the symmetry, we have that
\begin{align}
&P_{\alpha\beta}(X) = P_{\alpha\gamma}(X) = P_{\alpha\delta}(X), \label{x2} \\
&P_{\alpha\beta\gamma}(X) = P_{\alpha\beta\delta}(X) = P_{\alpha\gamma\delta}(X) \label{x3}, 
\end{align}
since e.g. $$P_{\alpha\beta}(X)=\PP(\MP(f,T)=\{\alpha,\beta\}|\rho=\alpha)=\PP(\MP(f,T)=\{\alpha,\gamma\}|\rho=\alpha)=P_{\alpha\gamma}(X).$$
Therefore by \eqref{RA4_1}, \eqref{x2} and \eqref{x3}, $RA(X)$ can be simplified and becomes
\begin{align}
RA(X)=&P_{\alpha}(X)+\frac{3}{2} P_{\alpha\beta}(X) + P_{\alpha\beta\gamma}(X) + \frac{1}{4} P_{\alpha\beta\gamma\delta}(X). \label{RA4}
\end{align}
Moreover, we define
\begin{align*}
&P_{\beta}(X) \coloneqq \PP(\MP(f,T)=\{\beta\}|\rho=\alpha),\\
&P_{\gamma}(X) \coloneqq \PP(\MP(f,T)=\{\gamma\}|\rho=\alpha),\\
&P_{\delta}(X) \coloneqq \PP(\MP(f,T)=\{\delta\}|\rho=\alpha),\\
&P_{\beta\gamma}(X) \coloneqq \PP(\MP(f,T)=\{\beta,\gamma\}|\rho=\alpha),\\
&P_{\beta\gamma\delta}(X) \coloneqq \PP(\MP(f,T)=\{\beta,\gamma,\delta\}|\rho=\alpha).
\end{align*}
Again, by the symmetry of the $N_4$-model, we obtain
\begin{align}
P_{\beta}(X) = P_{\gamma}(X) = P_{\delta}(X). \label{x1}
\end{align}
Biologically this means that under the assumption that $\alpha$ is the true root state, the probability that $\alpha$ evolves to a character for which the Fitch algorithm assigns $\{\beta\}$ to the root is the same as for $\{\gamma\}$ and $\{\delta\}$, since each specific change occurs with probability $p$. \\
This brings us to our next result, where $P_{\alpha}(X), P_{\beta}(X), P_{\alpha\beta}(X), P_{\beta\gamma}(X), P_{\alpha\beta\gamma}(X)$ and $P_{\beta\gamma\delta}(X)$ are linked to each other.
\begin{lemma} \label{lemmageq}
For any rooted binary phylogenetic tree and the $N_4$-model we have that
\begin{align*}
&P_{\alpha}(X) \geq P_{\beta}(X), \\
&P_{\alpha\beta}(X) \geq P_{\beta\gamma}(X),\\
&P_{\alpha\beta\gamma}(X) \geq P_{\beta\gamma\delta}(X).
\end{align*}
\end{lemma}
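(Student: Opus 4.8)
The plan is to prove the three inequalities simultaneously by induction on the number of leaves, exploiting the recursive behaviour of the Fitch algorithm under the standard decomposition $T=T_1\cup T_2$. I would phrase the inductive invariant in state-conditioned form: for every rooted binary subtree and every true state $s$ at its root, the law of the reconstructed set satisfies the three ``contains-$s$'' inequalities, i.e. for sizes $1,2,3$ the probability of any set containing $s$ is at least that of the same-size set not containing $s$. By the symmetry of the $N_4$-model this law is invariant under every permutation of $\AL$ fixing $s$, so it is determined by only seven numbers (one per size/membership type), and the three inequalities are exactly the comparisons between the with-$s$ and without-$s$ values at sizes $1,2,3$. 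The base case is a single leaf, where $\MP(f,T)$ is the deterministic singleton $\{s\}$ and the inequalities read $1\ge 0$, $0\ge 0$, $0\ge 0$.

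For the inductive step, write $S_i$ for the set the Fitch algorithm assigns at the child $y_i$, so that $\MP(f,T)=S_1*S_2$. I would first pass from the subtrees to the children by a substitution step: conditioning on $\rho=\alpha$, the state at $y_i$ is $\alpha$ with probability $1-3p_i$ and each other state with probability $p_i$, and mixing the (inductively controlled) subtree law over this yields the effective law $\mu_i$ of $S_i$. A short computation shows that each relevant difference for $\mu_i$ equals $P_i=1-4p_i\ge 0$ times the corresponding difference for $T_i$; since the latter are nonnegative by the inductive hypothesis, $\mu_i$ again satisfies all three contains-$\alpha$ inequalities. The laws $\mu_1,\mu_2$ remain invariant under permutations fixing $\alpha$, and $S_1,S_2$ are independent given $\rho=\alpha$.

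The heart of the argument is the Fitch combination step, which I would handle by an involution rather than by brute-force enumeration. To compare a with-$\alpha$ target $R^{+}$ with the corresponding without-$\alpha$ target $R^{-}$ of the same size (e.g. $R^{+}=\{\alpha\}$, $R^{-}=\{\beta\}$; or $\{\alpha,\beta\}$ versus $\{\beta,\gamma\}$; or $\{\alpha,\beta,\gamma\}$ versus $\{\beta,\gamma,\delta\}$), let $x$ be the unique element of $R^{-}\setminus R^{+}$ and let $\sigma$ be the transposition of $\AL$ swapping $\alpha$ and $x$, so that $\sigma(R^{-})=R^{+}$. Because $*$ is equivariant under state permutations, $(Q_1,Q_2)\mapsto(\sigma Q_1,\sigma Q_2)$ is a bijection from $\{Q_1*Q_2=R^{-}\}$ onto $\{Q_1*Q_2=R^{+}\}$, so that $\PP(\MP(f,T)=R^{+}\mid\rho=\alpha)-\PP(\MP(f,T)=R^{-}\mid\rho=\alpha)$ equals $\sum_{Q_1*Q_2=R^{-}}\bigl(\mu_1(\sigma Q_1)\mu_2(\sigma Q_2)-\mu_1(Q_1)\mu_2(Q_2)\bigr)$. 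I would then show every summand is nonnegative using two structural facts: since $\alpha\notin R^{-}=Q_1*Q_2$, the state $\alpha$ lies in at most one of $Q_1,Q_2$; and whenever $\alpha\in Q_i$ the intersection $Q_1\cap Q_2$ is nonempty and equals $R^{-}$, forcing $x\in Q_i$ as well. Hence any factor with $\alpha\in Q_i$ contains both $\alpha$ and $x$ and is left unchanged by $\sigma$, while any factor with $\alpha\notin Q_i$ is sent by $\sigma$ either to a set of the same membership type or from an $\alpha$-free set to an $\alpha$-containing set of the same size, so $\mu_i(\sigma Q_i)\ge\mu_i(Q_i)$ by the contains-$\alpha$ inequality at that size; as all probabilities are nonnegative, each product weakly increases.

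I expect the main obstacle to be exactly this combination step: getting the bookkeeping right so that the chosen involution is the correct one for each target pair and that the two structural facts hold for every contributing $(Q_1,Q_2)$, which is what makes the comparison work term by term and lets all three inequalities be proved by a single mechanism. The substitution identity and the base case are routine once the invariant is set up, and the invariance under permutations fixing $\alpha$ is what both reduces the bookkeeping to seven parameters and supplies the size-wise contains-$\alpha$ comparisons that the involution consumes.
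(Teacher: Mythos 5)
Your proposal is correct, and while it shares the paper's overall skeleton --- induction on the number of leaves via the standard decomposition, together with the substitution identity stating that each ``with-$\alpha$ minus without-$\alpha$'' difference at the child $y_i$ equals $P_i=1-4p_i$ times the corresponding difference for the subtree $T_i$ (the paper's \eqref{4(a)(b)}, \eqref{4(ab)(bc)}, \eqref{4(abc)(bcd)}) --- it handles the decisive combination step by a genuinely different mechanism. The paper writes out the full Fitch recursions for $P_{\alpha}(X),P_{\beta}(X),P_{\alpha\beta}(X),P_{\beta\gamma}(X),P_{\alpha\beta\gamma}(X),P_{\beta\gamma\delta}(X)$ as sums of products $P_{(A)}(Y_1)P_{(B)}(Y_2)$ and then regroups the differences term by term into manifestly non-negative combinations, using a product identity (their \eqref{4(a)(a)(b)(b)}) for the mixed terms; this occupies several pages and must be redone separately for each of the three inequalities. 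You instead exploit the equivariance of Fitch's operation $*$ under alphabet permutations: the transposition $\sigma$ swapping $\alpha$ with the unique element of $R^{-}\setminus R^{+}$ gives a bijection between the fibres $\{Q_1*Q_2=R^{-}\}$ and $\{Q_1*Q_2=R^{+}\}$, and your two structural observations (that $\alpha$ lies in at most one $Q_i$, and that any $Q_i$ containing $\alpha$ must also contain $x$ because then $R^{-}=Q_1\cap Q_2$) do check out in every case, so each summand of the telescoped difference weakly increases factorwise. This buys uniformity --- one argument covers all three inequalities at once and never requires enumerating the fifteen subsets --- and it visibly generalizes to the $N_r$-model for arbitrary $r$, whereas the paper's computation is tied to $r=4$ (and is repeated separately for $r=3$). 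The only cosmetic difference is your base case at a single leaf rather than at $n=2$, which is harmless since the deterministic singleton law trivially satisfies the invariant and the recursion bottoms out there anyway.
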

\noindent
Note that Lemma \ref{lemmageq} does not require the underlying tree to be ultrametric. \\
The proof of Lemma \ref{lemmageq} is by induction on $n$ and is presented in the appendix. For this proof and also for the proof of Theorem \ref{theoremRA} we state some recursions required for the induction. Therefore, we define $f_{Y_i}$ as a restriction of $f$ to $Y_i \subseteq X$ for $i \in \{1,2\}$: $f_{Y_i} \coloneqq f|_{Y_i}$. For $i \in \{1,2\}$ the probability $P_{(A)}(Y_i)$ to obtain a set $A \in \{ \{\alpha\}, \{\beta\}, \{\alpha,\beta\}, \{\beta,\gamma\}, \{\alpha,\beta,\gamma\}, \{\beta,\gamma,\delta\}, \{\alpha,\beta,\gamma,\delta\} \}$ as estimate state for $y_i$ with the Fitch algorithm under the assumption that $\rho$ is in state $\alpha$ can be defined using the law of total probability:
\begin{align*}
P_{(A)}(Y_i) &\coloneqq \PP(\MP(f_{Y_i},T_i)=A) \\
&= (1-3p_i) \PP(\MP(f_{Y_i},T_i)=A | y_i= \alpha) + p_i \PP(\MP(f_{Y_i},T_i)=A | y_i= \beta) \\
&+ p_i \PP(\MP(f_{Y_i},T_i)=A | y_i= \gamma) + p_i \PP(\MP(f_{Y_i},T_i)=A | y_i= \delta).
\end{align*}
Then with \eqref{x2},\eqref{x3},\eqref{x1} we have:
\begin{align}
&P_{(\alpha)}(Y_i) =(1-3p_i) P_{\alpha}(Y_i) + 3p_i P_{\beta}(Y_i), \label{w1} \\
&P_{(\beta)}(Y_i) = (1-p_i) P_{\beta}(Y_i) + p_i P_{\alpha}(Y_i) = P_{(\gamma)}(Y_i)= P_{(\delta)}(Y_i), \label{w2} \\
&P_{(\alpha\beta)}(Y_i) = (1-2p_i) P_{\alpha\beta}(Y_i) + 2p_i P_{\beta\gamma}(Y_i)=P_{(\alpha\gamma)}(Y_i)=P_{(\alpha\delta)}(Y_i), \label{w3} \\
&P_{(\beta\gamma)}(Y_i) = (1-2p_i) P_{\beta\gamma}(Y_i) + 2p_i P_{\alpha\beta}(Y_i)=P_{(\beta\delta)}(Y_i)=P_{(\gamma\delta)}(Y_i), \label{w4} \\
&P_{(\alpha\beta\gamma)}(Y_i) =(1-p_i) P_{\alpha\beta\gamma}(Y_i) + p_i P_{\beta\gamma\delta}(Y_i)=P_{(\alpha\beta\delta)}(Y_i)=P_{(\alpha\gamma\delta)}(Y_i), \label{w5} \\
&P_{(\beta\gamma\delta)}(Y_i) =(1-3p_i) P_{\beta\gamma\delta}(Y_i) + 3p_i P_{\alpha\beta\gamma}(Y_i), \label{w6} \\
&P_{(\alpha\beta\gamma\delta)}(Y_i) = P_{\alpha\beta\gamma\delta}(Y_i). \label{w7} 
\end{align}
With \eqref{w1}, \eqref{w2}, \eqref{w3}, \eqref{w4}, \eqref{w5}, \eqref{w6} and \eqref{w7} we therefore have
\begin{align}
P_{\alpha}(X) = &P_{(\alpha)}(Y_1) P_{(\alpha)}(Y_2) +3P_{(\alpha)}(Y_1) P_{(\alpha\beta)}(Y_2) + 3P_{(\alpha\beta)}(Y_1) P_{(\alpha)}(Y_2) \nonumber \\
&+3P_{(\alpha)}(Y_1) P_{(\alpha\beta\gamma)}(Y_2) + 3P_{(\alpha\beta\gamma)}(Y_1) P_{(\alpha)}(Y_2) \nonumber \\
&+6P_{(\alpha\beta)}(Y_1) P_{(\alpha\beta)}(Y_2) +3P_{(\alpha\beta)}(Y_1) P_{(\alpha\beta\gamma)}(Y_2) + 3P_{(\alpha\beta\gamma)}(Y_1) P_{(\alpha\beta)}(Y_2) \nonumber \\
&+P_{(\alpha)}(Y_1) P_{(\alpha\beta\gamma\delta)}(Y_2) + P_{(\alpha\beta\gamma\delta)}(Y_1) P_{(\alpha)}(Y_2), \label{alpha} \\
P_{\alpha\beta}(X) = &P_{(\alpha)}(Y_1) P_{(\beta)}(Y_2) + P_{(\beta)}(Y_1) P_{(\alpha)}(Y_2)+P_{(\alpha\beta)}(Y_1) P_{(\alpha\beta)}(Y_2) \nonumber \\
&+2P_{(\alpha\beta)}(Y_1) P_{(\alpha\beta\gamma)}(Y_2) + 2P_{(\alpha\beta\gamma)}(Y_1) P_{(\alpha\beta)}(Y_2) + P_{(\alpha\beta)}(Y_1) P_{(\alpha\beta\gamma\delta)}(Y_2) \nonumber \\
&+ P_{(\alpha\beta\gamma\delta)}(Y_1) P_{(\alpha\beta)}(Y_2) +2P_{(\alpha\beta\gamma)}(Y_1) P_{(\alpha\beta\gamma)}(Y_2), \label{alphabeta} \\
P_{\alpha\beta\gamma}(X) = &P_{(\alpha)}(Y_1) P_{(\beta\gamma)}(Y_2) + P_{(\beta\gamma)}(Y_1) P_{(\alpha)}(Y_2)+ 2P_{(\beta)}(Y_1) P_{(\alpha\beta)}(Y_2) \nonumber \\
&+ 2P_{(\alpha\beta)}(Y_1) P_{(\beta)}(Y_2)+P_{(\alpha\beta\gamma)}(Y_1) P_{(\alpha\beta\gamma)}(Y_2) + P_{(\alpha\beta\gamma)}(Y_1) P_{(\alpha\beta\gamma\delta)}(Y_2) \nonumber \\
&+ P_{(\alpha\beta\gamma\delta)}(Y_1) P_{(\alpha\beta\gamma)}(Y_2) , \label{alphabetagamma} \\
P_{\alpha\beta\gamma\delta}(X) = &P_{(\alpha)}(Y_1) P_{(\beta\gamma\delta)}(Y_2) + P_{(\beta\gamma\delta)}(Y_1) P_{(\alpha)}(Y_2)+ 3P_{(\beta)}(Y_1) P_{(\alpha\beta\gamma)}(Y_2) \nonumber \\
&+ 3P_{(\alpha\beta\gamma)}(Y_1) P_{(\beta)}(Y_2)+ 3P_{(\alpha\beta)}(Y_1) P_{(\beta\gamma)}(Y_2) +3 P_{(\beta\gamma)}(Y_1) P_{(\alpha\beta)}(Y_2) \nonumber \\
&+P_{(\alpha\beta\gamma\delta)}(Y_1) P_{(\alpha\beta\gamma\delta)}(Y_2). \label{alphabetagammadelta}
\end{align}
\noindent
As stated before, all these recursions are needed for the proof of Lemma \ref{lemmageq} and Theorem \ref{theoremRA}. Now, we are in the position to prove Theorem \ref{theoremRA}, our main result, which states a lower bound on $RA(X)$. 
\begin{proof}
The proof is by induction on $n$. In order to show $RA(X) \geq 1-3p$, we define $D(X) \coloneqq RA(X) - (1-3p)$, and show that $D(X)$ is non-negative. 
\\ 
For $n=2$ the subtrees $Y_1$ and $Y_2$ both contain one leaf, and thus
\begin{align*}
D(X)&=P_{\alpha}(X)+\frac{3}{2}  P_{\alpha\beta}(X) + P_{\alpha\beta\gamma}(X) + \frac{1}{4} P_{\alpha\beta\gamma\delta}(X) -1+3p \qquad \text{by } \eqref{RA4} \\
&=P_{\alpha}(X)+\frac{3}{2}  P_{\alpha\beta}(X) -1+3p \\
&\quad \text{since } P_{\alpha\beta\gamma}(X)=P_{\alpha\beta\gamma\delta}(X)=0 \text{ for } n=2 \\
&= (1-3p)^2 + \frac{3}{2}~ 2(1-3p)p -1+3p \\
&=1-6p+9p^2+3p-9p^2-1+3p \\
&=0
.
\end{align*} This shows that $D(X)=RA(X)-(1-3p)=0$ is non-negative and thus $RA(X)=1-3p$, which completes the base case of the induction.
\\
Now, we show by induction that $D(X)$ is non-negative. Suppose that $T$ has $n$ taxa and that $D(X)$ is non-negative for all trees having fewer than $n$ taxa. We define $D_i \coloneqq D(Y_i)=RA(Y_i)-(1-3p_i^{'})$ for $i \in \{1,2\}$. Thus, $D_1$ and $D_2$ are non-negative since $Y_1$ and $Y_2$ contain both fewer than $n$ taxa.
\\
By elementary term conversion we can show that
\begin{align}
8D(X) = &\Bigl( 4P_{(\alpha)}(Y_1) + 10P_{(\alpha\beta)}(Y_1) + 6P_{(\alpha\beta\gamma)}(Y_1) \Bigr) \Bigl( P_{\alpha\beta}(Y_2)-P_{\beta\gamma}(Y_2) \Bigr) P_2 \nonumber \\
&+ \Bigl( 4P_{(\alpha)}(Y_2) + 10P_{(\alpha\beta)}(Y_2) + 6P_{(\alpha\beta\gamma)}(Y_2) \Bigr) \Bigl( P_{\alpha\beta}(Y_1)-P_{\beta\gamma}(Y_1) \Bigr) P_1 \nonumber \\ 
&+ \Bigl( 2P_{(\alpha)}(Y_1) + \frac{16}{3} P_{(\alpha\beta)}(Y_1) + 2P_{(\alpha\beta\gamma)}(Y_1) \Bigr) \Bigl( P_{\alpha\beta\gamma}(Y_2)-P_{\beta\gamma\delta}(Y_2) \Bigr) P_2 \nonumber \\
&+ \Bigl( 2P_{(\alpha)}(Y_2) + \frac{16}{3} P_{(\alpha\beta)}(Y_2) + 2P_{(\alpha\beta\gamma)}(Y_2) \Bigr) \Bigl( P_{\alpha\beta\gamma}(Y_1)-P_{\beta\gamma\delta}(Y_1) \Bigr) P_1 \nonumber \\
&+ 4P_1 D_1 + 4 P_2 D_2 \nonumber \\
&+ \Bigl( \frac{2}{3} P_{(\alpha\beta)}(Y_1) + 2 P_{(\alpha\beta\gamma)}(Y_1) + P_{(\alpha\beta\gamma\delta)}(Y_1) \Bigr) \Bigl(3P + 4 P_2 D_2 \Bigr) \nonumber \\
&+ \Bigl( \frac{2}{3} P_{(\alpha\beta)}(Y_2) + 2 P_{(\alpha\beta\gamma)}(Y_2) + P_{(\alpha\beta\gamma\delta)}(Y_2) \Bigr) \Bigl(3P + 4 P_1 D_1 \Bigr) \label{8D(X)}.
\end{align} The exact conversions can be found in the appendix. \\
Moreover, note that $P_i,P_{(\alpha)}(Y_i),P_{(\alpha\beta)}(Y_i),P_{(\alpha\beta\gamma)}(Y_i),P_{(\alpha\beta\gamma\delta)}(Y_i)$ are all probabilities and therefore are all non-negative for $i \in \{1,2\}$. By Lemma \ref{lemmageq} we have that (for $i \in \{1,2\}$) $P_{\alpha\beta}(Y_i)-P_{\beta\gamma}(Y_i)$ and $P_{\alpha\beta\gamma}(Y_i)-P_{\beta\gamma\delta}(Y_i)$ are non-negative, resulting in \eqref{8D(X)} being non-negative. This implies $D(X) \geq 0$ and thus $RA(X) \geq 1-3p$. This completes the proof. 
\end{proof}
\noindent 
We have shown that the reconstruction accuracy using all terminal taxa is always greater or equal than the conservation probability of one single taxon. Moreover, the base case of the proof of Theorem \ref{theoremRA} provides more insight into the reconstruction accuracy of using 2-taxon trees under the $N_4$-model.
\begin{corollary} \label{twoleaves}
Let $T$ be a rooted binary ultrametric phylogenetic tree on taxon set $X$ with $|X|=2$. Let $p$ denote the probability of change from the root to any leaf under the $N_4$-model. Then,
the reconstruction accuracy for ancestral state reconstruction using the Fitch algorithm is given by
$$RA(X)=1-3p.$$
\end{corollary}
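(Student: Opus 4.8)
The plan is to recognize that this corollary is exactly the base case already carried out inside the proof of Theorem \ref{theoremRA}, and to present it as a short self-contained computation. First I would note that for a rooted binary tree on $|X|=2$ taxa the root $\rho$ has precisely the two leaves as its children, so the Fitch algorithm simply assigns $f(1) * f(2)$ to $\rho$. The key structural observation is that with only two leaves the reconstructed set $\MP(f,T)$ can never have cardinality three or four: the operation $*$ applied to two singletons yields either their common value (a singleton, when the leaves agree) or their two-element union (when the leaves differ). Consequently $P_{\alpha\beta\gamma}(X) = P_{\alpha\beta\gamma\delta}(X) = 0$, so the simplified expression \eqref{RA4} collapses to $RA(X) = P_{\alpha}(X) + \frac{3}{2} P_{\alpha\beta}(X)$.

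Next I would evaluate the two surviving terms directly from the model. Conditioning on $\rho = \alpha$ and using ultrametricity, so that each leaf undergoes one specific change with the same probability $p$ and remains in state $\alpha$ with probability $1-3p$, the event $\MP(f,T) = \{\alpha\}$ requires both leaves to be in state $\alpha$, giving $P_{\alpha}(X) = (1-3p)^2$. The event $\MP(f,T) = \{\alpha,\beta\}$ requires one leaf in state $\alpha$ and the other in state $\beta$, in either of the two orders, giving $P_{\alpha\beta}(X) = 2p(1-3p)$. Substituting these into the collapsed formula and factoring yields $(1-3p)^2 + 3p(1-3p) = (1-3p)\bigl[(1-3p)+3p\bigr] = 1-3p$, which is the claimed value.

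Since every step is an elementary finite computation, there is no genuine obstacle; the only point requiring a moment's care is the vanishing of $P_{\alpha\beta\gamma}(X)$ and $P_{\alpha\beta\gamma\delta}(X)$, which follows at once from the fact that two singletons cannot produce a three- or four-element Fitch set. Indeed, this argument reproduces the $n=2$ base case established within the proof of Theorem \ref{theoremRA}, so the corollary may equally be stated as an immediate byproduct of that computation.
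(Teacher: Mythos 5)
Your proposal is correct and follows exactly the paper's route: the paper derives this corollary as an immediate byproduct of the $n=2$ base case in the proof of Theorem \ref{theoremRA}, where $P_{\alpha\beta\gamma}(X)=P_{\alpha\beta\gamma\delta}(X)=0$, $P_{\alpha}(X)=(1-3p)^2$ and $P_{\alpha\beta}(X)=2p(1-3p)$ are substituted into \eqref{RA4} to give $RA(X)=1-3p$. Your factored simplification $(1-3p)\bigl[(1-3p)+3p\bigr]$ is just a tidier arithmetic presentation of the same computation.
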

\noindent
Corollary \ref{twoleaves} states the reconstruction accuracy for ancestral state reconstruction with the Fitch algorithm using ultrametric 2-taxon trees, which is the same probability when using one terminal taxon. In the following proposition we show that the reconstruction accuracy with the Fitch algorithm using any two terminal taxa of a taxa set $X$ is also $1-3p$.   
\begin{proposition} \label{twoleavescor}
For any rooted binary phylogenetic ultrametric tree and the $N_4$-model, the reconstruction accuracy for the Fitch algorithm using any two terminal taxa $x_1,x_2 \in X$ for ancestral state reconstruction is given by
$$RA(\{x_1,x_2\}) = 1-3p.$$
\end{proposition}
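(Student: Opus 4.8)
The plan is to exploit the probabilistic meaning of the reconstruction accuracy together with the fact that, on two taxa, the Fitch set at the root is completely determined by the two leaf states. First I would observe that definition \eqref{RA} admits the following reading: conditioned on $\rho=\alpha$, draw a state uniformly at random from $\MP(f,T)$; then $RA$ is exactly the probability that this draw equals $\alpha$, the weight $\frac{1}{|\mathcal{R}|}$ being precisely the chance of selecting $\alpha$ from a reconstructed set $\mathcal{R}$ of size $|\mathcal{R}|$.

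Next I would reduce the two-taxon Fitch computation to the parsimony operation on the two leaf states. Let $m$ denote the most recent common ancestor of $x_1$ and $x_2$. Running the Fitch algorithm on the minimal subtree spanning $\rho$, $x_1$ and $x_2$ assigns $\{f(x_1)\} \ast \{f(x_2)\}$ to $m$, and since the path from $m$ up to $\rho$ contains no further branching vertex or leaf, the set at $\rho$ is unchanged; hence $\MP$ on $\{x_1,x_2\}$ equals $\{f(x_1)\} \ast \{f(x_2)\}$. The key step is then to note that sampling a state uniformly from $\{f(x_1)\} \ast \{f(x_2)\}$ has the same distribution as sampling $f(x_i)$ for a uniformly chosen $i \in \{1,2\}$: if $f(x_1)=f(x_2)$ both schemes return that common state, and if $f(x_1)\neq f(x_2)$ the Fitch set is the two-element set $\{f(x_1),f(x_2)\}$, so each scheme returns each of the two states with probability $\frac{1}{2}$. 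Combining this with the first observation yields $RA(\{x_1,x_2\}) = \frac{1}{2}\,\PP(f(x_1)=\alpha \mid \rho=\alpha) + \frac{1}{2}\,\PP(f(x_2)=\alpha \mid \rho=\alpha)$. By ultrametricity each leaf has root-to-leaf change probability $p$, so each conditional marginal equals $1-3p$, and the two terms sum to $1-3p$.

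The main obstacle I anticipate is resisting the temptation to deduce the claim directly from Corollary \ref{twoleaves}. That corollary governs a genuine two-leaf tree whose leaves evolve independently from the root, whereas here $x_1$ and $x_2$ share the ancestral edge from $\rho$ down to $m$, making their states correlated; indeed the joint law of $(f(x_1),f(x_2))$ differs from the independent case, as it depends on how $p$ splits into the root-to-$m$ and $m$-to-leaf probabilities via the composition rule \eqref{defp}. What saves the argument is that the sampling-equivalence identity makes $RA$ depend only on the two one-dimensional marginals, which are insensitive to this correlation. As a fallback, should a fully computational proof be preferred, one can instead condition on the state at $m$, evaluate $\PP(\MP=\{\alpha\}\mid\rho=\alpha)$ and $\PP(\MP=\{\alpha,\beta\}\mid\rho=\alpha)$ explicitly, and simplify the resulting expression using $p=s+p'-4sp'$ (the analogue of \eqref{defp} for the root-to-$m$ probability $s$ and the $m$-to-leaf probability $p'$); the cross terms cancel and the total collapses to $1-3p$.
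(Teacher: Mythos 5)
Your proof is correct, but it takes a genuinely different route from the paper's. The paper splits into two cases according to whether $x_1$ and $x_2$ lie in different maximal pending subtrees (where Corollary~\ref{twoleaves} applies directly, since the two leaf states are conditionally independent given $\rho$) or in the same one (where it conditions on the state of the last common ancestor $y$, computes $P_{\alpha}(\{x_1,x_2\})$ and $P_{\alpha\beta}(\{x_1,x_2\})$ explicitly as polynomials in $\overline{p}$ and $\widehat{p}$, and collapses the result via the composition rule $p=\overline{p}+\widehat{p}-4\overline{p}\widehat{p}$). You instead observe that $RA$ equals the probability that a uniform draw from $\MP(f,T)$ returns $\alpha$, and that for two taxa this draw from $\{f(x_1)\}*\{f(x_2)\}$ is distributed exactly like $f(x_I)$ for a uniformly chosen index $I$ --- checking the two cases $f(x_1)=f(x_2)$ and $f(x_1)\neq f(x_2)$ --- so that $RA(\{x_1,x_2\})$ depends only on the two marginal conservation probabilities, each of which is $1-3p$ by ultrametricity. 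This eliminates the case distinction and all explicit computation; moreover, your argument is visibly model- and state-number-independent (it yields the $N_3$ analogue, and indeed the $N_r$ analogue, verbatim, with $1-3p$ replaced by the single-leaf conservation probability), whereas the paper's computation must be redone for each $r$. What the paper's computational route buys in exchange is the explicit intermediate formulas \eqref{2alpha} and \eqref{2alphabeta} for $P_{\alpha}$ and $P_{\alpha\beta}$ on two taxa, which are of some independent interest but are not needed for the statement itself. Your side remark about why Corollary~\ref{twoleaves} cannot be invoked wholesale (the shared edge from $\rho$ to $m$ correlates the leaf states) is exactly the right caution, and your observation that the marginals are insensitive to this correlation is the point that makes the short proof work.
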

\begin{proof}
Let $x_1, x_2 \in X$ be two terminal taxa of any rooted binary ultrametric phylogenetic tree $T$. Moreover, we consider the standard decomposition of $T$ into its two maximal pending subtrees $T_1$ and $T_2$ as depicted in Figure \ref{tree}. Thus, the proof is divided  into two cases. \\
In the first case we have without loss of generality $x_1 \in Y_1$ and $x_2 \in Y_2$. By Corollary \ref{twoleaves} the reconstruction accuracy using $x_1$ and $x_2$ is then $RA(\{x_1,x_2\})=1-3p$. 
\\
In the second case we have either $x_1, x_2 \in Y_1$ or $x_1,x_2 \in Y_2$. Thus, without loss of generality we consider $x_1, x_2 \in Y_1$ as depicted in Figure \ref{tree2}. Let $y$ be the last common ancestor of $x_1$ and $x_2$, i.e. the first node that occurs both on the path from $x_1$ to $\rho$ as well as on the path from $x_2$ to $\rho$. Let $\widehat{T}$ be the subtree of $T_1$ that consists of the paths from $y$ to $x_1$ and $x_2$, respectively, as well as all vertices which lie on one of these paths. $\widehat{T}$ is depicted with dotted lines in Figure \ref{tree2}. Thus, the root of $\widehat{T}$ is $y$. In addition, let $\overline{p}$ be the probability for one specific change from $\rho$ to $y$, and let $\widehat{p}$ be the probability for one specific change from $y$ to $x_1$ or $x_2$.
\begin{figure}[H]
\centering
\includegraphics[scale=0.5]{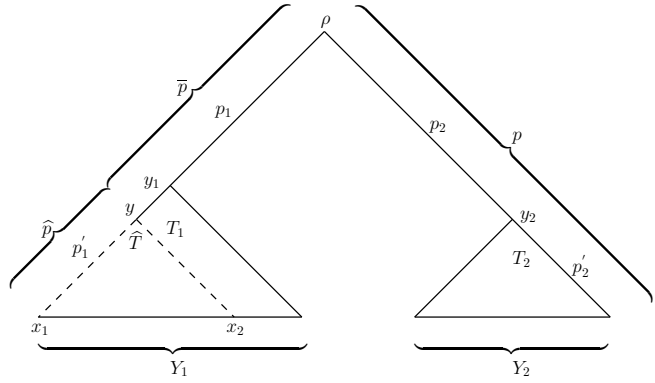}
\caption{Illustration of a rooted binary ultrametric phylogenetic tree and the standard decomposition into its two maximal pending subtrees $T_1$ and $T_2$ \citep{subset}. The subtree $\widehat{T}$ of $T_1$ is represented by the dotted lines, and the root of $\widehat{T}$ is denoted by $y$.}
\label{tree2}
\end{figure}
\noindent
By \eqref{RA4} we have 
\begin{align}
RA(\{x_1,x_2\})=P_{\alpha}(\{x_1,x_2\}) + \frac{3}{2} P_{\alpha\beta}(\{x_1,x_2\}). \label{ra2taxa}
\end{align}
Note that $P_{\alpha\beta\gamma}(\{x_1,x_2\})=P_{\alpha\beta\gamma\delta}(\{x_1,x_2\})=0$ since we cannot obtain sets with more than two elements with the Fitch algorithm when only $x_1$ and $x_2$ are used for the reconstruction.
\\
In the following, we use the notation $f|_{\{x_1,x_2\}}$ for the restriction of character $f$ on taxa $x_1$ and $x_2$. \\
Furthermore, we have
\begin{align}
P_{\alpha}(\{x_1,x_2\}) &= \PP(\MP(f|_{\{x_1,x_2\}},\widehat{T})=\{\alpha\} | \rho=\alpha) \nonumber \\
&= (1-3\overline{p})~ \PP(\MP(f|_{\{x_1,x_2\}},\widehat{T})=\{\alpha\} | y=\alpha,\rho=\alpha) \nonumber \\
&\quad + \overline{p}~ \PP(\MP(f|_{\{x_1,x_2\}},\widehat{T})=\{\alpha\} | y=\beta,\rho=\alpha) \nonumber \\
&\quad + \overline{p}~ \PP(\MP(f|_{\{x_1,x_2\}},\widehat{T})=\{\alpha\} | y=\gamma,\rho=\alpha) \nonumber \\
&\quad + \overline{p}~ \PP(\MP(f|_{\{x_1,x_2\}},\widehat{T})=\{\alpha\} | y=\delta,\rho=\alpha) \nonumber \\
&= (1-3\overline{p})~ \PP(\MP(f|_{\{x_1,x_2\}},\widehat{T})=\{\alpha\} | y=\alpha,\rho=\alpha) \nonumber \\
&\quad + 3\overline{p}~ \PP(\MP(f|_{\{x_1,x_2\}},\widehat{T})=\{\beta\} | y=\alpha,\rho=\alpha) \nonumber \\
&\qquad \text{by the symmetry of the } N_4 \text{-model} \nonumber \\
&= (1-3\overline{p}) (1-3\widehat{p})^2 + 3\overline{p} \widehat{p}^2 \label{2alpha}
\end{align}
Moreover,
\begin{align}
P_{\alpha\beta}(\{x_1,x_2\}) &= \PP(\MP(f|_{\{x_1,x_2\}},\widehat{T})=\{\alpha,\beta\} | \rho=\alpha) \nonumber \\
&= (1-3\overline{p})~ \PP(\MP(f|_{\{x_1,x_2\}},\widehat{T})=\{\alpha,\beta\} | y=\alpha,\rho=\alpha) \nonumber \\
&\quad + \overline{p}~ \PP(\MP(f|_{\{x_1,x_2\}},\widehat{T})=\{\alpha,\beta\} | y=\beta,\rho=\alpha) \nonumber \\
&\quad + \overline{p}~ \PP(\MP(f|_{\{x_1,x_2\}},\widehat{T})=\{\alpha,\beta\} | y=\gamma,\rho=\alpha) \nonumber \\
&\quad + \overline{p}~ \PP(\MP(f|_{\{x_1,x_2\}},\widehat{T})=\{\alpha,\beta\} | y=\delta,\rho=\alpha) \nonumber \\
&= (1-2\overline{p})~ \PP(\MP(f|_{\{x_1,x_2\}},\widehat{T})=\{\alpha,\beta\} | y=\alpha,\rho=\alpha) \nonumber \\
&\quad + 2\overline{p}~ \PP(\MP(f|_{\{x_1,x_2\}},\widehat{T})=\{\beta,\gamma\} | y=\alpha,\rho=\alpha) \nonumber \\
&\qquad \text{by the symmetry of the } N_4 \text{-model} \nonumber \\
&= (1-2\overline{p}) ~2~ (1-3\widehat{p})~ \widehat{p} + 2~\overline{p} ~2~ \widehat{p}^2 \label{2alphabeta}
\end{align}
Thus by \ref{2alpha} and \eqref{2alphabeta}, \eqref{ra2taxa} becomes
\begin{align*}
RA(\{x_1,x_2\})&= (1-3\overline{p}) (1-3\widehat{p})^2 + 3\overline{p} \widehat{p}^2 + \frac{3}{2} \Bigl( (1-2\overline{p}) ~2~ (1-3\widehat{p})~ \widehat{p} + 2~\overline{p} ~2~ \widehat{p}^2 \Bigr) \\
&=1- 3~\overline{p} -3~\widehat{p} + 12~\overline{p}~\widehat{p} \\
&=1-3p \\
&\quad \text{since similar to \eqref{defp} we have that } p=\overline{p}+\widehat{p}-4~\overline{p} \widehat{p}.  
\end{align*}
Therefore, in both cases $RA(\{x_1,x_2\})=1-3p$ which completes the proof. 
\end{proof}
\noindent
This proposition provides us the reconstruction accuracy for the Fitch algorithm when any two terminal taxa are considered. Note that this reconstruction accuracy is the same as when only one terminal taxon is taken into account. Therefore, by Theorem \ref{theoremRA} and Proposition \ref{twoleavescor} we have the following corollary, which states that the lower bound on the reconstruction accuracy holds for any two terminal taxa. In particular, considering two taxa rather than one cannot improve the lower bound given by Theorem \ref{theoremRA}. 
\begin{corollary}\label{cor2}
For any rooted binary phylogenetic ultrametric tree and the $N_4$-model, the Fitch algorithm using all terminal taxa is more accurate, or at least as accurate, for ancestral state reconstruction than using any two terminal taxa, that is
$$RA(X) \geq 1-3p.$$
\end{corollary}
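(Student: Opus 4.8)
The plan is to obtain this corollary as an immediate consequence of the two results already established in this section, namely Theorem \ref{theoremRA} and Proposition \ref{twoleavescor}. No new induction or term manipulation is required, since all the genuine work has already been carried out in the proofs of those two statements.

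First I would recall that Theorem \ref{theoremRA} establishes the lower bound $RA(X) \geq 1-3p$ for the reconstruction accuracy obtained when all terminal taxa of the tree $T$ are used. Next I would invoke Proposition \ref{twoleavescor}, which shows that for any choice of two terminal taxa $x_1, x_2 \in X$ the reconstruction accuracy obtained using only these two taxa equals $1-3p$ exactly, that is $RA(\{x_1,x_2\}) = 1-3p$, independently of which pair is selected.

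Chaining the two statements then yields
$$RA(X) \geq 1-3p = RA(\{x_1,x_2\})$$
for every pair $x_1, x_2 \in X$, which is precisely the assertion that the Fitch algorithm using all taxa is at least as accurate as using any two taxa. The inequality in the displayed conclusion of the corollary is literally the one supplied by Theorem \ref{theoremRA}; the additional content of the corollary is only the interpretation of the right-hand side as the two-taxon accuracy, which Proposition \ref{twoleavescor} provides.

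Since both ingredients are already in hand, there is no genuine obstacle to overcome. The only point worth checking is that the quantity $1-3p$ in Theorem \ref{theoremRA}, which there denotes the conservation probability from the root to a single leaf under the molecular clock, coincides with the value of $RA(\{x_1,x_2\})$ in Proposition \ref{twoleavescor}. This holds because both are expressed through the same root-to-leaf change probability $p$, which is forced to be common to all leaves by the ultrametric (clocklike) assumption on $T$.
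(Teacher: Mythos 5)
Your proposal is correct and matches the paper exactly: the corollary is presented there as an immediate consequence of Theorem \ref{theoremRA} combined with Proposition \ref{twoleavescor}, with no further argument needed. Your additional remark that the ultrametric assumption makes the single-leaf conservation probability $1-3p$ coincide with the two-taxon accuracy is the same observation the paper makes in the surrounding discussion.
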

\noindent
This statement completes Section \ref{4states}, and we now have a look on similar results obtained for the $N_3$-model.
\\ \\
\section{The accuracy of ancestral sequence reconstruction with 3-state characters}
\noindent 
Under the same assumptions as for the 4-state model, similar results can be obtained for the 3-state alphabet $\AL=\{\alpha,\beta,\gamma\}$. In this case, the reconstruction accuracy is given by
\begin{align}
RA(X)=&P_{\alpha}(X)+\frac{1}{2} \cdot (P_{\alpha\beta}(X)+P_{\alpha\gamma}(X))+ \frac{1}{3} \cdot P_{\alpha\beta\gamma}(X) \nonumber \\
=&P_{\alpha}(X)+P_{\alpha\beta}(X)+ \frac{1}{3} \cdot P_{\alpha\beta\gamma}(X). \label{ra3states}
\end{align} Then Theorem \ref{theoremRA3} and Lemma \ref{lemmageq3} can be formulated similarly to the statements before. Both proofs are left out, since they can be done analogously. However, we want to emphasize that the conjecture stated by Li et al. also holds for the $N_3$-model.
\begin{theorem} \label{theoremRA3}
For any rooted binary phylogenetic ultrametric tree and the $N_3$-model, the Fitch algorithm using all terminal taxa is more accurate, or at least as accurate, for ancestral state reconstruction than using any particular terminal taxon, that is
$$RA(X) \geq 1-2p.$$
\end{theorem}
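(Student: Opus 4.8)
The plan is to follow exactly the strategy used for the $N_4$-model, since the three-state case is structurally identical and only the combinatorial coefficients change. First I would establish the three-state analogue of Lemma \ref{lemmageq}, namely a Lemma \ref{lemmageq3} asserting that for any rooted binary phylogenetic tree under the $N_3$-model,
$$P_{\alpha}(X) \geq P_{\beta}(X) \qquad \text{and} \qquad P_{\alpha\beta}(X) \geq P_{\beta\gamma}(X).$$
Note that for three states there are only these two inequalities: the only two-element set avoiding the true state $\alpha$ is $\{\beta,\gamma\}$, and the only three-element set is the full alphabet, so the third inequality of Lemma \ref{lemmageq} has no analogue here. As in the appendix argument for Lemma \ref{lemmageq}, I would prove Lemma \ref{lemmageq3} by induction on $n$, again without requiring the tree to be ultrametric.

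Next I would record the recursions needed for the induction. Using the standard decomposition of $T$ into its maximal pending subtrees $T_1,T_2$ with children $y_1,y_2$ of $\rho$, writing $p_i$ for the substitution probability from $\rho$ to $y_i$ and $p_i'$ for one specific change from $y_i$ to any leaf, the defining relation now reads $p = p_i + p_i' - 3 p_i p_i'$, and I would set $P_i := 1-3p_i$ and $P := 1-3p$. Conditioning on the state at $y_i$ over its three possibilities and invoking the symmetry relations $P_{\beta}=P_{\gamma}$ and $P_{\alpha\beta}=P_{\alpha\gamma}$, the analogues of \eqref{w1}--\eqref{w7} become
$$P_{(\alpha)}(Y_i) = (1-2p_i)P_{\alpha}(Y_i) + 2p_i P_{\beta}(Y_i), \qquad P_{(\beta)}(Y_i) = (1-p_i)P_{\beta}(Y_i) + p_i P_{\alpha}(Y_i),$$
$$P_{(\alpha\beta)}(Y_i) = (1-p_i)P_{\alpha\beta}(Y_i) + p_i P_{\beta\gamma}(Y_i), \qquad P_{(\beta\gamma)}(Y_i) = (1-2p_i)P_{\beta\gamma}(Y_i) + 2p_i P_{\alpha\beta}(Y_i),$$
together with $P_{(\alpha\beta\gamma)}(Y_i) = P_{\alpha\beta\gamma}(Y_i)$. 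Combining the two subtrees through the Fitch operation $*$ then yields recursions for $P_{\alpha}(X)$, $P_{\alpha\beta}(X)$ and $P_{\alpha\beta\gamma}(X)$ analogous to \eqref{alpha}--\eqref{alphabetagamma}, obtained by enumerating which pairs of subtree state-sets combine to each target set.

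With these in hand, the proof of Theorem \ref{theoremRA3} proceeds by induction on $n$ with $D(X) := RA(X) - (1-2p)$. For the base case $n=2$ one has $P_{\alpha\beta\gamma}(X)=0$, so by \eqref{ra3states}, using $P_{\alpha}(X)=(1-2p)^2$ and $P_{\alpha\beta}(X)=2(1-2p)p$,
$$D(X) = (1-2p)^2 + 2(1-2p)p - (1-2p) = (1-2p)\bigl[(1-2p)+2p-1\bigr] = 0,$$
which also shows that $RA(X)=1-2p$ for two-leaf ultrametric trees under the $N_3$-model. For the inductive step I would, exactly as in the derivation of \eqref{8D(X)}, perform an elementary but lengthy term conversion to rewrite a suitable positive multiple of $D(X)$ as a sum of manifestly non-negative contributions: products of the non-negative quantities $P_{(\cdot)}(Y_i)$ with the difference $P_{\alpha\beta}(Y_i)-P_{\beta\gamma}(Y_i)$ (non-negative by Lemma \ref{lemmageq3}), together with factors $P_i D_i$ (non-negative since $P_i=1-3p_i\geq 0$ for $p_i\leq\frac13$ and $D_i := D(Y_i)\geq 0$ by the induction hypothesis), and terms of the form $P$ times a non-negative probability. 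Since every summand is non-negative, $D(X)\geq 0$ and hence $RA(X)\geq 1-2p$.

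The main obstacle will be finding the correct non-negative decomposition in the inductive step, i.e. the three-state analogue of \eqref{8D(X)}. The difficulty is not conceptual but one of \emph{bookkeeping}: one must guess the right multiplier (the four-state proof used the factor $8$) and the precise coefficients so that, after substituting the recursions above and the identity $p = p_i + p_i' - 3p_i p_i'$, all cross-terms cancel and only non-negative pieces remain. As in the four-state case this verification is routine once the decomposition is written down, so it can be relegated to an appendix; the conceptual content is carried entirely by Lemma \ref{lemmageq3} and the induction hypothesis $D_i \geq 0$.
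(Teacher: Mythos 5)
Your proposal follows exactly the route the paper takes: the paper proves Lemma \ref{lemmageq3} by the same induction, records the same recursions \eqref{w13}--\eqref{w53} with $P_i=1-3p_i$ and $p=p_i+p_i'-3p_ip_i'$, and explicitly omits the remaining term conversion as being analogous to the $N_4$ case, which is precisely the step you likewise identify as routine bookkeeping. Your base-case computation and the observation that only two inequalities survive in the three-state lemma are both correct, so the proposal matches the paper's argument.
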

\noindent
By Theorem \ref{theoremRA3}, a lower bound on $RA(X)$ for rooted binary ultrametric phylogenetic trees is also given for $\AL=\{\alpha,\beta,\gamma\}$. 
\\
Note that the analogs of Lemma \ref{lemmageq}, Corollary \ref{twoleaves}, Proposition \ref{twoleavescor} and Corollary \ref{cor2} also hold under the $N_3$-model. In particular, the reconstruction accuracy for ultrametric trees is then at least $1-2p$. The exact statements and their proofs can be found in the appendix.

\section{Conclusion and Discussion}
\noindent
In this paper we considered the reconstruction accuracy of the Fitch algorithm for ancestral state reconstruction. In particular, we analyzed rooted binary ultrametric phylogenetic trees under the $N_4$-model. For an ultrametric tree the probability of a change from the root to any leaf is the same. For such trees, we investigated a lower bound on the reconstruction accuracy by answering affirmatively the conjecture by Li, Steel and Zhang, which stated that for rooted binary ultrametric phylogenetic trees under the symmetric $N_r$-model the reconstruction accuracy using all terminal taxa is at least as high as the conservation probability of any leaf. In 2009, Fischer and Thatte had already shown that this conjecture holds for two-state characters, but it remained unknown whether this result could be extended to three or more character states. In particular, the biologically relevant case of $r=4$, which corresponds to the DNA- or RNA-alphabet, remained unclear.
\\
The main result of this manuscript is the proof of the conjecture for $r=4$, which provides a lower bound on the reconstruction accuracy. As mentioned before, the conjecture also holds for the $N_3$-model. In the past, several studies showed that in some cases, the Fitch algorithm provides better results when some data are disregarded \citep{moretaxa,subset}. This led to a critical view on Maximum Parsimony as a method for ancestral state reconstruction. But as we have shown here, at least for ultrametric trees, the extreme case of disregarding all data except for one or two leaves can never improve the reconstruction accuracy of the Fitch algorithm. In this sense, our results are good news for Maximum Parsimony as a method for ancestral state reconstruction. \\ 
To conclude, the generalization to the $N_r$-model for $r > 4$ is still open, but we conjecture that it also holds.

\section{Appendix}
\noindent
\textbf{Proof of Lemma \ref{lemmageq}}
To prove Lemma \ref{lemmageq} we show that for any rooted binary phylogenetic tree $T$ under a symmetric 4-state substitution model
\begin{align}
&P_{\alpha}(X) \geq P_{\beta}(X), \label{lemma1} \\
&P_{\alpha\beta}(X) \geq P_{\beta\gamma}(X), \label{lemma2} \\
&P_{\alpha\beta\gamma}(X) \geq P_{\beta\gamma\delta}(X)  \label{lemma3}
\end{align} by induction on $n$. For $n=2$ the subtrees $Y_1$ and $Y_2$ both contain one leaf, and hence $p=p_1=p_2$ leads to
\begin{align*}
&P_{\alpha}(X)= (1-3p)^2, \\
&P_{\beta}(X)= p^2, \\
&P_{\alpha\beta}(X)= 2(1-3p)p, \\
&P_{\beta\gamma}(X)= 2p^2, \\
&P_{\alpha\beta\gamma}(X)= 0, \\
&P_{\beta\gamma\delta}(X)= 0.
\end{align*} Therefore
\begin{align*}
P_{\alpha}(X) - P_{\beta}(X) &= (1-3p)^2 - p^2 = 1-6p+9p^2-p^2= 1-6p+8p^2 \\
&=\underbrace{(1-4p)}_{\geq 0} \underbrace{(1-2p)}_{\geq 0} \geq 0 \text{ as } p \leq \frac{1}{4}.
\end{align*}
Moreover 
\begin{align*}
P_{\alpha\beta}(X) - P_{\beta\gamma}(X) &= 2(1-3p)p - 2p^2 = 2p \underbrace{(1-4p)}_{\geq 0} \geq 0 \text{ as } p \leq \frac{1}{4},
\end{align*} and
\begin{align*}
P_{\alpha\beta\gamma}(X) - P_{\beta\gamma\delta}(X) = 0-0=0 \geq 0,
\end{align*} which completes the base case of the induction. For the inductive step we first state some more recursions using \eqref{w2}, \eqref{w3}, \eqref{w4}, \eqref{w5} and \eqref{w6}:
\begin{align}
P_{\beta}(X)= &P_{(\beta)}(Y_1) P_{(\beta)}(Y_2) + P_{(\beta)}(Y_1) P_{(\alpha\beta)}(Y_2) + P_{(\alpha\beta)}(Y_1) P_{(\beta)}(Y_2) \nonumber \\
&+2P_{(\beta)}(Y_1) P_{(\beta\gamma)}(Y_2) + 2P_{(\beta\gamma)}(Y_1) P_{(\beta)}(Y_2) + 2P_{(\alpha\beta)}(Y_1) P_{(\beta\gamma)}(Y_2) \nonumber \\
&+ 2P_{(\beta\gamma)}(Y_1) P_{(\alpha\beta)}(Y_2) + 2P_{(\beta\gamma)}(Y_1) P_{(\beta\gamma)}(Y_2) +2P_{(\beta)}(Y_1) P_{(\alpha\beta\gamma)}(Y_2) \nonumber \\
&+ 2P_{(\alpha\beta\gamma)}(Y_1) P_{(\beta)}(Y_2) + P_{(\beta)}(Y_1) P_{(\beta\gamma\delta)}(Y_2) + P_{(\beta\gamma\delta)}(Y_1) P_{(\beta)}(Y_2) \nonumber \\
&+ 2P_{(\beta\gamma)}(Y_1) P_{(\alpha\beta\gamma)}(Y_2) + 2P_{(\alpha\beta\gamma)}(Y_1) P_{(\beta\gamma)}(Y_2) + P_{(\alpha\beta)}(Y_1) P_{(\beta\gamma\delta)}(Y_2) \nonumber \\
&+ P_{(\beta\gamma\delta)}(Y_1) P_{(\alpha\beta)}(Y_2) + P_{(\beta)}(Y_1) P_{(\alpha\beta\gamma\delta)}(Y_2) + P_{(\alpha\beta\gamma\delta)}(Y_1) P_{(\beta)}(Y_2), \label{Beta}
\end{align}
\begin{align}
P_{\beta\gamma}(X) = &P_{(\beta)}(Y_1) P_{(\gamma)}(Y_2) + P_{(\gamma)}(Y_1) P_{(\beta)}(Y_2)+ P_{(\beta\gamma)}(Y_1) P_{(\beta\gamma)}(Y_2) \nonumber \\
&+P_{(\beta\gamma)}(Y_1) P_{(\alpha\beta\gamma)}(Y_2) + P_{(\alpha\beta\gamma)}(Y_1) P_{(\beta\gamma)}(Y_2) + P_{(\beta\gamma)}(Y_1) P_{(\beta\gamma\delta)}(Y_2)  \nonumber \\
&+ P_{(\beta\gamma\delta)}(Y_1) P_{(\beta\gamma)}(Y_2) + P_{(\alpha\beta\gamma)}(Y_1) P_{(\beta\gamma\delta)}(Y_2) + P_{(\beta\gamma\delta)}(Y_1) P_{(\alpha\beta\gamma)}(Y_2) \nonumber \\
&+P_{(\beta\gamma)}(Y_1) P_{(\alpha\beta\gamma\delta)}(Y_2) + P_{(\alpha\beta\gamma\delta)}(Y_1) P_{(\beta\gamma)}(Y_2), \label{betagamma}
\end{align}
\begin{align}
P_{\beta\gamma\delta}(X) = &3P_{(\beta)}(Y_1) P_{(\beta\gamma)}(Y_2) + 3P_{(\beta\gamma)}(Y_1) P_{(\beta)}(Y_2)+P_{(\beta\gamma\delta)}(Y_1) P_{(\beta\gamma\delta)}(Y_2) \nonumber \\
&+P_{(\beta\gamma\delta)}(Y_1) P_{(\alpha\beta\gamma\delta)}(Y_2) + P_{(\alpha\beta\gamma\delta)}(Y_1) P_{(\beta\gamma\delta)}(Y_2) \label{betagammadelta}
.
\end{align} Moreover we have that for $i \in \{1,2\}$
\begin{align}
P_{(\alpha)}(Y_i)-P_{(\beta)}(Y_i)
&= (1-3p_i) P_{\alpha}(Y_i) + 3p_i P_{\beta}(Y_i) - (1-p_i) P_{\beta}(Y_i) - p_i P_{\alpha}(Y_i) \nonumber \\
&\qquad \text{by } \eqref{w1}, \eqref{w2} \nonumber \\
&=(1-4p_i) P_{\alpha}(Y_i) - (1-4p_i) P_{\beta}(Y_i) \nonumber \\
&=P_i \Bigl( P_{\alpha}(Y_i)-P_{\beta}(Y_i) \Bigr) \label{4(a)(b)} \\
&\qquad \text{by the definition of } P_i \nonumber
\end{align} and thus
\begin{align}
P_{(\alpha)}(Y_i) = P_{(\beta)}(Y_i) + P_i \Bigl( P_{\alpha}(Y_i)-P_{\beta}(Y_i) \Bigr). \label{4(a)}
\end{align}
In the same manner by \eqref{w3}, \eqref{w4}, \eqref{w5} and \eqref{w6} we can see that
\begin{align}
&P_{(\alpha\beta)}(Y_i)-P_{(\beta\gamma)}(Y_i)= P_i \Bigl( P_{\alpha\beta}(Y_i)-P_{\beta\gamma}(Y_i) \Bigr), \label{4(ab)(bc)}\\
&P_{(\alpha\beta\gamma)}(Y_i)-P_{(\beta\gamma\delta)}(Y_i)=P_i \Bigl( P_{\alpha\beta\gamma}(Y_i)-P_{\beta\gamma\delta}(Y_i) \Bigr). \label{4(abc)(bcd)}
\end{align} Therefore
\begin{align}
&P_{(\alpha\beta)}(Y_i) = P_{(\beta\gamma)}(Y_i) + P_i \Bigl( P_{\alpha\beta}(Y_i)-P_{\beta\gamma}(Y_i) \Bigr), \label{4(ab)} \\
&P_{(\alpha\beta\gamma)}(Y_i) = P_{(\beta\gamma\delta)}(Y_i) + P_i \Bigl( P_{\alpha\beta\gamma}(Y_i)-P_{\beta\gamma\delta}(Y_i) \Bigr). \label{4(abc)}
\end{align} Additionally we have the following: choose sets $A_1, A_2$ from $\{\{\alpha\},\{\alpha\beta\},\{\alpha\beta\gamma\}\}$ and $B_1, B_2$ from $\{\{\beta\},\{\beta\gamma\},\{\beta\gamma\delta\}\}$ such that for $i \in \{1,2\}$ $|A_i|=|B_i|$, respectively. Then we have that
\begin{align}
&P_{(A_1)}(Y_1)P_{(A_2)}(Y_2)-P_{(B_1)}(Y_1)P_{(B_2)}(Y_2) \nonumber \nonumber \\
= &\Bigl( P_{(B_1)}(Y_1) + P_1 \Bigl( P_{A_1}(Y_1)-P_{B_1}(Y_1) \Bigr) \Bigr)
\Bigl( P_{(B_2)}(Y_2) + P_2 \Bigl( P_{A_2}(Y_2)-P_{B_2}(Y_2) \Bigr) \Bigr) \nonumber \\
&- P_{(B_1)}(Y_1)P_{(B_2)}(Y_2) \nonumber \\
&\text{by } \eqref{4(a)}, \eqref{4(ab)} \text{ or } \eqref{4(abc)} \nonumber \\
= &P_{(B_1)}(Y_1)P_{(B_2)}(Y_2) + P_{(B_1)}(Y_1)P_2 \Bigl( P_{A_2}(Y_2)-P_{B_2}(Y_2) \Bigr) + P_{(B_2)}(Y_2) P_1 \Bigl( P_{A_1}(Y_1)-P_{B_1}(Y_1) \Bigr) \nonumber \\
&+ P_1P_2 \Bigl( P_{A_1}(Y_1)-P_{B_1}(Y_1) \Bigr) \Bigl( P_{A_2}(Y_2)-P_{B_2}(Y_2) \Bigr) - P_{(B_1)}(Y_1)P_{(B_2)}(Y_2) \nonumber \\
= &P_{(B_1)}(Y_1)P_2 \Bigl( P_{A_2}(Y_2)-P_{B_2}(Y_2) \Bigr) + P_{(B_2)}(Y_2) P_1 \Bigl( P_{A_1}(Y_1)-P_{B_1}(Y_1) \Bigr) \nonumber \\
&+ P_1P_2 \Bigl( P_{A_1}(Y_1)-P_{B_1}(Y_1) \Bigr) \Bigl( P_{A_2}(Y_2)-P_{B_2}(Y_2) \Bigr) . \label{4(a)(a)(b)(b)}
\end{align} Now suppose that $T$ has $n$ taxa and that \eqref{lemma1},\eqref{lemma2} and \eqref{lemma3} are true for all trees having fewer that $n$ taxa. Note that therefore \eqref{4(a)(a)(b)(b)} is non-negative, since $Y_1$ and $Y_2$ contain both fewer than than $n$ taxa. Then
\begin{align*}
&P_{\alpha}(X) - P_{\beta}(X) \\ 
= &P_{(\alpha)}(Y_1) P_{(\alpha)}(Y_2) + 3P_{(\alpha)}(Y_1) P_{(\alpha\beta)}(Y_2) + 3P_{(\alpha\beta)}(Y_1) P_{(\alpha)}(Y_2) \\
&+ 3P_{(\alpha)}(Y_1) P_{(\alpha\beta\gamma)}(Y_2) + 3P_{(\alpha\beta\gamma)}(Y_1) P_{(\alpha)}(Y_2) + 6P_{(\alpha\beta)}(Y_1) P_{(\alpha\beta)}(Y_2) \\
&+3P_{(\alpha\beta)}(Y_1) P_{(\alpha\beta\gamma)}(Y_2) + 3P_{(\alpha\beta\gamma)}(Y_1) P_{(\alpha\beta)}(Y_2) + P_{(\alpha)}(Y_1) P_{(\alpha\beta\gamma\delta)}(Y_2)\\
&+ P_{(\alpha\beta\gamma\delta)}(Y_1) P_{(\alpha)}(Y_2) - P_{(\beta)}(Y_1) P_{(\beta)}(Y_2) - P_{(\beta)}(Y_1) P_{(\alpha\beta)}(Y_2) \\
&- P_{(\alpha\beta)}(Y_1) P_{(\beta)}(Y_2) - 2P_{(\beta)}(Y_1) P_{(\beta\gamma)}(Y_2) - 2P_{(\beta\gamma)}(Y_1) P_{(\beta)}(Y_2) \\
& - 2P_{(\alpha\beta)}(Y_1) P_{(\beta\gamma)}(Y_2) - 2P_{(\beta\gamma)}(Y_1) P_{(\alpha\beta)}(Y_2) - 2P_{(\beta\gamma)}(Y_1) P_{(\beta\gamma)}(Y_2) \\
&-2P_{(\beta)}(Y_1) P_{(\alpha\beta\gamma)}(Y_2) - 2P_{(\alpha\beta\gamma)}(Y_1) P_{(\beta)}(Y_2) - P_{(\beta)}(Y_1) P_{(\beta\gamma\delta)}(Y_2) \\
&-P_{(\beta\gamma\delta)}(Y_1) P_{(\beta)}(Y_2) - 2P_{(\beta\gamma)}(Y_1) P_{(\alpha\beta\gamma)}(Y_2) - 2P_{(\alpha\beta\gamma)}(Y_1) P_{(\beta\gamma)}(Y_2) \\
&-P_{(\alpha\beta)}(Y_1) P_{(\beta\gamma\delta)}(Y_2) - P_{(\beta\gamma\delta)}(Y_1) P_{(\alpha\beta)}(Y_2) -P_{(\beta)}(Y_1) P_{(\alpha\beta\gamma\delta)}(Y_2) \\
&- P_{(\alpha\beta\gamma\delta)}(Y_1) P_{(\beta)}(Y_2) \\
&\text{by } \eqref{alpha} \text{ and } \eqref{Beta} 
\end{align*}
\begin{align*}
=& 
P_{(\alpha)}(Y_1) P_{(\alpha)}(Y_2) - P_{(\beta)}(Y_1) P_{(\beta)}(Y_2)+2P_{(\alpha\beta)}(Y_1) P_{(\alpha\beta)}(Y_2) - 2P_{(\beta\gamma)}(Y_1) P_{(\beta\gamma)}(Y_2) \\
&+ P_{(\alpha\beta)}(Y_2) \Bigl( P_{(\alpha)}(Y_1) - P_{(\beta)}(Y_1) \Bigr) + P_{(\alpha\beta)}(Y_1) \Bigl( P_{(\alpha)}(Y_2) - P_{(\beta)}(Y_2) \Bigr) \\
&+ 2P_{(\alpha)}(Y_1) P_{(\alpha\beta)}(Y_2) - 2P_{(\beta)}(Y_1) P_{(\beta\gamma)}(Y_2) + 2P_{(\alpha\beta)}(Y_1) P_{(\alpha)}(Y_2) - 2P_{(\beta\gamma)}(Y_1) P_{(\beta)}(Y_2) \\
&+ 2P_{(\alpha\beta\gamma)}(Y_2) \Bigl( P_{(\alpha)}(Y_1) - P_{(\beta)}(Y_1) \Bigr) + 2P_{(\alpha\beta\gamma)}(Y_1) \Bigl( P_{(\alpha)}(Y_2) - P_{(\beta)}(Y_2) \Bigr) \\
&+ P_{(\alpha)}(Y_1) P_{(\alpha\beta\gamma)}(Y_2) - P_{(\beta)}(Y_1) P_{(\beta\gamma\delta)}(Y_2) + P_{(\alpha\beta\gamma)}(Y_1) P_{(\alpha)}(Y_2) - P_{(\beta\gamma\delta)}(Y_1) P_{(\beta)}(Y_2) \\
&+ 2P_{(\alpha\beta)}(Y_2) \Bigl( P_{(\alpha\beta)}(Y_1) - P_{(\beta\gamma)}(Y_1) \Bigr) + 2P_{(\alpha\beta)}(Y_1) \Bigl( P_{(\alpha\beta)}(Y_2) - P_{(\beta\gamma)}(Y_2) \Bigr) \\
&+ 2P_{(\alpha\beta\gamma)}(Y_2) \Bigl( P_{(\alpha\beta)}(Y_1) - P_{(\beta\gamma)}(Y_1) \Bigr) + 2P_{(\alpha\beta\gamma)}(Y_1) \Bigl( P_{(\alpha\beta)}(Y_2) - P_{(\beta\gamma)}(Y_2) \Bigr) \\
&+ P_{(\alpha\beta)}(Y_2) \Bigl( P_{(\alpha\beta\gamma)}(Y_1) - P_{(\beta\gamma\delta)}(Y_1) \Bigr) + P_{(\alpha\beta)}(Y_1) \Bigl( P_{(\alpha\beta\gamma)}(Y_2) - P_{(\beta\gamma\delta)}(Y_2) \Bigr) \\
&+ P_{(\alpha\beta\gamma\delta)}(Y_2) \Bigl( P_{(\alpha)}(Y_1) - P_{(\beta)}(Y_1) \Bigr) + P_{(\alpha\beta\gamma\delta)}(Y_1) \Bigl( P_{(\alpha)}(Y_2) - P_{(\beta)}(Y_2) \Bigr)
\end{align*}
\begin{align*}
=&
P_{(\alpha)}(Y_1) P_{(\alpha)}(Y_2) - P_{(\beta)}(Y_1) P_{(\beta)}(Y_2) +2 \Bigl( P_{(\alpha\beta)}(Y_1) P_{(\alpha\beta)}(Y_2) - P_{(\beta\gamma)}(Y_1) P_{(\beta\gamma)}(Y_2) \Bigr) \\
&+ 2 \Bigl( P_{(\alpha)}(Y_1) P_{(\alpha\beta)}(Y_2) - P_{(\beta)}(Y_1) P_{(\beta\gamma)}(Y_2) \Bigr) \\
&+ 2 \Bigl( P_{(\alpha\beta)}(Y_1) P_{(\alpha)}(Y_2) - 2P_{(\beta\gamma)}(Y_1) P_{(\beta)}(Y_2) \Bigr) \\
&+ P_{(\alpha)}(Y_1) P_{(\alpha\beta\gamma)}(Y_2) - P_{(\beta)}(Y_1) P_{(\beta\gamma\delta)}(Y_2) \\
&+ P_{(\alpha\beta\gamma)}(Y_1) P_{(\alpha)}(Y_2) - P_{(\beta\gamma\delta)}(Y_1) P_{(\beta)}(Y_2) \\
&+ \Bigl( P_{(\alpha)}(Y_1) - P_{(\beta)}(Y_1) \Bigr) \Bigl( P_{(\alpha\beta)}(Y_2) +2P_{(\alpha\beta\gamma)}(Y_2) + P_{(\alpha\beta\gamma\delta)}(Y_2) \Bigr) \\
&+ \Bigl( P_{(\alpha)}(Y_2) - P_{(\beta)}(Y_2) \Bigr) \Bigl( P_{(\alpha\beta)}(Y_1) +2P_{(\alpha\beta\gamma)}(Y_1) + P_{(\alpha\beta\gamma\delta)}(Y_1) \Bigr) \\
&+ \Bigl( P_{(\alpha\beta)}(Y_1) - P_{(\beta\gamma)}(Y_1) \Bigr) \Bigl( 2P_{(\alpha\beta)}(Y_2) +2P_{(\alpha\beta\gamma)}(Y_2) \Bigr) \\
&+ \Bigl( P_{(\alpha\beta)}(Y_2) - P_{(\beta\gamma)}(Y_2) \Bigr) \Bigl( 2P_{(\alpha\beta)}(Y_1) +2P_{(\alpha\beta\gamma)}(Y_1) \Bigr) \\
&+ \Bigl( P_{(\alpha\beta\gamma)}(Y_1) - P_{(\beta\gamma\delta)}(Y_1) \Bigr) P_{(\alpha\beta)}(Y_2)+ \Bigl( P_{(\alpha\beta\gamma)}(Y_2) - P_{(\beta\gamma\delta)}(Y_2) \Bigr) P_{(\alpha\beta)}(Y_1)
\end{align*}
\begin{align*}
=&
P_{(\alpha)}(Y_1) P_{(\alpha)}(Y_2) - P_{(\beta)}(Y_1) P_{(\beta)}(Y_2)+ 2 \Bigl( P_{(\alpha\beta)}(Y_1) P_{(\alpha\beta)}(Y_2) - P_{(\beta\gamma)}(Y_1) P_{(\beta\gamma)}(Y_2) \Bigr) \\
&+ 2 \Bigl( P_{(\alpha)}(Y_1) P_{(\alpha\beta)}(Y_2) - P_{(\beta)}(Y_1) P_{(\beta\gamma)}(Y_2) \Bigr) \\
&+ 2 \Bigl( P_{(\alpha\beta)}(Y_1) P_{(\alpha)}(Y_2) - P_{(\beta\gamma)}(Y_1) P_{(\beta)}(Y_2) \Bigr) \\
&+ P_{(\alpha)}(Y_1) P_{(\alpha\beta\gamma)}(Y_2) - P_{(\beta)}(Y_1) P_{(\beta\gamma\delta)}(Y_2) \\
&+ P_{(\alpha\beta\gamma)}(Y_1) P_{(\alpha)}(Y_2) - P_{(\beta\gamma\delta)}(Y_1) P_{(\beta)}(Y_2) \\
&+ P_1 \Bigl( P_{\alpha}(Y_1) - P_{\beta}(Y_1) \Bigr) \Bigl( P_{(\alpha\beta)}(Y_2) +2P_{(\alpha\beta\gamma)}(Y_2) + P_{(\alpha\beta\gamma\delta)}(Y_2) \Bigr) \\
&+ P_2 \Bigl( P_{\alpha}(Y_2) - P_{\beta}(Y_2) \Bigr) \Bigl( P_{(\alpha\beta)}(Y_1) +2P_{(\alpha\beta\gamma)}(Y_1) + P_{(\alpha\beta\gamma\delta)}(Y_1) \Bigr) \\
&+ P_1 \Bigl( P_{\alpha\beta}(Y_1) - P_{\beta\gamma}(Y_1) \Bigr) \Bigl( 2P_{(\alpha\beta)}(Y_2) +2P_{(\alpha\beta\gamma)}(Y_2) \Bigr) \\
&+ P_2 \Bigl( P_{\alpha\beta}(Y_2) - P_{\beta\gamma}(Y_2) \Bigr) \Bigl( 2P_{(\alpha\beta)}(Y_1) +2P_{(\alpha\beta\gamma)}(Y_1) \Bigr) \\
&+ P_1 \Bigl( P_{\alpha\beta\gamma}(Y_1) - P_{\beta\gamma\delta}(Y_1) \Bigr) P_{(\alpha\beta)}(Y_2) + P_2  \Bigl( P_{\alpha\beta\delta}(Y_2) - P_{\beta\gamma\delta}(Y_2) \Bigr) P_{(\alpha\beta)}(Y_1) \\
&\text{by } \eqref{4(a)(b)}, \eqref{4(ab)(bc)} \text{ and } \eqref{4(abc)(bcd)} 
.
\end{align*} By \eqref{4(a)(a)(b)(b)} and the inductive assumption this term is non-negative, and therefore concludes the proof for $P_{\alpha}(X) \geq P_{\beta}(X)$. We now proceed with the second part of Lemma \ref{lemmageq}.
\begin{align*}
&P_{\alpha\beta}(X) - P_{\beta\gamma}(X) \\
=&P_{(\alpha)}(Y_1) P_{(\beta)}(Y_2) + P_{(\beta)}(Y_1) P_{(\alpha)}(Y_2)+ P_{(\alpha\beta)}(Y_1) P_{(\alpha\beta)}(Y_2) \\
&+ 2P_{(\alpha\beta)}(Y_1) P_{(\alpha\beta\gamma)}(Y_2) + 2P_{(\alpha\beta\gamma)}(Y_1) P_{(\alpha\beta)}(Y_2) + P_{(\alpha\beta)}(Y_1) P_{(\alpha\beta\gamma\delta)}(Y_2) \\
&+P_{(\alpha\beta\gamma\delta)}(Y_1) P_{(\alpha\beta)}(Y_2) + 2P_{(\alpha\beta\gamma)}(Y_1) P_{(\alpha\beta\gamma)}(Y_2) - 2P_{(\beta)}(Y_1) P_{(\beta)}(Y_2) \\
&- P_{(\beta\gamma)}(Y_1) P_{(\beta\gamma)}(Y_2) - P_{(\beta\gamma)}(Y_1) P_{(\alpha\beta\gamma)}(Y_2) - P_{(\alpha\beta\gamma)}(Y_1) P_{(\beta\gamma)}(Y_2) \\
&- P_{(\beta\gamma)}(Y_1) P_{(\beta\gamma\delta)}(Y_2) - P_{(\beta\gamma\delta)}(Y_1) P_{(\beta\gamma)}(Y_2) - P_{(\alpha\beta\gamma)}(Y_1) P_{(\beta\gamma\delta)}(Y_2) \\
&- P_{(\beta\gamma\delta)}(Y_1) P_{(\alpha\beta\gamma)}(Y_2) - P_{(\beta\gamma)}(Y_1) P_{(\alpha\beta\gamma\delta)}(Y_2) - P_{(\alpha\beta\gamma\delta)}(Y_1) P_{(\beta\gamma)}(Y_2) \\
&\text{by } \eqref{w2}, \eqref{alphabeta} \text{ and } \eqref{betagamma} 
\end{align*}
\begin{align*}
= &P_{(\beta)}(Y_2) \Bigl( P_{(\alpha)}(Y_1) - P_{(\beta)}(Y_1) \Bigr) + P_{(\beta)}(Y_1) \Bigl( P_{(\alpha)}(Y_2) - P_{(\beta)}(Y_2) \Bigr) \\
&+ P_{(\alpha\beta)}(Y_1) P_{(\alpha\beta)}(Y_2) - P_{(\beta\gamma)}(Y_1) P_{(\beta\gamma)}(Y_2) \\
&+ P_{(\alpha\beta\gamma)}(Y_2) \Bigl( P_{(\alpha\beta)}(Y_1) - P_{(\beta\gamma)}(Y_1) \Bigr) + P_{(\alpha\beta\gamma)}(Y_1) \Bigl( P_{(\alpha\beta)}(Y_2) - P_{(\beta\gamma)}(Y_2) \Bigr) \\
&+ P_{(\alpha\beta)}(Y_1) P_{(\alpha\beta\gamma)}(Y_2) - P_{(\beta\gamma)}(Y_1) P_{(\beta\gamma\delta)}(Y_2) \\
&+ P_{(\alpha\beta\gamma)}(Y_1) P_{(\alpha\beta)}(Y_2) - P_{(\beta\gamma\delta)}(Y_1) P_{(\beta\gamma)}(Y_2) \\
&+ P_{(\alpha\beta\gamma)}(Y_2) \Bigl( P_{(\alpha\beta\gamma)}(Y_1) - P_{(\beta\gamma\delta)}(Y_1) \Bigr) + P_{(\alpha\beta\gamma)}(Y_1) \Bigl( P_{(\alpha\beta\gamma)}(Y_2) - P_{(\beta\gamma\delta)}(Y_2) \Bigr) \\
&+ P_{(\alpha\beta\gamma\delta)}(Y_2) \Bigl( P_{(\alpha\beta)}(Y_1) - P_{(\beta\gamma)}(Y_1) \Bigr) + P_{(\alpha\beta\gamma\delta)}(Y_1) \Bigl( P_{(\alpha\beta)}(Y_2) - P_{(\beta\gamma)}(Y_2) \Bigr)
\end{align*}
\begin{align*}
=& P_{(\alpha\beta)}(Y_1) P_{(\alpha\beta)}(Y_2) - P_{(\beta\gamma)}(Y_1) P_{(\beta\gamma)}(Y_2) \\
&+ P_{(\alpha\beta)}(Y_1) P_{(\alpha\beta\gamma)}(Y_2) - P_{(\beta\gamma)}(Y_1) P_{(\beta\gamma\delta)}(Y_2) \\
&+ P_{(\alpha\beta\gamma)}(Y_1) P_{(\alpha\beta)}(Y_2) - P_{(\beta\gamma\delta)}(Y_1) P_{(\beta\gamma)}(Y_2) \\
&+P_{(\beta)}(Y_2) \Bigl( P_{(\alpha)}(Y_1) - P_{(\beta)}(Y_1) \Bigr) + P_{(\beta)}(Y_1) \Bigl( P_{(\alpha)}(Y_2) - P_{(\beta)}(Y_2) \Bigr) \\
&+ \Bigl( P_{(\alpha\beta\gamma)}(Y_2)+P_{(\alpha\beta\gamma\delta)}(Y_2) \Bigr) \Bigl( P_{(\alpha\beta)}(Y_1) - P_{(\beta\gamma)}(Y_1) \Bigr) \\
&+ \Bigl( P_{(\alpha\beta\gamma)}(Y_1)+P_{(\alpha\beta\gamma\delta)}(Y_1) \Bigr) \Bigl( P_{(\alpha\beta)}(Y_2) - P_{(\beta\gamma)}(Y_2) \Bigr) \\
&+ P_{(\alpha\beta\gamma)}(Y_2) \Bigl( P_{(\alpha\beta\gamma)}(Y_1) - P_{(\beta\gamma\delta)}(Y_1) \Bigr) + P_{(\alpha\beta\gamma)}(Y_1) \Bigl( P_{(\alpha\beta\gamma)}(Y_2) - P_{(\beta\gamma\delta)}(Y_2) \Bigr)
\end{align*}
\begin{align*}
=& P_{(\alpha\beta)}(Y_1) P_{(\alpha\beta)}(Y_2) - P_{(\beta\gamma)}(Y_1) P_{(\beta\gamma)}(Y_2) \\
&+ P_{(\alpha\beta)}(Y_1) P_{(\alpha\beta\gamma)}(Y_2) - P_{(\beta\gamma)}(Y_1) P_{(\beta\gamma\delta)}(Y_2) \\
&+ P_{(\alpha\beta\gamma)}(Y_1) P_{(\alpha\beta)}(Y_2) - P_{(\beta\gamma\delta)}(Y_1) P_{(\beta\gamma)}(Y_2) \\
&+P_{(\beta)}(Y_2) P_1 \Bigl( P_{\alpha}(Y_1) - P_{\beta}(Y_1) \Bigr) + P_{(\beta)}(Y_1) P_2 \Bigl( P_{\alpha}(Y_2) - P_{\beta}(Y_2) \Bigr) \\
&+ \Bigl( P_{(\alpha\beta\gamma)}(Y_2)+P_{(\alpha\beta\gamma\delta)}(Y_2) \Bigr) P_1 \Bigl( P_{\alpha\beta}(Y_1) - P_{\beta\gamma}(Y_1) \Bigr) \\
&+ \Bigl( P_{(\alpha\beta\gamma)}(Y_1)+P_{(\alpha\beta\gamma\delta)}(Y_1) \Bigr) P_2 \Bigl( P_{\alpha\beta}(Y_2) - P_{\beta\gamma}(Y_2) \Bigr) \\
&+ P_{(\alpha\beta\gamma)}(Y_2) P_1 \Bigl( P_{\alpha\beta\gamma}(Y_1) - P_{\beta\gamma\delta}(Y_1) \Bigr) + P_{(\alpha\beta\gamma)}(Y_1) P_2 \Bigl( P_{\alpha\beta\gamma}(Y_2) - P_{\beta\gamma\delta}(Y_2) \Bigr) \\
&\text{by } \eqref{4(a)(b)}, \eqref{4(ab)(bc)} \text{ and } \eqref{4(abc)(bcd)} 
.
\end{align*} Again by \eqref{4(a)(a)(b)(b)} and the inductive assumption this term is non-negative, and therefore concludes the proof for $P_{\alpha\beta}(X) \geq P_{\beta\gamma}(X)$.
Moreover we have
\begin{align*}
&P_{\alpha\beta\gamma}(X) - P_{\beta\gamma\delta}(X) \\
=&P_{(\alpha)}(Y_1) P_{(\beta\gamma)}(Y_2) + P_{(\beta\gamma)}(Y_1) P_{(\alpha)}(Y_2)+ 2P_{(\beta)}(Y_1) P_{(\alpha\beta)}(Y_2) \\
&+ 2P_{(\alpha\beta)}(Y_1) P_{(\beta)}(Y_2) + P_{(\alpha\beta\gamma)}(Y_1) P_{(\alpha\beta\gamma)}(Y_2) + P_{(\alpha\beta\gamma)}(Y_1) P_{(\alpha\beta\gamma\delta)}(Y_2) \\
&+P_{(\alpha\beta\gamma\delta)}(Y_1) P_{(\alpha\beta\gamma)}(Y_2) - 3P_{(\beta)}(Y_1) P_{(\beta\gamma)}(Y_2) - 3P_{(\beta\gamma)}(Y_1) P_{(\beta)}(Y_2) \\
&- P_{(\beta\gamma\delta)}(Y_1) P_{(\beta\gamma\delta)}(Y_2) - P_{(\beta\gamma\delta)}(Y_1) P_{(\alpha\beta\gamma\delta)}(Y_2) - P_{(\alpha\beta\gamma\delta)}(Y_1) P_{(\beta\gamma\delta)}(Y_2) \\
&\text{by } \eqref{alphabetagamma} \text{ and } \eqref{betagammadelta} \\
=& P_{(\beta\gamma)}(Y_2) \Bigl( P_{(\alpha)}(Y_1) - P_{(\beta)}(Y_1) \Bigr) + P_{(\beta\gamma)}(Y_1) \Bigl( P_{(\alpha)}(Y_2) - P_{(\beta)}(Y_2) \Bigr) \\
&+2P_{(\beta)}(Y_2) \Bigl( P_{(\alpha\beta)}(Y_1) - P_{(\beta\gamma)}(Y_1) \Bigr) + 2P_{(\beta)}(Y_1) \Bigl( P_{(\alpha\beta)}(Y_2) - P_{(\beta\gamma)}(Y_2) \Bigr) \\
&+P_{(\alpha\beta\gamma)}(Y_1) P_{(\alpha\beta\gamma)}(Y_2) - P_{(\beta\gamma\delta)}(Y_1) P_{(\beta\gamma\delta)}(Y_2) \\
&+ P_{(\alpha\beta\gamma\delta)}(Y_2) \Bigl( P_{(\alpha\beta\gamma)}(Y_1) - P_{(\beta\gamma\delta)}(Y_1) \Bigr) + P_{(\alpha\beta\gamma\delta)}(Y_1) \Bigl( P_{(\alpha\beta\gamma)}(Y_2) - P_{(\beta\gamma\delta)}(Y_2) \Bigr) \\
=& P_{(\beta\gamma)}(Y_2) P_1 \Bigl( P_{\alpha}(Y_1) - P_{\beta}(Y_1) \Bigr) + P_{(\beta\gamma)}(Y_1)P_2 \Bigl( P_{\alpha}(Y_2) - P_{\beta}(Y_2) \Bigr) \\
&+2P_{(\beta)}(Y_2) P_1 \Bigl( P_{\alpha\beta}(Y_1) - P_{\beta\gamma}(Y_1) \Bigr) + 2P_{(\beta)}(Y_1) P_2 \Bigl( P_{\alpha\beta}(Y_2) - P_{\beta\gamma}(Y_2) \Bigr) \\
&+P_{(\alpha\beta\gamma)}(Y_1) P_{(\alpha\beta\gamma)}(Y_2) - P_{(\beta\gamma\delta)}(Y_1) P_{(\beta\gamma\delta)}(Y_2) \\
&+ P_{(\alpha\beta\gamma\delta)}(Y_2) P_1 \Bigl( P_{\alpha\beta\gamma}(Y_1) - P_{\beta\gamma\delta}(Y_1) \Bigr) + P_{(\alpha\beta\gamma\delta)}(Y_1) P_2 \Bigl( P_{\alpha\beta\gamma}(Y_2) - P_{\beta\gamma\delta}(Y_2) \Bigr) \\
&\text{by } \eqref{4(a)(b)}, \eqref{4(ab)(bc)} \text{ and } \eqref{4(abc)(bcd)} 
.
\end{align*}
By \eqref{4(a)(a)(b)(b)} and the inductive assumption $P_{\alpha\beta\gamma}(X) - P_{\beta\gamma\delta}(X)$ is non-negative, and therefore concludes the proof of the last part of Lemma \ref{lemmageq}. \qed
\\
\noindent
\textbf{Extension to the proof of Theorem \ref{theoremRA}}
First of all we state some equations for $i \in \{1,2\}$ which helps to show \eqref{8D(X)}.
\begin{align}
1=&P_{\alpha}(Y_i)+P_{\beta}(Y_i)+P_{\gamma}(Y_i)+P_{\delta}(Y_i)+P_{\alpha\beta}(Y_i)+P_{\alpha\gamma}(Y_i)+P_{\alpha\delta}(Y_i)+P_{\beta\gamma}(Y_i) \nonumber
\\&+P_{\beta\delta}(Y_i)+P_{\gamma\delta}(Y_i)+P_{\alpha\beta\gamma}(Y_i)+P_{\alpha\beta\delta}(Y_i)+P_{\alpha\gamma\delta}(Y_i)+P_{\beta\gamma\delta}(Y_i)+P_{\alpha\beta\gamma\delta}(Y_i) \nonumber \\
=&P_{\alpha}(Y_i)+3P_{\beta}(Y_i)+3P_{\alpha\beta}(Y_i)+3P_{\beta\gamma}(Y_i)+3P_{\alpha\beta\gamma}(Y_i)+P_{\beta\gamma\delta}(Y_i)+P_{\alpha\beta\gamma\delta}(Y_i) \label{sum} \\
&\text{by } \eqref{x1}, \eqref{x2}, \eqref{x3}. \nonumber
\end{align}
By \eqref{sum} we have that
\begin{align}
3P_{\beta}(Y_i)=1-P_{\alpha}(Y_i)-3P_{\alpha\beta}(Y_i)-3P_{\beta\gamma}(Y_i)-3P_{\alpha\beta\gamma}(Y_i)-P_{\beta\gamma\delta}(Y_i)-P_{\alpha\beta\gamma\delta}(Y_i) \label{beta}
\end{align}
and
\begin{align}
&P_{(\alpha)}(Y_i)+3P_{(\beta)}(Y_i)+3P_{(\alpha\beta)}(Y_i)+3P_{(\beta\gamma)}(Y_i)+3P_{(\alpha\beta\gamma)}(Y_i)+P_{(\beta\gamma\delta)}(Y_i)+P_{(\alpha\beta\gamma\delta)}(Y_i) \nonumber \\
&=(1-3p_i) P_{\alpha}(Y_i) + 3p_i P_{\beta}(Y_i)+ 3(1-p_i) P_{\beta}(Y_i) + 3p_i P_{\alpha}(Y_i) \nonumber \\
&\quad + 3(1-2p_i) P_{\alpha\beta}(Y_i) + 6p_i P_{\beta\gamma}(Y_i) +3(1-2p_i) P_{\beta\gamma}(Y_i) + 6p_i P_{\alpha\beta}(Y_i) \nonumber\\
&\quad + 3(1-p_i) P_{\alpha\beta\gamma}(Y_i) + 3p_i P_{\beta\gamma\delta}(Y_i) +(1-3p_i) P_{\beta\gamma\delta}(Y_i) + 3p_i P_{\alpha\beta\gamma}(Y_i) + P_{\alpha\beta\gamma\delta}(Y_i) \nonumber \\
&\qquad \text{by } \eqref{w1}, \eqref{w2}, \eqref{w3}, \eqref{w4}, \eqref{w5}, \eqref{w6}, \eqref{w7} \nonumber \\
&=P_{\alpha}(Y_i)+3P_{\beta}(Y_i)+3P_{\alpha\beta}(Y_i)+3P_{\beta\gamma}(Y_i)+3P_{\alpha\beta\gamma}(Y_i)+P_{\beta\gamma\delta}(Y_i)+P_{\alpha\beta\gamma\delta}(Y_i) \nonumber \\
&=1 \label{sum2}.
\end{align} Furthermore, the following expressions can be simplified by \eqref{RA4}, \eqref{beta} \text{ and } \eqref{sum2}.
\begin{align}
&4P_{(\alpha)}(Y_i) + 6P_{(\alpha\beta)}(Y_i) + 4P_{(\alpha\beta\gamma)}(Y_i) + P_{(\alpha\beta\gamma\delta)}(Y_i) \nonumber \\
= & 4 \Bigl( P_{(\alpha)}(Y_i) + \frac{3}{2} P_{(\alpha\beta)}(Y_i) + P_{(\alpha\beta\gamma)}(Y_i) + \frac{1}{4} P_{(\alpha\beta\gamma\delta)}(Y_i) \Bigr) \nonumber \\
= & 4 \Bigl( (1-3p_i) P_{\alpha}(Y_i) + 3p_i P_{\beta}(Y_i)) + \frac{3}{2} ((1-2p_i) P_{\alpha\beta}(Y_i) + 2p_i P_{\beta\gamma}(Y_i)) \nonumber \\
& + (1-p_i) P_{\alpha\beta\gamma}(Y_i) + p_i P_{\beta\gamma\delta}(Y_i) + \frac{1}{4} P_{\alpha\beta\gamma\delta}(Y_i) \Bigr) \nonumber \\
=&4 \Bigl( (1-3p_i) P_{\alpha}(Y_i) + p_i  \bigl( 1-P_{\alpha}(Y_i)-3P_{\alpha\beta}(Y_i)-3P_{\beta\gamma}(Y_i)-3P_{\alpha\beta\gamma}(Y_i)-P_{\beta\gamma\delta}(Y_i)-P_{\alpha\beta\gamma\delta}(Y_i) \bigr) \nonumber \\
& + \frac{3}{2} ((1-2p_i) P_{\alpha\beta}(Y_i) + 2p_i P_{\beta\gamma}(Y_i)) + (1-p_i) P_{\alpha\beta\gamma}(Y_i) + p_i P_{\beta\gamma\delta}(Y_i) + \frac{1}{4} P_{\alpha\beta\gamma\delta}(Y_i) \Bigr) \nonumber \\
&\text{by } \eqref{beta} \nonumber \\
=&4 \Bigl( (1-3p_i) P_{\alpha}(Y_i) + p_i -p_i P_{\alpha}(Y_i)- 3p_i P_{\alpha\beta}(Y_i)- 3p_i P_{\beta\gamma}(Y_i)- 3p_i P_{\alpha\beta\gamma}(Y_i)- p_i P_{\beta\gamma\delta}(Y_i) \nonumber \\
& - p_i P_{\alpha\beta\gamma\delta}(Y_i)  + \frac{3}{2} ((1-2p_i) P_{\alpha\beta}(Y_i) + 2p_i P_{\beta\gamma}(Y_i)) + (1-p_i) P_{\alpha\beta\gamma}(Y_i) + p_i P_{\beta\gamma\delta}(Y_i) + \frac{1}{4} P_{\alpha\beta\gamma\delta}(Y_i) \Bigr) \nonumber \\
=&4 \Bigl( (1-3p_i) P_{\alpha}(Y_i) + p_i -p_i P_{\alpha}(Y_i)- 3p_i P_{\alpha\beta}(Y_i) - 3p_i P_{\alpha\beta\gamma}(Y_i) \nonumber \\
& - p_i P_{\alpha\beta\gamma\delta}(Y_i)  + \frac{3}{2} (1-2p_i) P_{\alpha\beta}(Y_i) + (1-p_i) P_{\alpha\beta\gamma}(Y_i) + \frac{1}{4} P_{\alpha\beta\gamma\delta}(Y_i) \Bigr) \nonumber \\
=&4 \Bigl( p_i + (1-4p_i) P_{\alpha}(Y_i) + \frac{3}{2} (1-4p_i) P_{\alpha\beta}(Y_i) + (1-4p_i) P_{\alpha\beta\gamma}(Y_i) + \frac{1}{4} (1-4p_i) P_{\alpha\beta\gamma\delta}(Y_i) \Bigr) \nonumber \\
=&4 \Bigl( p_i + (1-4p_i) \Bigl( P_{\alpha}(Y_i) + \frac{3}{2} P_{\alpha\beta}(Y_i) + P_{\alpha\beta\gamma}(Y_i) + \frac{1}{4} P_{\alpha\beta\gamma\delta}(Y_i) \Bigr) \Bigr) \nonumber \\
=&4 \Bigl( p_i + (1-4p_i) RA(Y_i) \Bigr) \label{piRA} \\
&\text{by } \eqref{RA4} \nonumber
.
\end{align} Moreover,
\begin{align}
& \frac{1}{2}P_{(\alpha)}(Y_i) + \frac{3}{2} P_{(\alpha\beta)}(Y_i) + \frac{3}{2} P_{(\alpha\beta\gamma)}(Y_i) + P_{(\alpha\beta\gamma\delta)}(Y_i) + \frac{3}{2} P_{(\beta)}(Y_i) + P_{(\beta\gamma)}(Y_i) + \frac{1}{4} P_{(\beta\gamma\delta)}(Y_i) \nonumber \\
&= \frac{1}{2} P_{(\alpha)}(Y_i) + \frac{3}{2} P_{(\beta)}(Y_i) + \frac{3}{2} P_{(\alpha\beta)}(Y_i) + \frac{3}{2} P_{(\beta\gamma)}(Y_i) - \frac{1}{2} P_{(\beta\gamma)}(Y_i) + \frac{3}{2} P_{(\alpha\beta\gamma)}(Y_i) \nonumber \\
& \quad + \frac{1}{2} P_{(\beta\gamma\delta)}(Y_i) - \frac{1}{4} P_{(\beta\gamma\delta)}(Y_i) + \frac{1}{2} P_{(\alpha\beta\gamma\delta)}(Y_i) + \frac{1}{2} P_{(\alpha\beta\gamma\delta)}(Y_i) \nonumber \\
&= \frac{1}{2} \Bigl( P_{(\alpha)}(Y_i) + 3P_{(\beta)}(Y_i) + 3P_{(\alpha\beta)}(Y_i) + 3P_{(\beta\gamma)}(Y_i) + 3P_{(\alpha\beta\gamma)}(Y_i) + P_{(\beta\gamma\delta)}(Y_i) + P_{(\alpha\beta\gamma\delta)}(Y_i) \Bigr) \nonumber \\
&\quad - \frac{1}{2} P_{(\beta\gamma)}(Y_i) - \frac{1}{4} P_{(\beta\gamma\delta)}(Y_i) + \frac{1}{2} P_{(\alpha\beta\gamma\delta)}(Y_i) \nonumber \\
&=\frac{1}{2} - \frac{1}{2} P_{(\beta\gamma)}(Y_i) - \frac{1}{4} P_{(\beta\gamma\delta)}(Y_i) + \frac{1}{2} P_{(\alpha\beta\gamma\delta)}(Y_i) \label{beta1} \\
&\quad \text{by } \eqref{sum2}. \nonumber
\end{align} Additionally,
\begin{align}
&\frac{3}{2} P_{(\alpha)}(Y_i) + 2 P_{(\beta)}(Y_i) + \frac{15}{4} P_{(\alpha\beta)}(Y_i) + \frac{3}{4} P_{(\beta\gamma)}(Y_i) + 3P_{(\alpha\beta\gamma)}(Y_i) + \frac{3}{2} P_{(\alpha\beta\gamma\delta)}(Y_i) \nonumber \\
&= \frac{2}{3} \Bigl( P_{(\alpha)}(Y_i) + 3P_{(\beta)}(Y_i) + 3P_{(\alpha\beta)}(Y_i) + 3P_{(\beta\gamma)}(Y_i) + 3P_{(\alpha\beta\gamma)}(Y_i) + P_{(\beta\gamma\delta)}(Y_i) + P_{(\alpha\beta\gamma\delta)}(Y_i) \Bigr) \nonumber \\
&\quad + \frac{5}{6} P_{(\alpha)}(Y_i) + \frac{7}{4} P_{(\alpha\beta)}(Y_i) - \frac{5}{4} P_{(\beta\gamma)}(Y_i) + P_{(\alpha\beta\gamma)}(Y_i) - \frac{2}{3} P_{(\beta\gamma\delta)}(Y_i) + \frac{5}{6} P_{(\alpha\beta\gamma\delta)}(Y_i) \nonumber \\
&= \frac{2}{3} + \frac{5}{6} P_{(\alpha)}(Y_i) + \frac{7}{4} P_{(\alpha\beta)}(Y_i) - \frac{5}{4} P_{(\beta\gamma)}(Y_i) + P_{(\alpha\beta\gamma)}(Y_i) - \frac{2}{3} P_{(\beta\gamma\delta)}(Y_i) + \frac{5}{6} P_{(\alpha\beta\gamma\delta)}(Y_i) \label{beta2} \\
&\quad \text{by } \eqref{sum2}, \nonumber
\end{align} and
\begin{align}
&\frac{3}{2} P_{(\alpha)}(Y_i) + \frac{3}{4} P_{(\beta)}(Y_i) + 3P_{(\alpha\beta)}(Y_i) + 2P_{(\alpha\beta\gamma)}(Y_i) + P_{(\alpha\beta\gamma\delta)}(Y_i) \nonumber \\
&= \frac{1}{4} \Bigl( P_{(\alpha)}(Y_i) + 3P_{(\beta)}(Y_i) + 3P_{(\alpha\beta)}(Y_i) + 3P_{(\beta\gamma)}(Y_i) + 3P_{(\alpha\beta\gamma)}(Y_i) + P_{(\beta\gamma\delta)}(Y_i) + P_{(\alpha\beta\gamma\delta)}(Y_i) \Bigr) \nonumber \\
&\quad + \frac{5}{4} P_{(\alpha)}(Y_i) + \frac{9}{4} P_{(\alpha\beta)}(Y_i) - \frac{3}{4} P_{(\beta\gamma)}(Y_i) + \frac{5}{4} P_{(\alpha\beta\gamma)}(Y_i) - \frac{1}{4} P_{(\beta\gamma\delta)}(Y_i) + \frac{3}{4} P_{(\alpha\beta\gamma\delta)}(Y_i) \nonumber \\
&=\frac{1}{4} + \frac{5}{4} P_{(\alpha)}(Y_i) + \frac{9}{4} P_{(\alpha\beta)}(Y_i) - \frac{3}{4} P_{(\beta\gamma)}(Y_i) + \frac{5}{4} P_{(\alpha\beta\gamma)}(Y_i) - \frac{1}{4} P_{(\beta\gamma\delta)}(Y_i) + \frac{3}{4} P_{(\alpha\beta\gamma\delta)}(Y_i) \label{beta3} \\
&\quad \text{by } \eqref{sum2} \nonumber. 
\end{align} Furthermore,
\begin{align}
4p_i-1 &= 4p_i - 4 + 3 +12p -12p \nonumber \\
&= 3 - 12p + 4p_i - 4 +12(p_i + p_i^{'} -4 p_i p_i^{'}) \nonumber \\
&\quad \text{by } \eqref{defp} \nonumber \\
&= 3(1-4p) + 16p_i - 4 + 12p_i^{'} -48 p_i p_i^{'} \nonumber \\
&= 3(1-4p) + 4(1-4p_i)(-1 + 3p_i^{'}) \nonumber \\
&= 3P + 4P_i(-1 + 3p_i^{'}). \label{4pi1} \\ 
&\quad \text{by the definition of } P \text{ and } P_i, \nonumber 
\end{align} and
\begin{align}
4p_i-4+12p &= 4p_i-4+12(p_i + p_i^{'} -4 p_i p_i^{'}) \nonumber \\
&\quad \text{by } \eqref{defp} \nonumber \\
&= 4 (4p_i -1 + 3p_i^{'} - 12 p_i p_i^{'}) \nonumber \\
&= 4(1-4p_i)(-1 + 3p_i^{'}) \nonumber \\
&= 4P_i(-1 + 3p_i^{'}) \label{4pi2} \\
&\quad \text{by the definition of } P_i \nonumber
.
\end{align}
By using the simplifications stated before we can now rewrite $RA(X)$.
\begin{align*}
RA(X) = & P_{\alpha}(X)+\frac{3}{2} \cdot P_{\alpha\beta}(X) + P_{\alpha\beta\gamma}(X) + \frac{1}{4} \cdot P_{\alpha\beta\gamma\delta}(X) \\
&\text{by } \eqref{RA4} \\
= &P_{(\alpha)}(Y_1) P_{(\alpha)}(Y_2) + 3P_{(\alpha)}(Y_1) P_{(\alpha\beta)}(Y_2) + 3P_{(\alpha\beta)}(Y_1)  P_{(\alpha)}(Y_2) \\
&+3P_{(\alpha)}(Y_1) P_{(\alpha\beta\gamma)}(Y_2) + 3P_{(\alpha\beta\gamma)}(Y_1) P_{(\alpha)}(Y_2) + 6P_{(\alpha\beta)}(Y_1) P_{(\alpha\beta)}(Y_2) \\
&+3P_{(\alpha\beta)}(Y_1) P_{(\alpha\beta\gamma)}(Y_2) + 3P_{(\alpha\beta\gamma)}(Y_1) P_{(\alpha\beta)}(Y_2) + P_{(\alpha)}(Y_1) P_{(\alpha\beta\gamma\delta)}(Y_2)\\
&+ P_{(\alpha\beta\gamma\delta)}(Y_1) P_{(\alpha)}(Y_2)+\frac{3}{2} \Bigl( P_{(\alpha)}(Y_1) P_{(\beta)}(Y_2) + P_{(\beta)}(Y_1) P_{(\alpha)}(Y_2) \\
&+P_{(\alpha\beta)}(Y_1) P_{(\alpha\beta)}(Y_2) + 2P_{(\alpha\beta)}(Y_1) P_{(\alpha\beta\gamma)}(Y_2) + 2P_{(\alpha\beta\gamma)}(Y_1) P_{(\alpha\beta)}(Y_2)\\
&+P_{(\alpha\beta)}(Y_1) P_{(\alpha\beta\gamma\delta)}(Y_2) + P_{(\alpha\beta\gamma\delta)}(Y_1) P_{(\alpha\beta)}(Y_2) +2P_{(\alpha\beta\gamma)}(Y_1) P_{(\alpha\beta\gamma)}(Y_2)  \Bigr) \\
&+P_{(\alpha)}(Y_1) P_{(\beta\gamma)}(Y_2) + P_{(\beta\gamma)}(Y_1) P_{(\alpha)}(Y_2)+ 2P_{(\beta)}(Y_1) P_{(\alpha\beta)}(Y_2) \\
&+ 2P_{(\alpha\beta)}(Y_1) P_{(\beta)}(Y_2)+P_{(\alpha\beta\gamma)}(Y_1) P_{(\alpha\beta\gamma)}(Y_2) + P_{(\alpha\beta\gamma)}(Y_1) P_{(\alpha\beta\gamma\delta)}(Y_2) \\
&+ P_{(\alpha\beta\gamma\delta)}(Y_1) P_{(\alpha\beta\gamma)}(Y_2) +\frac{1}{4} \Bigl( P_{(\alpha)}(Y_1) P_{(\beta\gamma\delta)}(Y_2) + P_{(\beta\gamma\delta)}(Y_1) P_{(\alpha)}(Y_2) \\
&+3P_{(\beta)}(Y_1) P_{(\alpha\beta\gamma)}(Y_2) + 3P_{(\alpha\beta\gamma)}(Y_1) P_{(\beta)}(Y_2)+ 3P_{(\alpha\beta)}(Y_1) P_{(\beta\gamma)}(Y_2) \\
&+3 P_{(\beta\gamma)}(Y_1) P_{(\alpha\beta)}(Y_2) + P_{(\alpha\beta\gamma\delta)}(Y_1) P_{(\alpha\beta\gamma\delta)}(Y_2) \Bigr) \\
&\text{by } \eqref{alpha}, \eqref{alphabeta}, \eqref{alphabetagamma}, \eqref{alphabetagammadelta}
\end{align*}
\begin{align*}
= &P_{(\alpha)}(Y_1) \Bigl( \frac{1}{2}P_{(\alpha)}(Y_2) + \frac{3}{2} P_{(\alpha\beta)}(Y_2) + \frac{3}{2} P_{(\alpha\beta\gamma)}(Y_2) + P_{(\alpha\beta\gamma\delta)}(Y_2) + \frac{3}{2} P_{(\beta)}(Y_2) + P_{(\beta\gamma)}(Y_2)\\
&\qquad \qquad + \frac{1}{4} P_{(\beta\gamma\delta)}(Y_2) \Bigr) \\
&+P_{(\alpha)}(Y_2) \Bigl( \frac{1}{2}P_{(\alpha)}(Y_1) + \frac{3}{2} P_{(\alpha\beta)}(Y_1) + \frac{3}{2} P_{(\alpha\beta\gamma)}(Y_1) + P_{(\alpha\beta\gamma\delta)}(Y_1) + \frac{3}{2} P_{(\beta)}(Y_1) + P_{(\beta\gamma)}(Y_1) \\
&\qquad \qquad + \frac{1}{4} P_{(\beta\gamma\delta)}(Y_1) \Bigr) \\
&+P_{(\alpha\beta)}(Y_1) \Bigl( \frac{3}{2} P_{(\alpha)}(Y_2) + 2 P_{(\beta)}(Y_2) + \frac{15}{4} P_{(\alpha\beta)}(Y_2) + \frac{3}{4} P_{(\beta\gamma)}(Y_2) + 3P_{(\alpha\beta\gamma)}(Y_2) + \frac{3}{2} P_{(\alpha\beta\gamma\delta)}(Y_2) \Bigr) \\
&+P_{(\alpha\beta)}(Y_2) \Bigl( \frac{3}{2} P_{(\alpha)}(Y_1) + 2 P_{(\beta)}(Y_1) + \frac{15}{4} P_{(\alpha\beta)}(Y_1) + \frac{3}{4} P_{(\beta\gamma)}(Y_1) + 3P_{(\alpha\beta\gamma)}(Y_1) + \frac{3}{2} P_{(\alpha\beta\gamma\delta)}(Y_1) \Bigr) \\
&+P_{(\alpha\beta\gamma)}(Y_1) \Bigl(\frac{3}{2} P_{(\alpha)}(Y_2)+ \frac{3}{4} P_{(\beta)}(Y_2) + 3 P_{(\alpha\beta)}(Y_2) + 2 P_{(\alpha\beta\gamma)}(Y_2) + P_{(\alpha\beta\gamma\delta)}(Y_2) \Bigr) \\
&+P_{(\alpha\beta\gamma)}(Y_2) \Bigl( \frac{3}{2} P_{(\alpha)}(Y_1)+ \frac{3}{4} P_{(\beta)}(Y_1) + 3 P_{(\alpha\beta)}(Y_1) + 2 P_{(\alpha\beta\gamma)}(Y_1) + P_{(\alpha\beta\gamma\delta)}(Y_1) \Bigr) \\
&+ \frac{1}{4} P_{(\alpha\beta\gamma\delta)}(Y_1)P_{(\alpha\beta\gamma\delta)}(Y_2) \\
= &P_{(\alpha)}(Y_1) \Bigl( \frac{1}{2} - \frac{1}{2} P_{(\beta\gamma)}(Y_2) - \frac{1}{4} P_{(\beta\gamma\delta)}(Y_2) + \frac{1}{2} P_{(\alpha\beta\gamma\delta)}(Y_2) \Bigr) \\
&+P_{(\alpha)}(Y_2) \Bigl( \frac{1}{2} - \frac{1}{2} P_{(\beta\gamma)}(Y_1) - \frac{1}{4} P_{(\beta\gamma\delta)}(Y_1) + \frac{1}{2} P_{(\alpha\beta\gamma\delta)}(Y_1) \Bigr) \\
&+P_{(\alpha\beta)}(Y_1) \Bigl(\frac{2}{3} + \frac{5}{6} P_{(\alpha)}(Y_2) + \frac{7}{4} P_{(\alpha\beta)}(Y_2) - \frac{5}{4} P_{(\beta\gamma)}(Y_2) + P_{(\alpha\beta\gamma)}(Y_2) - \frac{2}{3} P_{(\beta\gamma\delta)}(Y_2) + \frac{5}{6} P_{(\alpha\beta\gamma\delta)}(Y_2) \Bigr) \\
&+P_{(\alpha\beta)}(Y_2) \Bigl(\frac{2}{3} + \frac{5}{6} P_{(\alpha)}(Y_1) + \frac{7}{4} P_{(\alpha\beta)}(Y_1) - \frac{5}{4} P_{(\beta\gamma)}(Y_1) + P_{(\alpha\beta\gamma)}(Y_1) - \frac{2}{3} P_{(\beta\gamma\delta)}(Y_1) + \frac{5}{6} P_{(\alpha\beta\gamma\delta)}(Y_1) \Bigr) \\
&+P_{(\alpha\beta\gamma)}(Y_1) \Bigl(\frac{1}{4} + \frac{5}{4} P_{(\alpha)}(Y_2) + \frac{9}{4} P_{(\alpha\beta)}(Y_2) - \frac{3}{4} P_{(\beta\gamma)}(Y_2) + \frac{5}{4} P_{(\alpha\beta\gamma)}(Y_2) - \frac{1}{4} P_{(\beta\gamma\delta)}(Y_2) + \frac{3}{4} P_{(\alpha\beta\gamma\delta)}(Y_2) \Bigr) \\
&+P_{(\alpha\beta\gamma)}(Y_2) \Bigl(\frac{1}{4} + \frac{5}{4} P_{(\alpha)}(Y_1) + \frac{9}{4} P_{(\alpha\beta)}(Y_1) - \frac{3}{4} P_{(\beta\gamma)}(Y_1) + \frac{5}{4} P_{(\alpha\beta\gamma)}(Y_1) - \frac{1}{4} P_{(\beta\gamma\delta)}(Y_1) + \frac{3}{4} P_{(\alpha\beta\gamma\delta)}(Y_1) \Bigr) \\
&+ \frac{1}{4} P_{(\alpha\beta\gamma\delta)}(Y_1)P_{(\alpha\beta\gamma\delta)}(Y_2) \\
&\text{by } \eqref{beta1}, \eqref{beta2}, \eqref{beta3} \\
\end{align*}

\begin{align*}
= & \Bigl( \frac{1}{2}P_{(\alpha)}(Y_1)+ \frac{5}{4}P_{(\alpha\beta)}(Y_1)+ \frac{3}{4}P_{(\alpha\beta\gamma)}(Y_1) \Bigr) \Bigl( P_{(\alpha\beta)}(Y_2)-P_{(\beta\gamma)}(Y_2) \Bigr)\\
&+ \Bigl( \frac{1}{2}P_{(\alpha)}(Y_2)+ \frac{5}{4}P_{(\alpha\beta)}(Y_2)+ \frac{3}{4}P_{(\alpha\beta\gamma)}(Y_2) \Bigr) \Bigl( P_{(\alpha\beta)}(Y_1)-P_{(\beta\gamma)}(Y_1) \Bigr)\\
&+ \Bigl( \frac{1}{4}P_{(\alpha)}(Y_1)+ \frac{2}{3}P_{(\alpha\beta)}(Y_1)+ \frac{1}{4}P_{(\alpha\beta\gamma)}(Y_1) \Bigr) \Bigl( P_{(\alpha\beta\gamma)}(Y_2)-P_{(\beta\gamma\delta)}(Y_2) \Bigr)\\
&+ \Bigl( \frac{1}{4}P_{(\alpha)}(Y_2)+ \frac{2}{3}P_{(\alpha\beta)}(Y_2)+ \frac{1}{4}P_{(\alpha\beta\gamma)}(Y_2) \Bigr) \Bigl( P_{(\alpha\beta\gamma)}(Y_1)-P_{(\beta\gamma\delta)}(Y_1) \Bigr)\\
&+P_{(\alpha)}(Y_1) \Bigl( \frac{1}{2} + \frac{1}{2} P_{(\alpha\beta\gamma\delta)}(Y_2) \Bigr) +P_{(\alpha)}(Y_2) \Bigl( \frac{1}{2} + \frac{1}{2} P_{(\alpha\beta\gamma\delta)}(Y_1) \Bigr) \\
&+P_{(\alpha\beta)}(Y_1) \Bigl(\frac{2}{3} + \frac{1}{3} P_{(\alpha)}(Y_2) + \frac{1}{2} P_{(\alpha\beta)}(Y_2)  + \frac{1}{3} P_{(\alpha\beta\gamma)}(Y_2) + \frac{5}{6} P_{(\alpha\beta\gamma\delta)}(Y_2) \Bigr) \\
&+P_{(\alpha\beta)}(Y_2) \Bigl(\frac{2}{3} + \frac{1}{3} P_{(\alpha)}(Y_1) + \frac{1}{2} P_{(\alpha\beta)}(Y_1) + \frac{1}{3} P_{(\alpha\beta\gamma)}(Y_1) + \frac{5}{6} P_{(\alpha\beta\gamma\delta)}(Y_1) \Bigr) \\
&+P_{(\alpha\beta\gamma)}(Y_1) \Bigl(\frac{1}{4} + P_{(\alpha)}(Y_2) + \frac{3}{2} P_{(\alpha\beta)}(Y_2) + P_{(\alpha\beta\gamma)}(Y_2) + \frac{3}{4} P_{(\alpha\beta\gamma\delta)}(Y_2) \Bigr) \\
&+P_{(\alpha\beta\gamma)}(Y_2) \Bigl(\frac{1}{4} + P_{(\alpha)}(Y_1) + \frac{3}{2} P_{(\alpha\beta)}(Y_1) + P_{(\alpha\beta\gamma)}(Y_1) + \frac{3}{4} P_{(\alpha\beta\gamma\delta)}(Y_1) \Bigr) \\
&+ \frac{1}{4} P_{(\alpha\beta\gamma\delta)}(Y_1)P_{(\alpha\beta\gamma\delta)}(Y_2) 
\end{align*}
\begin{align*}
= & \Bigl( \frac{1}{2}P_{(\alpha)}(Y_1)+ \frac{5}{4}P_{(\alpha\beta)}(Y_1)+ \frac{3}{4}P_{(\alpha\beta\gamma)}(Y_1) \Bigr) \Bigl( P_{(\alpha\beta)}(Y_2)-P_{(\beta\gamma)}(Y_2) \Bigr)\\
&+ \Bigl( \frac{1}{2}P_{(\alpha)}(Y_2)+ \frac{5}{4}P_{(\alpha\beta)}(Y_2)+ \frac{3}{4}P_{(\alpha\beta\gamma)}(Y_2) \Bigr) \Bigl( P_{(\alpha\beta)}(Y_1)-P_{(\beta\gamma)}(Y_1) \Bigr)\\
&+ \Bigl( \frac{1}{4}P_{(\alpha)}(Y_1)+ \frac{2}{3}P_{(\alpha\beta)}(Y_1)+ \frac{1}{4}P_{(\alpha\beta\gamma)}(Y_1) \Bigr) \Bigl( P_{(\alpha\beta\gamma)}(Y_2)-P_{(\beta\gamma\delta)}(Y_2) \Bigr)\\
&+ \Bigl( \frac{1}{4}P_{(\alpha)}(Y_2)+ \frac{2}{3}P_{(\alpha\beta)}(Y_2)+ \frac{1}{4}P_{(\alpha\beta\gamma)}(Y_2) \Bigr) \Bigl( P_{(\alpha\beta\gamma)}(Y_1)-P_{(\beta\gamma\delta)}(Y_1) \Bigr)\\
&+\frac{1}{2} P_{(\alpha)}(Y_1) + \frac{3}{4} P_{(\alpha\beta)}(Y_1) + \frac{1}{2} P_{(\alpha\beta\gamma)}(Y_1) + \frac{1}{8} P_{(\alpha\beta\gamma\delta)}(Y_1)\\
&+\frac{1}{2} P_{(\alpha)}(Y_2) + \frac{3}{4} P_{(\alpha\beta)}(Y_2) + \frac{1}{2} P_{(\alpha\beta\gamma)}(Y_2) + \frac{1}{8} P_{(\alpha\beta\gamma\delta)}(Y_2)\\
&+P_{(\alpha\beta)}(Y_1) \Bigl(-\frac{1}{12} + \frac{1}{3} P_{(\alpha)}(Y_2) + \frac{1}{2} P_{(\alpha\beta)}(Y_2)  + \frac{1}{3} P_{(\alpha\beta\gamma)}(Y_2) + \frac{1}{12} P_{(\alpha\beta\gamma\delta)}(Y_2) \Bigr) \\
&+P_{(\alpha\beta)}(Y_2) \Bigl(-\frac{1}{12} + \frac{1}{3} P_{(\alpha)}(Y_1) + \frac{1}{2} P_{(\alpha\beta)}(Y_1) + \frac{1}{3} P_{(\alpha\beta\gamma)}(Y_1) + \frac{1}{12} P_{(\alpha\beta\gamma\delta)}(Y_1) \Bigr) \\
&+P_{(\alpha\beta\gamma)}(Y_1) \Bigl(-\frac{1}{4} + P_{(\alpha)}(Y_2) + \frac{3}{2} P_{(\alpha\beta)}(Y_2) + P_{(\alpha\beta\gamma)}(Y_2) + \frac{1}{4} P_{(\alpha\beta\gamma\delta)}(Y_2) \Bigr) \\
&+P_{(\alpha\beta\gamma)}(Y_2) \Bigl(-\frac{1}{4} + P_{(\alpha)}(Y_1) + \frac{3}{2} P_{(\alpha\beta)}(Y_1) + P_{(\alpha\beta\gamma)}(Y_1) + \frac{1}{4} P_{(\alpha\beta\gamma\delta)}(Y_1) \Bigr) \\
&+ P_{(\alpha\beta\gamma\delta)}(Y_1) \Bigl(-\frac{1}{8} + \frac{1}{2} P_{(\alpha)}(Y_2) + \frac{3}{4} P_{(\alpha\beta)}(Y_2) + \frac{1}{2} P_{(\alpha\beta\gamma)}(Y_2) + \frac{1}{8} P_{(\alpha\beta\gamma\delta)}(Y_2) \Bigr) \\
&+ P_{(\alpha\beta\gamma\delta)}(Y_2) \Bigl(-\frac{1}{8} + \frac{1}{2} P_{(\alpha)}(Y_1) + \frac{3}{4} P_{(\alpha\beta)}(Y_1) + \frac{1}{2} P_{(\alpha\beta\gamma)}(Y_1) + \frac{1}{8} P_{(\alpha\beta\gamma\delta)}(Y_1) \Bigr)
\end{align*}
\begin{align*}
= & \Bigl( \frac{1}{2}P_{(\alpha)}(Y_1)+ \frac{5}{4}P_{(\alpha\beta)}(Y_1)+ \frac{3}{4}P_{(\alpha\beta\gamma)}(Y_1) \Bigr) \Bigl( P_{(\alpha\beta)}(Y_2)-P_{(\beta\gamma)}(Y_2) \Bigr)\\
&+ \Bigl( \frac{1}{2}P_{(\alpha)}(Y_2)+ \frac{5}{4}P_{(\alpha\beta)}(Y_2)+ \frac{3}{4}P_{(\alpha\beta\gamma)}(Y_2) \Bigr) \Bigl( P_{(\alpha\beta)}(Y_1)-P_{(\beta\gamma)}(Y_1) \Bigr)\\
&+ \Bigl( \frac{1}{4}P_{(\alpha)}(Y_1)+ \frac{2}{3}P_{(\alpha\beta)}(Y_1)+ \frac{1}{4}P_{(\alpha\beta\gamma)}(Y_1) \Bigr) \Bigl( P_{(\alpha\beta\gamma)}(Y_2)-P_{(\beta\gamma\delta)}(Y_2) \Bigr)\\
&+ \Bigl( \frac{1}{4}P_{(\alpha)}(Y_2)+ \frac{2}{3}P_{(\alpha\beta)}(Y_2)+ \frac{1}{4}P_{(\alpha\beta\gamma)}(Y_2) \Bigr) \Bigl( P_{(\alpha\beta\gamma)}(Y_1)-P_{(\beta\gamma\delta)}(Y_1) \Bigr)\\
&+\frac{1}{8} \Bigl( 4P_{(\alpha)}(Y_1) + 6 P_{(\alpha\beta)}(Y_1) + 4 P_{(\alpha\beta\gamma)}(Y_1) + P_{(\alpha\beta\gamma\delta)}(Y_1) \Bigr)\\
&+\frac{1}{8} \Bigl( 4P_{(\alpha)}(Y_2) + 6 P_{(\alpha\beta)}(Y_2) + 4 P_{(\alpha\beta\gamma)}(Y_2) + P_{(\alpha\beta\gamma\delta)}(Y_2) \Bigr)\\
&+\frac{1}{12} P_{(\alpha\beta)}(Y_1) \Bigl( -1+ 4P_{(\alpha)}(Y_2) + 6 P_{(\alpha\beta)}(Y_2) + 4 P_{(\alpha\beta\gamma)}(Y_2) + P_{(\alpha\beta\gamma\delta)}(Y_2) \Bigr) \\
&+\frac{1}{12} P_{(\alpha\beta)}(Y_2) \Bigl( -1+ 4P_{(\alpha)}(Y_1) + 6 P_{(\alpha\beta)}(Y_1) + 4 P_{(\alpha\beta\gamma)}(Y_1) + P_{(\alpha\beta\gamma\delta)}(Y_1) \Bigr) \\
&+\frac{1}{4} P_{(\alpha\beta\gamma)}(Y_1) \Bigl( -1+ 4P_{(\alpha)}(Y_2) + 6 P_{(\alpha\beta)}(Y_2) + 4 P_{(\alpha\beta\gamma)}(Y_2) + P_{(\alpha\beta\gamma\delta)}(Y_2) \Bigr) \\
&+\frac{1}{4} P_{(\alpha\beta\gamma)}(Y_2) \Bigl(-1+ 4P_{(\alpha)}(Y_1) + 6 P_{(\alpha\beta)}(Y_1) + 4 P_{(\alpha\beta\gamma)}(Y_1) + P_{(\alpha\beta\gamma\delta)}(Y_1) \Bigr) \\
&+\frac{1}{8} P_{(\alpha\beta\gamma\delta)}(Y_1) \Bigl(-1+ 4P_{(\alpha)}(Y_2) + 6 P_{(\alpha\beta)}(Y_2) + 4 P_{(\alpha\beta\gamma)}(Y_2) + P_{(\alpha\beta\gamma\delta)}(Y_2) \Bigr) \\
&+\frac{1}{8} P_{(\alpha\beta\gamma\delta)}(Y_2) \Bigl(-1+ 4P_{(\alpha)}(Y_1) + 6 P_{(\alpha\beta)}(Y_1) + 4 P_{(\alpha\beta\gamma)}(Y_1) + P_{(\alpha\beta\gamma\delta)}(Y_1) \Bigr)
\end{align*}
\begin{align*}
= & \Bigl( \frac{1}{2}P_{(\alpha)}(Y_1)+ \frac{5}{4}P_{(\alpha\beta)}(Y_1)+ \frac{3}{4}P_{(\alpha\beta\gamma)}(Y_1) \Bigr) \Bigl( P_{(\alpha\beta)}(Y_2)-P_{(\beta\gamma)}(Y_2) \Bigr)\\
&+ \Bigl( \frac{1}{2}P_{(\alpha)}(Y_2)+ \frac{5}{4}P_{(\alpha\beta)}(Y_2)+ \frac{3}{4}P_{(\alpha\beta\gamma)}(Y_2) \Bigr) \Bigl( P_{(\alpha\beta)}(Y_1)-P_{(\beta\gamma)}(Y_1) \Bigr)\\
&+ \Bigl( \frac{1}{4}P_{(\alpha)}(Y_1)+ \frac{2}{3}P_{(\alpha\beta)}(Y_1)+ \frac{1}{4}P_{(\alpha\beta\gamma)}(Y_1) \Bigr) \Bigl( P_{(\alpha\beta\gamma)}(Y_2)-P_{(\beta\gamma\delta)}(Y_2) \Bigr)\\
&+ \Bigl( \frac{1}{4}P_{(\alpha)}(Y_2)+ \frac{2}{3}P_{(\alpha\beta)}(Y_2)+ \frac{1}{4}P_{(\alpha\beta\gamma)}(Y_2) \Bigr) \Bigl( P_{(\alpha\beta\gamma)}(Y_1)-P_{(\beta\gamma\delta)}(Y_1) \Bigr)\\
&+\frac{1}{2} \Bigl( p_1 + (1-4p_1) RA(Y_1) \Bigr) +\frac{1}{2} \Bigl( p_2 + (1-4p_2) RA(Y_2) \Bigr)\\
&+\frac{1}{12} P_{(\alpha\beta)}(Y_1) \Bigl( -1+ 4 \Bigl( p_2 + (1-4p_2) RA(Y_2) \Bigr) \Bigr) \\
&+\frac{1}{12} P_{(\alpha\beta)}(Y_2) \Bigl( -1+ 4 \Bigl( p_1 + (1-4p_1) RA(Y_1) \Bigr) \Bigr) \\
&+\frac{1}{4} P_{(\alpha\beta\gamma)}(Y_1) \Bigl( -1+ 4 \Bigl( p_2 + (1-4p_2) RA(Y_2) \Bigr) \Bigr) \\
&+\frac{1}{4} P_{(\alpha\beta\gamma)}(Y_2) \Bigl( -1+ 4 \Bigl( p_1 + (1-4p_1) RA(Y_1) \Bigr) \Bigr) \\
&+\frac{1}{8} P_{(\alpha\beta\gamma\delta)}(Y_1) \Bigl( -1+ 4 \Bigl( p_2 + (1-4p_2) RA(Y_2) \Bigr) \Bigr) \\
&+\frac{1}{8} P_{(\alpha\beta\gamma\delta)}(Y_2) \Bigl( -1+ 4 \Bigl( p_1 + (1-4p_1) RA(Y_1) \Bigr) \Bigr) \\
&\text{by } \eqref{piRA}
\end{align*}
\begin{align*}
= & \Bigl( \frac{1}{2}P_{(\alpha)}(Y_1)+ \frac{5}{4}P_{(\alpha\beta)}(Y_1)+ \frac{3}{4}P_{(\alpha\beta\gamma)}(Y_1) \Bigr) \Bigl( P_{(\alpha\beta)}(Y_2)-P_{(\beta\gamma)}(Y_2) \Bigr)\\
&+ \Bigl( \frac{1}{2}P_{(\alpha)}(Y_2)+ \frac{5}{4}P_{(\alpha\beta)}(Y_2)+ \frac{3}{4}P_{(\alpha\beta\gamma)}(Y_2) \Bigr) \Bigl( P_{(\alpha\beta)}(Y_1)-P_{(\beta\gamma)}(Y_1) \Bigr)\\
&+ \Bigl( \frac{1}{4}P_{(\alpha)}(Y_1)+ \frac{2}{3}P_{(\alpha\beta)}(Y_1)+ \frac{1}{4}P_{(\alpha\beta\gamma)}(Y_1) \Bigr) \Bigl( P_{(\alpha\beta\gamma)}(Y_2)-P_{(\beta\gamma\delta)}(Y_2) \Bigr)\\
&+ \Bigl( \frac{1}{4}P_{(\alpha)}(Y_2)+ \frac{2}{3}P_{(\alpha\beta)}(Y_2)+ \frac{1}{4}P_{(\alpha\beta\gamma)}(Y_2) \Bigr) \Bigl( P_{(\alpha\beta\gamma)}(Y_1)-P_{(\beta\gamma\delta)}(Y_1) \Bigr)\\
&+\frac{1}{2} \Bigl( p_1 + P_1 RA(Y_1) \Bigr) +\frac{1}{2} \Bigl( p_2 + P_2 RA(Y_2) \Bigr)\\
&+ \Bigl( \frac{1}{12} P_{(\alpha\beta)}(Y_1)+\frac{1}{4} P_{(\alpha\beta\gamma)}(Y_1) +\frac{1}{8} P_{(\alpha\beta\gamma\delta)}(Y_1) \Bigr) \Bigl( -1+ 4 \Bigl( p_2 + P_2 RA(Y_2) \Bigr) \Bigr) \\
&+\Bigl( \frac{1}{12} P_{(\alpha\beta)}(Y_2)+\frac{1}{4} P_{(\alpha\beta\gamma)}(Y_2)+\frac{1}{8} P_{(\alpha\beta\gamma\delta)}(Y_2) \Bigr) \Bigl( -1+ 4 \Bigl( p_1 + P_1 RA(Y_1) \Bigr) \Bigr) \\
&\text{by the definition of } P_1 \text{ and } P_2
\end{align*}Then
\begin{align*}
8D(X) =& 8RA(X) - 8 + 24p \\
= & \Bigl( 4P_{(\alpha)}(Y_1)+ 10P_{(\alpha\beta)}(Y_1)+ 6P_{(\alpha\beta\gamma)}(Y_1) \Bigr) \Bigl( P_{(\alpha\beta)}(Y_2)-P_{(\beta\gamma)}(Y_2) \Bigr)\\
&+ \Bigl( 4P_{(\alpha)}(Y_2)+ 10P_{(\alpha\beta)}(Y_2)+ 6P_{(\alpha\beta\gamma)}(Y_2) \Bigr) \Bigl( P_{(\alpha\beta)}(Y_1)-P_{(\beta\gamma)}(Y_1) \Bigr)\\
&+ \Bigl( 2P_{(\alpha)}(Y_1)+ \frac{16}{3}P_{(\alpha\beta)}(Y_1)+ 2P_{(\alpha\beta\gamma)}(Y_1) \Bigr) \Bigl( P_{(\alpha\beta\gamma)}(Y_2)-P_{(\beta\gamma\delta)}(Y_2) \Bigr)\\
&+ \Bigl( 2P_{(\alpha)}(Y_2)+ \frac{16}{3}P_{(\alpha\beta)}(Y_2)+ 2P_{(\alpha\beta\gamma)}(Y_2) \Bigr) \Bigl( P_{(\alpha\beta\gamma)}(Y_1)-P_{(\beta\gamma\delta)}(Y_1) \Bigr)\\
&+ 4 p_1 + 4 P_1 RA(Y_1) -4 +12p  +4 p_2 + 4P_2 RA(Y_2) -4 +12p \\
&+ \Bigl( \frac{2}{3} P_{(\alpha\beta)}(Y_1)+2 P_{(\alpha\beta\gamma)}(Y_1) + P_{(\alpha\beta\gamma\delta)}(Y_1) \Bigr) \Bigl( -1+ 4 p_2 + 4P_2 RA(Y_2) \Bigr) \\
&+\Bigl( \frac{2}{3} P_{(\alpha\beta)}(Y_2)+2 P_{(\alpha\beta\gamma)}(Y_2)+ P_{(\alpha\beta\gamma\delta)}(Y_2) \Bigr) \Bigl( -1+ 4 p_1 + 4P_1 RA(Y_1) \Bigr)
\end{align*}

\begin{align*}
= & \Bigl( 4P_{(\alpha)}(Y_1)+ 10P_{(\alpha\beta)}(Y_1)+ 6P_{(\alpha\beta\gamma)}(Y_1) \Bigr) \Bigl( P_{(\alpha\beta)}(Y_2)-P_{(\beta\gamma)}(Y_2) \Bigr)\\
&+ \Bigl( 4P_{(\alpha)}(Y_2)+ 10P_{(\alpha\beta)}(Y_2)+ 6P_{(\alpha\beta\gamma)}(Y_2) \Bigr) \Bigl( P_{(\alpha\beta)}(Y_1)-P_{(\beta\gamma)}(Y_1) \Bigr)\\
&+ \Bigl( 2P_{(\alpha)}(Y_1)+ \frac{16}{3}P_{(\alpha\beta)}(Y_1)+ 2P_{(\alpha\beta\gamma)}(Y_1) \Bigr) \Bigl( P_{(\alpha\beta\gamma)}(Y_2)-P_{(\beta\gamma\delta)}(Y_2) \Bigr)\\
&+ \Bigl( 2P_{(\alpha)}(Y_2)+ \frac{16}{3}P_{(\alpha\beta)}(Y_2)+ 2P_{(\alpha\beta\gamma)}(Y_2) \Bigr) \Bigl( P_{(\alpha\beta\gamma)}(Y_1)-P_{(\beta\gamma\delta)}(Y_1) \Bigr)\\
&+4 P_1 RA(Y_1) +4P_1 (-1 + 3p_1^{'}) + 4P_2 RA(Y_2) +4P_2 (-1 + 3p_2^{'}) \\
&+ \Bigl( \frac{2}{3} P_{(\alpha\beta)}(Y_1)+2 P_{(\alpha\beta\gamma)}(Y_1) + P_{(\alpha\beta\gamma\delta)}(Y_1) \Bigr) \Bigl( 3P + 4P_2 (-1 + 3p_2^{'}) + 4P_2 RA(Y_2) \Bigr) \\
&+\Bigl( \frac{2}{3} P_{(\alpha\beta)}(Y_2)+2 P_{(\alpha\beta\gamma)}(Y_2)+ P_{(\alpha\beta\gamma\delta)}(Y_2) \Bigr) \Bigl( 3P + 4P_1 (-1 + 3p_1^{'}) + 4P_1 RA(Y_1) \Bigr) \\
&\text{by } \eqref{4pi1}, \eqref{4pi2}
\end{align*}

\begin{align*}
= & \Bigl( 4P_{(\alpha)}(Y_1)+ 10P_{(\alpha\beta)}(Y_1)+ 6P_{(\alpha\beta\gamma)}(Y_1) \Bigr) \Bigl( P_{(\alpha\beta)}(Y_2)-P_{(\beta\gamma)}(Y_2) \Bigr)\\
&+ \Bigl( 4P_{(\alpha)}(Y_2)+ 10P_{(\alpha\beta)}(Y_2)+ 6P_{(\alpha\beta\gamma)}(Y_2) \Bigr) \Bigl( P_{(\alpha\beta)}(Y_1)-P_{(\beta\gamma)}(Y_1) \Bigr)\\
&+ \Bigl( 2P_{(\alpha)}(Y_1)+ \frac{16}{3}P_{(\alpha\beta)}(Y_1)+ 2P_{(\alpha\beta\gamma)}(Y_1) \Bigr) \Bigl( P_{(\alpha\beta\gamma)}(Y_2)-P_{(\beta\gamma\delta)}(Y_2) \Bigr)\\
&+ \Bigl( 2P_{(\alpha)}(Y_2)+ \frac{16}{3}P_{(\alpha\beta)}(Y_2)+ 2P_{(\alpha\beta\gamma)}(Y_2) \Bigr) \Bigl( P_{(\alpha\beta\gamma)}(Y_1)-P_{(\beta\gamma\delta)}(Y_1) \Bigr)\\
&+4 P_1 (RA(Y_1) -1 + 3p_1^{'}) + 4P_2 (RA(Y_2) -1 + 3p_2^{'}) \\
&+ \Bigl( \frac{2}{3} P_{(\alpha\beta)}(Y_1)+2 P_{(\alpha\beta\gamma)}(Y_1) + P_{(\alpha\beta\gamma\delta)}(Y_1) \Bigr) \Bigl( 3P + 4P_2 (-1 + 3p_2^{'} + RA(Y_2)) \Bigr) \\
&+\Bigl( \frac{2}{3} P_{(\alpha\beta)}(Y_2)+2 P_{(\alpha\beta\gamma)}(Y_2)+ P_{(\alpha\beta\gamma\delta)}(Y_2) \Bigr) \Bigl( 3P + 4P_1 (-1 + 3p_1^{'}+ RA(Y_1)) \Bigr)
\end{align*}
\begin{align*}
= & \Bigl( 4P_{(\alpha)}(Y_1)+ 10P_{(\alpha\beta)}(Y_1)+ 6P_{(\alpha\beta\gamma)}(Y_1) \Bigr) \Bigl( P_{(\alpha\beta)}(Y_2)-P_{(\beta\gamma)}(Y_2) \Bigr)\\
&+ \Bigl( 4P_{(\alpha)}(Y_2)+ 10P_{(\alpha\beta)}(Y_2)+ 6P_{(\alpha\beta\gamma)}(Y_2) \Bigr) \Bigl( P_{(\alpha\beta)}(Y_1)-P_{(\beta\gamma)}(Y_1) \Bigr)\\
&+ \Bigl( 2P_{(\alpha)}(Y_1)+ \frac{16}{3}P_{(\alpha\beta)}(Y_1)+ 2P_{(\alpha\beta\gamma)}(Y_1) \Bigr) \Bigl( P_{(\alpha\beta\gamma)}(Y_2)-P_{(\beta\gamma\delta)}(Y_2) \Bigr)\\
&+ \Bigl( 2P_{(\alpha)}(Y_2)+ \frac{16}{3}P_{(\alpha\beta)}(Y_2)+ 2P_{(\alpha\beta\gamma)}(Y_2) \Bigr) \Bigl( P_{(\alpha\beta\gamma)}(Y_1)-P_{(\beta\gamma\delta)}(Y_1) \Bigr)\\
&+4 P_1 D_1 + 4P_2 D_2 \\
&+ \Bigl( \frac{2}{3} P_{(\alpha\beta)}(Y_1)+2 P_{(\alpha\beta\gamma)}(Y_1) + P_{(\alpha\beta\gamma\delta)}(Y_1) \Bigr) \Bigl( 3P + 4P_2 D_2 \Bigr) \\
&+\Bigl( \frac{2}{3} P_{(\alpha\beta)}(Y_2)+2 P_{(\alpha\beta\gamma)}(Y_2)+ P_{(\alpha\beta\gamma\delta)}(Y_2) \Bigr) \Bigl( 3P + 4P_1 D_1 \Bigr) \\
&\text{by the definition of } D_1, D_2 
\end{align*}

\begin{align*}
= & \Bigl( 4P_{(\alpha)}(Y_1)+ 10P_{(\alpha\beta)}(Y_1)+ 6P_{(\alpha\beta\gamma)}(Y_1) \Bigr) \Bigl( P_{\alpha\beta}(Y_2)-P_{\beta\gamma}(Y_2) \Bigr)P_2 \\
&+ \Bigl( 4P_{(\alpha)}(Y_2)+ 10P_{(\alpha\beta)}(Y_2)+ 6P_{(\alpha\beta\gamma)}(Y_2) \Bigr) \Bigl( P_{\alpha\beta}(Y_1)-P_{\beta\gamma}(Y_1) \Bigr)P_1\\
&+ \Bigl( 2P_{(\alpha)}(Y_1)+ \frac{16}{3}P_{(\alpha\beta)}(Y_1)+ 2P_{(\alpha\beta\gamma)}(Y_1) \Bigr) \Bigl( P_{\alpha\beta\gamma}(Y_2)-P_{\beta\gamma\delta}(Y_2) \Bigr) P_2\\
&+ \Bigl( 2P_{(\alpha)}(Y_2)+ \frac{16}{3}P_{(\alpha\beta)}(Y_2)+ 2P_{(\alpha\beta\gamma)}(Y_2) \Bigr) \Bigl( P_{\alpha\beta\gamma}(Y_1)-P_{\beta\gamma\delta}(Y_1) \Bigr)P_1\\
&+4 P_1 D_1 + 4P_2 D_2 \\
&+ \Bigl( \frac{2}{3} P_{(\alpha\beta)}(Y_1)+2 P_{(\alpha\beta\gamma)}(Y_1) + P_{(\alpha\beta\gamma\delta)}(Y_1) \Bigr) \Bigl( 3P + 4P_2 D_2 \Bigr) \\
&+\Bigl( \frac{2}{3} P_{(\alpha\beta)}(Y_2)+2 P_{(\alpha\beta\gamma)}(Y_2)+ P_{(\alpha\beta\gamma\delta)}(Y_2) \Bigr) \Bigl( 3P + 4P_1 D_1 \Bigr) \\
&\text{by } \eqref{4(ab)(bc)}, \eqref{4(abc)(bcd)}
\end{align*}

\begin{lemma} \label{lemmageq3}
For any rooted binary phylogenetic tree and the $N_3$-model we have that
\begin{align*}
&P_{\alpha}(X) \geq P_{\beta}(X), \\
&P_{\alpha\beta}(X) \geq P_{\beta\gamma}(X).
\end{align*}
\end{lemma}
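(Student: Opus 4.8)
The plan is to run the same induction on $n$ that establishes Lemma~\ref{lemmageq}, adapting every recursion to the three-state alphabet $\AL=\{\alpha,\beta,\gamma\}$. Throughout I would set $P_i \coloneqq 1-3p_i$, the natural $N_3$-analog of $P_i=1-4p_i$, which is nonnegative because $p_i\le\tfrac13$, and I would use the symmetries $P_\beta(X)=P_\gamma(X)$ and $P_{\alpha\beta}(X)=P_{\alpha\gamma}(X)$ exactly as in the four-state case. Since the only available sets are the singletons $\{\alpha\},\{\beta\},\{\gamma\}$, the pairs $\{\alpha\beta\},\{\alpha\gamma\},\{\beta\gamma\}$, and the full set $\{\alpha\beta\gamma\}$, there are only five distinct transferred quantities $P_{(\alpha)}(Y_i)$, $P_{(\beta)}(Y_i)$, $P_{(\alpha\beta)}(Y_i)$, $P_{(\beta\gamma)}(Y_i)$, $P_{(\alpha\beta\gamma)}(Y_i)$ to keep track of.

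For the base case $n=2$ both pending subtrees are single leaves, so $p=p_1=p_2$ and a direct computation gives $P_\alpha(X)=(1-2p)^2$, $P_\beta(X)=p^2$, $P_{\alpha\beta}(X)=2(1-2p)p$ and $P_{\beta\gamma}(X)=2p^2$. Hence
$$P_\alpha(X)-P_\beta(X)=(1-2p)^2-p^2=(1-p)(1-3p)\ge 0$$
and
$$P_{\alpha\beta}(X)-P_{\beta\gamma}(X)=2(1-2p)p-2p^2=2p(1-3p)\ge 0,$$
both nonnegative because $p\le\tfrac13$, which settles the base case.

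For the inductive step I would first derive the $N_3$ transfer recursions by the law of total probability together with the symmetry of the model, namely
\begin{align*}
&P_{(\alpha)}(Y_i)=(1-2p_i)P_\alpha(Y_i)+2p_iP_\beta(Y_i), \\
&P_{(\beta)}(Y_i)=(1-p_i)P_\beta(Y_i)+p_iP_\alpha(Y_i), \\
&P_{(\alpha\beta)}(Y_i)=(1-p_i)P_{\alpha\beta}(Y_i)+p_iP_{\beta\gamma}(Y_i), \\
&P_{(\beta\gamma)}(Y_i)=(1-2p_i)P_{\beta\gamma}(Y_i)+2p_iP_{\alpha\beta}(Y_i), \\
&P_{(\alpha\beta\gamma)}(Y_i)=P_{\alpha\beta\gamma}(Y_i).
\end{align*}
These immediately yield the two difference identities
\begin{align*}
&P_{(\alpha)}(Y_i)-P_{(\beta)}(Y_i)=P_i\bigl(P_\alpha(Y_i)-P_\beta(Y_i)\bigr), \\
&P_{(\alpha\beta)}(Y_i)-P_{(\beta\gamma)}(Y_i)=P_i\bigl(P_{\alpha\beta}(Y_i)-P_{\beta\gamma}(Y_i)\bigr),
\end{align*}
which are the $N_3$-versions of \eqref{4(a)(b)} and \eqref{4(ab)(bc)}, together with the product identity corresponding to \eqref{4(a)(a)(b)(b)} that rewrites $P_{(A_1)}(Y_1)P_{(A_2)}(Y_2)-P_{(B_1)}(Y_1)P_{(B_2)}(Y_2)$ as a sum of manifestly nonnegative terms whenever the inductive inequalities hold on $Y_1$ and $Y_2$. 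I would then expand $P_\alpha(X)$ and $P_\beta(X)$, and likewise $P_{\alpha\beta}(X)$ and $P_{\beta\gamma}(X)$, as sums over the pairs $(A,B)$ with $A*B$ equal to the target set, substitute the recursions, and regroup $P_\alpha(X)-P_\beta(X)$ and $P_{\alpha\beta}(X)-P_{\beta\gamma}(X)$ into the same two building blocks: products handled by the analog of \eqref{4(a)(a)(b)(b)}, and single differences multiplied by the nonnegative factors $P_i$ and probabilities $P_{(\cdot)}(Y_i)$.

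The main obstacle is the term-by-term regrouping rather than any conceptual difficulty: one must check that every residual difference surviving the substitution is of the form $P_\alpha(Y_i)-P_\beta(Y_i)$ or $P_{\alpha\beta}(Y_i)-P_{\beta\gamma}(Y_i)$, so that exactly the two inductive hypotheses suffice. The decisive simplification relative to the four-state proof is that the unique three-element set $\{\alpha\beta\gamma\}$ is the whole alphabet and hence symmetric, with $P_{(\alpha\beta\gamma)}(Y_i)=P_{\alpha\beta\gamma}(Y_i)$; consequently there is no ``wrong'' triple set and no need for a third inequality analogous to $P_{\alpha\beta\gamma}(X)\ge P_{\beta\gamma\delta}(X)$. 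The two inequalities therefore close under the induction on their own, and the nonnegativity of $P_\alpha(X)-P_\beta(X)$ and $P_{\alpha\beta}(X)-P_{\beta\gamma}(X)$ follows, completing the proof.
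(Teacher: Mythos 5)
Your proposal is correct and follows essentially the same route as the paper's own proof: the identical base case computations $(1-2p)^2-p^2=(1-p)(1-3p)$ and $2p(1-3p)$, the same five transfer recursions for $P_{(\cdot)}(Y_i)$, the same difference identities with factor $1-3p_i$, the same product identity, and the same regrouping of the two differences into nonnegative terms. The observation that the full set $\{\alpha,\beta,\gamma\}$ is symmetric, so only two inequalities are needed to close the induction, is exactly the simplification the paper exploits.
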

\noindent
Note that Lemma \ref{lemmageq3} does also not require the underlying tree to be ultrametric.
\begin{proof}
To prove Lemma \ref{lemmageq3} we show that for any rooted binary phylogenetic tree $T$ under a symmetric 3-state substitution model
\begin{align}
&P_{\alpha}(X) \geq P_{\beta}(X), \label{3lemma1} \\
&P_{\alpha\beta}(X) \geq P_{\beta\gamma}(X), \label{3lemma2}
\end{align} by induction on $n$. For $n=2$ the subtrees $Y_1$ and $Y_2$ both contain one leaf, and hence $p=p_1=p_2$ leads to
\begin{align*}
&P_{\alpha}(X)= (1-2p)^2, \\
&P_{\beta}(X)= p^2, \\
&P_{\alpha\beta}(X)= 2(1-2p)p, \\
&P_{\beta\gamma}(X)= 2p^2.
\end{align*} Therefore
\begin{align*}
P_{\alpha}(X) - P_{\beta}(X) &= (1-2p)^2 - p^2 = 1-4p+4p^2-p^2= 1-4p+3p^2 \\
&=\underbrace{(1-3p)}_{\geq 0} \underbrace{(1-p)}_{\geq 0} \geq 0 \text{ as } p \leq \frac{1}{3}.
\end{align*}
Moreover 
\begin{align*}
P_{\alpha\beta}(X) - P_{\beta\gamma}(X) &= 2(1-2p)p - 2p^2 = 2p \underbrace{(1-3p)}_{\geq 0} \geq 0 \text{ as } p \leq \frac{1}{3},
\end{align*}
which completes the base case of the induction. For the inductive step we first define some recursions similar to \eqref{w1}, \eqref{w2}, \eqref{w3}, \eqref{w4} and \eqref{w5}:
\begin{align}
&P_{(\alpha)}(Y_i) =(1-2p_i) P_{\alpha}(Y_i) + 2p_i P_{\beta}(Y_i), \label{w13} \\
&P_{(\beta)}(Y_i) = (1-p_i) P_{\beta}(Y_i) + p_i P_{\alpha}(Y_i) = P_{(\gamma)}(Y_i), \label{w23} \\
&P_{(\alpha\beta)}(Y_i) = (1-p_i) P_{\alpha\beta}(Y_i) + p_i P_{\beta\gamma}(Y_i)=P_{(\alpha\gamma)}(Y_i), \label{w33} \\
&P_{(\beta\gamma)}(Y_i) = (1-2p_i) P_{\beta\gamma}(Y_i) + 2p_i P_{\alpha\beta}(Y_i), \label{w43} \\
&P_{(\alpha\beta\gamma)}(Y_i) =P_{\alpha\beta\gamma}(Y_i), \label{w53} 
\end{align} With \eqref{w13}, \eqref{w23}, \eqref{w33}, \eqref{w43} and \eqref{w53} we therefore have:
\begin{align}
P_{\alpha}(X)= &P_{(\alpha)}(Y_1) P_{(\alpha)}(Y_2) + 2P_{(\alpha)}(Y_1) P_{(\alpha\beta)}(Y_2)  + 2P_{(\alpha\beta)}(Y_1) P_{(\alpha)}(Y_2)  \nonumber \\
&+ 2P_{(\alpha\beta)}(Y_1) P_{(\alpha\beta)}(Y_2) + P_{(\alpha)}(Y_1) P_{(\alpha\beta\gamma)}(Y_2) + P_{(\alpha\beta\gamma)}(Y_1) P_{(\alpha)}(Y_2)
\end{align}
\begin{align}
P_{\beta}(X)= &P_{(\beta)}(Y_1) P_{(\beta)}(Y_2) + P_{(\beta)}(Y_1) P_{(\alpha\beta)}(Y_2) + P_{(\alpha\beta)}(Y_1) P_{(\beta)}(Y_2) \nonumber \\
&+ P_{(\beta)}(Y_1) P_{(\beta\gamma)}(Y_2) + P_{(\beta\gamma)}(Y_1) P_{(\beta)}(Y_2) + P_{(\alpha\beta)}(Y_1) P_{(\beta\gamma)}(Y_2) \nonumber \\
&+ P_{(\beta\gamma)}(Y_1) P_{(\alpha\beta)}(Y_2) + P_{(\beta)}(Y_1) P_{(\alpha\beta\gamma)}(Y_2) + P_{(\alpha\beta\gamma)}(Y_1) P_{(\beta)}(Y_2)
\end{align}
\begin{align}
P_{\alpha\beta}(X) =&P_{(\alpha)}(Y_1) P_{(\beta)}(Y_2) + P_{(\beta)}(Y_1) P_{(\alpha)}(Y_2)+ P_{(\alpha\beta)}(Y_1) P_{(\alpha\beta)}(Y_2) \nonumber \\
&+ P_{(\alpha\beta)}(Y_1) P_{(\alpha\beta\gamma)}(Y_2) + P_{(\alpha\beta\gamma)}(Y_1) P_{(\alpha\beta)}(Y_2)
\end{align}
\begin{align}
P_{\beta\gamma}(X) =&2P_{(\beta)}(Y_1) P_{(\beta)}(Y_2) + P_{(\beta\gamma)}(Y_1) P_{(\beta\gamma)}(Y_2) \nonumber \\
&+ P_{(\beta\gamma)}(Y_1) P_{(\alpha\beta\gamma)}(Y_2) + P_{(\alpha\beta\gamma)}(Y_1) P_{(\beta\gamma)}(Y_2)
\end{align}
Moreover we have that for $i \in \{1,2\}$
\begin{align}
P_{(\alpha)}(Y_i)-P_{(\beta)}(Y_i)
&= (1-2p_i) P_{\alpha}(Y_i) + 2p_i P_{\beta}(Y_i) - (1-p_i) P_{\beta}(Y_i) - p_i P_{\alpha}(Y_i) \nonumber \\
&\qquad \text{by } \eqref{w13}, \eqref{w23} \nonumber \\
&=(1-3p_i) P_{\alpha}(Y_i) - (1-3p_i) P_{\beta}(Y_i) \nonumber \\
&=(1-3p_i) \Bigl( P_{\alpha}(Y_i)-P_{\beta}(Y_i) \Bigr) \label{4(a)(b)3states}
\end{align} and thus
\begin{align}
P_{(\alpha)}(Y_i) = P_{(\beta)}(Y_i) + (1-3p_i) \Bigl( P_{\alpha}(Y_i)-P_{\beta}(Y_i) \Bigr). \label{4(a)3states}
\end{align}
In the same manner by \eqref{w33} and \eqref{w43} we can see that
\begin{align}
&P_{(\alpha\beta)}(Y_i)-P_{(\beta\gamma)}(Y_i)= (1-3p_i) \Bigl( P_{\alpha\beta}(Y_i)-P_{\beta\gamma}(Y_i) \Bigr), \label{4(ab)(bc)3states}.
\end{align} Therefore
\begin{align}
&P_{(\alpha\beta)}(Y_i) = P_{(\beta\gamma)}(Y_i) + (1-3p_i) \Bigl( P_{\alpha\beta}(Y_i)-P_{\beta\gamma}(Y_i) \Bigr). \label{4(ab)3states}
\end{align} Additionally we have the following: choose sets $A_1, A_2$ from $\{\{\alpha\},\{\alpha\beta\}\}$ and $B_1, B_2$ from $\{\{\beta\},\{\beta\gamma\}\}$ such that for $i \in \{1,2\}$ $|A_i|=|B_i|$, respectively. Then we have that
\begin{align}
&P_{(A_1)}(Y_1)P_{(A_2)}(Y_2)-P_{(B_1)}(Y_1)P_{(B_2)}(Y_2) \nonumber \nonumber \\
= &\Bigl( P_{(B_1)}(Y_1) + (1-3p_1) \Bigl( P_{A_1}(Y_1)-P_{B_1}(Y_1) \Bigr) \Bigr)
\Bigl( P_{(B_2)}(Y_2) + (1-3p_2) \Bigl( P_{A_2}(Y_2)-P_{B_2}(Y_2) \Bigr) \Bigr) \nonumber \\
&- P_{(B_1)}(Y_1)P_{(B_2)}(Y_2) \nonumber \\
&\text{by } \eqref{4(a)3states} \text{ or } \eqref{4(ab)3states} \nonumber \\
= &P_{(B_1)}(Y_1)P_{(B_2)}(Y_2) \nonumber \\
&+ P_{(B_1)}(Y_1)(1-3p_2) \Bigl( P_{A_2}(Y_2)-P_{B_2}(Y_2) \Bigr) + P_{(B_2)}(Y_2) (1-3p_1) \Bigl( P_{A_1}(Y_1)-P_{B_1}(Y_1) \Bigr) \nonumber \\
&+ (1-3p_1)(1-3p_2) \Bigl( P_{A_1}(Y_1)-P_{B_1}(Y_1) \Bigr) \Bigl( P_{A_2}(Y_2)-P_{B_2}(Y_2) \Bigr) - P_{(B_1)}(Y_1)P_{(B_2)}(Y_2) \nonumber \\
= &P_{(B_1)}(Y_1)(1-3p_2) \Bigl( P_{A_2}(Y_2)-P_{B_2}(Y_2) \Bigr) + P_{(B_2)}(Y_2) (1-3p_1) \Bigl( P_{A_1}(Y_1)-P_{B_1}(Y_1) \Bigr) \nonumber \\
&+ (1-3p_1)(1-3p_2) \Bigl( P_{A_1}(Y_1)-P_{B_1}(Y_1) \Bigr) \Bigl( P_{A_2}(Y_2)-P_{B_2}(Y_2) \Bigr) . \label{3states4(a)(a)(b)(b)3states}
\end{align} Now suppose that $T$ has $n$ taxa and that \eqref{3lemma1} and \eqref{3lemma2} are true for all trees having fewer that $n$ taxa. Note that therefore \eqref{3states4(a)(a)(b)(b)3states} is non-negative, since $Y_1$ and $Y_2$ contain both fewer than than $n$ taxa. Then
\begin{align*}
&P_{\alpha}(X) - P_{\beta}(X) \\
=&P_{(\alpha)}(Y_1) P_{(\alpha)}(Y_2) + 2P_{(\alpha)}(Y_1) P_{(\alpha\beta)}(Y_2)  + 2P_{(\alpha\beta)}(Y_1) P_{(\alpha)}(Y_2)  \\
&+ 2P_{(\alpha\beta)}(Y_1) P_{(\alpha\beta)}(Y_2) + P_{(\alpha)}(Y_1) P_{(\alpha\beta\gamma)}(Y_2) + P_{(\alpha\beta\gamma)}(Y_1) P_{(\alpha)}(Y_2) \\
&- P_{(\beta)}(Y_1) P_{(\beta)}(Y_2) - P_{(\beta)}(Y_1) P_{(\alpha\beta)}(Y_2) - P_{(\alpha\beta)}(Y_1) P_{(\beta)}(Y_2) \\
&- P_{(\beta)}(Y_1) P_{(\beta\gamma)}(Y_2) - P_{(\beta\gamma)}(Y_1) P_{(\beta)}(Y_2) - P_{(\alpha\beta)}(Y_1) \cdot P_{(\beta\gamma)}(Y_2) \\
&- P_{(\beta\gamma)}(Y_1) P_{(\alpha\beta)}(Y_2) - P_{(\beta)}(Y_1) P_{(\alpha\beta\gamma)}(Y_2) - P_{(\alpha\beta\gamma)}(Y_1) P_{(\beta)}(Y_2) \\
=&P_{(\alpha)}(Y_1) P_{(\alpha)}(Y_2) - P_{(\beta)}(Y_1) P_{(\beta)}(Y_2) \\
&+ P_{(\alpha\beta)}(Y_2) \Bigl( P_{(\alpha)}(Y_1) - P_{(\beta)}(Y_1) \Bigr) + P_{(\alpha\beta)}(Y_1) \Bigl( P_{(\alpha)}(Y_2) - P_{(\beta)}(Y_2) \Bigr) \\
&+ P_{(\alpha)}(Y_1) P_{(\alpha\beta)}(Y_2) - P_{(\beta)}(Y_1) P_{(\beta\gamma)}(Y_2) \\
&+ P_{(\alpha\beta)}(Y_1) P_{(\alpha)}(Y_2) - P_{(\beta\gamma)}(Y_1) P_{(\beta)}(Y_2) \\
&+ P_{(\alpha\beta)}(Y_2) \Bigl( P_{(\alpha\beta)}(Y_1) - P_{(\beta\gamma)}(Y_1) \Bigr) + P_{(\alpha\beta)}(Y_1) \Bigl( P_{(\alpha\beta)}(Y_2) - P_{(\beta\gamma)}(Y_2) \Bigr) \\
&+ P_{(\alpha\beta\gamma)}(Y_2) \Bigl( P_{(\alpha)}(Y_1) - P_{(\beta)}(Y_1) \Bigr) + P_{(\alpha\beta\gamma)}(Y_1) \Bigl( P_{(\alpha)}(Y_2) - P_{(\beta)}(Y_2) \Bigr) \\
= &P_{(\beta)}(Y_1)(1-3p_2) \Bigl( P_{\alpha}(Y_2)-P_{\beta}(Y_2) \Bigr) + P_{(\beta)}(Y_2) (1-3p_1) \Bigl( P_{\alpha}(Y_1)-P_{\beta}(Y_1) \Bigr)  \nonumber \\
&+ (1-3p_1)(1-3p_2) \Bigl( P_{\alpha}(Y_1)-P_{\beta}(Y_1) \Bigr) \Bigl( P_{\alpha}(Y_2)-P_{\beta}(Y_2) \Bigr)\\
&+ P_{(\alpha\beta)}(Y_2) (1-3p_1) \Bigl( P_{\alpha}(Y_1)-P_{\beta}(Y_1) \Bigr) + P_{(\alpha\beta)}(Y_1) (1-3p_2) \Bigl( P_{\alpha}(Y_2)-P_{\beta}(Y_2) \Bigr) \\
&+ P_{(\beta)}(Y_1)(1-3p_2) \Bigl( P_{\alpha\beta}(Y_2)-P_{\beta\gamma}(Y_2) \Bigr) + P_{(\beta\gamma)}(Y_2) (1-3p_1) \Bigl( P_{\alpha}(Y_1)-P_{\beta}(Y_1) \Bigr)  \nonumber \\
&+ (1-3p_1)(1-3p_2) \Bigl( P_{\alpha}(Y_1)-P_{\beta}(Y_1) \Bigr) \Bigl( P_{\alpha\beta}(Y_2)-P_{\beta\gamma}(Y_2) \Bigr)\\
&+ P_{(\beta\gamma)}(Y_1)(1-3p_2) \Bigl( P_{\alpha}(Y_2)-P_{\beta}(Y_2) \Bigr) + P_{(\beta)}(Y_2) (1-3p_1) \Bigl( P_{\alpha\beta}(Y_1)-P_{\beta\gamma}(Y_1) \Bigr)  \nonumber \\
&+ (1-3p_1)(1-3p_2) \Bigl( P_{\alpha\beta}(Y_1)-P_{\beta\gamma}(Y_1) \Bigr) \Bigl( P_{\alpha}(Y_2)-P_{\beta}(Y_2) \Bigr) \\
&+ P_{(\alpha\beta)}(Y_2) (1-3p_1) \Bigl( P_{\alpha\beta}(Y_1)-P_{\beta\gamma}(Y_1) \Bigr) + P_{(\alpha\beta)}(Y_1) (1-3p_2) \Bigl( P_{\alpha\beta}(Y_2)-P_{\beta\gamma}(Y_2) \Bigr) \\
&+ P_{(\alpha\beta\gamma)}(Y_2) (1-3p_1) \Bigl( P_{\alpha}(Y_1)-P_{\beta}(Y_1) \Bigr) + P_{(\alpha\beta\gamma)}(Y_1) (1-3p_2) \Bigl( P_{\alpha}(Y_2)-P_{\beta}(Y_2) \Bigr) \\
&\text{by } \eqref{4(a)(b)3states}, \eqref{4(ab)(bc)3states}, \eqref{3states4(a)(a)(b)(b)3states}
\end{align*}
By the inductive assumption this term is non-negative, and therefore concludes the proof for $P_{\alpha}(X) \geq P_{\beta}(X)$. We now proceed with the second part of Lemma \ref{lemmageq3}.
\begin{align*}
&P_{\alpha\beta}(X) - P_{\beta\gamma}(X) \\
=&P_{(\alpha)}(Y_1) P_{(\beta)}(Y_2) + P_{(\beta)}(Y_1) P_{(\alpha)}(Y_2)+ P_{(\alpha\beta)}(Y_1) P_{(\alpha\beta)}(Y_2) \\
&+ P_{(\alpha\beta)}(Y_1) P_{(\alpha\beta\gamma)}(Y_2) + P_{(\alpha\beta\gamma)}(Y_1) P_{(\alpha\beta)}(Y_2) - 2P_{(\beta)}(Y_1) P_{(\beta)}(Y_2) \\
&- P_{(\beta\gamma)}(Y_1) P_{(\beta\gamma)}(Y_2) - P_{(\beta\gamma)}(Y_1) P_{(\alpha\beta\gamma)}(Y_2) - P_{(\alpha\beta\gamma)}(Y_1) P_{(\beta\gamma)}(Y_2) \\
= &P_{(\beta)}(Y_2) \Bigl( P_{(\alpha)}(Y_1) - P_{(\beta)}(Y_1) \Bigr) + P_{(\beta)}(Y_1) \Bigl( P_{(\alpha)}(Y_2) - P_{(\beta)}(Y_2) \Bigr) \\
&+ P_{(\alpha\beta)}(Y_1) P_{(\alpha\beta)}(Y_2) - P_{(\beta\gamma)}(Y_1) P_{(\beta\gamma)}(Y_2) \\
&+ P_{(\alpha\beta\gamma)}(Y_2) \Bigl( P_{(\alpha\beta)}(Y_1) - P_{(\beta\gamma)}(Y_1) \Bigr) + P_{(\alpha\beta\gamma)}(Y_1) \Bigl( P_{(\alpha\beta)}(Y_2) - P_{(\beta\gamma)}(Y_2) \Bigr) \\
= &P_{(\beta)}(Y_2) (1-3p_1) \Bigl( P_{\alpha}(Y_1) - P_{\beta}(Y_1) \Bigr) + P_{(\beta)}(Y_1) (1-3p_2) \Bigl( P_{\alpha}(Y_2) - P_{\beta}(Y_2) \Bigr) \\
&+ P_{(\beta\gamma)}(Y_1)(1-3p_2) \Bigl( P_{\alpha\beta}(Y_2)-P_{\beta\gamma}(Y_2) \Bigr) + P_{(\beta\gamma)}(Y_2) (1-3p_1) \Bigl( P_{\alpha\beta}(Y_1)-P_{\beta\gamma}(Y_1) \Bigr)  \nonumber \\
&+ (1-3p_1)(1-3p_2) \Bigl( P_{\alpha\beta}(Y_1)-P_{\beta\gamma}(Y_1) \Bigr) \Bigl( P_{\alpha\beta}(Y_2)-P_{\beta\gamma}(Y_2) \Bigr)\\
&+ P_{(\alpha\beta\gamma)}(Y_2) (1-3p_1) \Bigl( P_{\alpha\beta}(Y_1) - P_{\beta\gamma}(Y_1) \Bigr) + P_{(\alpha\beta\gamma)}(Y_1) (1-3p_2) \Bigl( P_{\alpha\beta}(Y_2) - P_{\beta\gamma}(Y_2) \Bigr)\\
&\text{by } \eqref{4(a)(b)3states}, \eqref{4(ab)(bc)3states}, \eqref{3states4(a)(a)(b)(b)3states}
\end{align*}
By inductive assumption $P_{\alpha\beta}(X) - P_{\beta\gamma}(X)$ is non-negative, and therefore concludes the proof of the second part of Lemma \ref{lemmageq3}.
\end{proof}
\begin{corollary}\label{twoleaves3states}
Let $T$ be a rooted binary ultrametric phylogenetic tree on taxon set $X$ with $|X|=2$. Let $p$ denote the probability of change from the root to any leaf under the $N_3$-model. Then,
the reconstruction accuracy for ancestral state reconstruction using the Fitch algorithm is given by
$$RA(X)=1-2p.$$
\end{corollary}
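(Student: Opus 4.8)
The plan is to establish this exactly as Corollary \ref{twoleaves} was obtained for the $N_4$-model, namely by a direct substitution into the three-state reconstruction accuracy formula. I would start from \eqref{ra3states}, which states that
\begin{align*}
RA(X) = P_{\alpha}(X) + P_{\alpha\beta}(X) + \frac{1}{3} P_{\alpha\beta\gamma}(X).
\end{align*}
Since $|X| = 2$, each leaf is assigned a singleton by the Fitch algorithm, and the parsimony operation applied to two singletons $A$ and $B$ yields either $A \cap B$ (a singleton, when they agree) or $A \cup B$ (a two-element set, when they differ). Hence the root can never be assigned a set of size three, so $P_{\alpha\beta\gamma}(X) = 0$, and the last term vanishes.

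Next I would import the two remaining quantities directly from the base case of the induction in the proof of Lemma \ref{lemmageq3}, where for $n = 2$ it was already computed that $P_{\alpha}(X) = (1-2p)^2$ and $P_{\alpha\beta}(X) = 2(1-2p)p$, using $p = p_1 = p_2$ for an ultrametric two-leaf tree. Substituting these into the simplified formula gives
\begin{align*}
RA(X) = (1-2p)^2 + 2(1-2p)p = 1 - 4p + 4p^2 + 2p - 4p^2 = 1 - 2p,
\end{align*}
which is the claimed value.

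There is no genuine obstacle here: the statement is a routine computation once the base-case probabilities from Lemma \ref{lemmageq3} are in hand, and it parallels Corollary \ref{twoleaves} line for line, with $3p$ replaced by $2p$ and the quartic term absent. The only point requiring a moment of care is the justification that $P_{\alpha\beta\gamma}(X) = 0$ for two leaves, which I would argue from the cardinality behaviour of the Fitch operation as above rather than by appeal to the recursions. The ultrametric hypothesis enters only through the equality $p_1 = p_2 = p$, which is precisely what allows the two single-leaf conservation probabilities to be written with a common parameter $p$.
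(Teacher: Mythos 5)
Your proof is correct and follows essentially the same route the paper intends: the paper derives Corollary \ref{twoleaves3states} implicitly from the (omitted) base case of Theorem \ref{theoremRA3}, exactly as Corollary \ref{twoleaves} was read off from the $n=2$ computation in the proof of Theorem \ref{theoremRA}, and your substitution of $P_{\alpha}(X)=(1-2p)^2$ and $P_{\alpha\beta}(X)=2(1-2p)p$ into \eqref{ra3states} with $P_{\alpha\beta\gamma}(X)=0$ reproduces that computation. The arithmetic and the cardinality argument for $P_{\alpha\beta\gamma}(X)=0$ are both sound.
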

\begin{proposition}
For any rooted binary phylogenetic ultrametric tree and the $N_3$-model, the reconstruction accuracy for the Fitch algorithm using any two terminal taxa $x_1,x_2 \in X$ for ancestral state reconstruction is given by
$$RA(\{x_1,x_2\}) = 1-2p.$$
\end{proposition}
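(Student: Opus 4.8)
The plan is to mirror the proof of Proposition~\ref{twoleavescor}, adapting each step to the three-state setting. As before, I would fix two terminal taxa $x_1, x_2 \in X$ and apply the standard decomposition of $T$ into its maximal pending subtrees $T_1$ and $T_2$, splitting the argument into two cases according to whether $x_1$ and $x_2$ lie in different subtrees or in the same one.

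In the first case, where without loss of generality $x_1 \in Y_1$ and $x_2 \in Y_2$, the last common ancestor of $x_1$ and $x_2$ is the root, so the two taxa form an ultrametric two-taxon tree and Corollary~\ref{twoleaves3states} immediately gives $RA(\{x_1,x_2\}) = 1-2p$. In the second case, where without loss of generality $x_1, x_2 \in Y_1$, I would introduce the last common ancestor $y$ of $x_1$ and $x_2$, the induced subtree $\widehat{T}$ rooted at $y$, and the two substitution parameters $\overline{p}$ (for one specific change from $\rho$ to $y$) and $\widehat{p}$ (for one specific change from $y$ to either $x_1$ or $x_2$), exactly as in the geometry of Figure~\ref{tree2}.

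The computational core is then to evaluate $RA(\{x_1,x_2\})$ via the three-state formula \eqref{ra3states}. Since the Fitch algorithm applied to only two leaves can never return a set of size three, the term $P_{\alpha\beta\gamma}(\{x_1,x_2\})$ vanishes and $RA(\{x_1,x_2\}) = P_{\alpha}(\{x_1,x_2\}) + P_{\alpha\beta}(\{x_1,x_2\})$. I would compute each summand by conditioning on the state of $y$ via the law of total probability and collapsing the conditionals using the symmetry of the $N_3$-model, just as in \eqref{2alpha} and \eqref{2alphabeta}. This yields $P_{\alpha}(\{x_1,x_2\}) = (1-2\overline{p})(1-2\widehat{p})^2 + 2\,\overline{p}\,\widehat{p}^2$ and $P_{\alpha\beta}(\{x_1,x_2\}) = (1-\overline{p})\,2\,(1-2\widehat{p})\,\widehat{p} + 2\,\overline{p}\,\widehat{p}^2$, where the reduced weights $1-2\overline{p}$ and $1-\overline{p}$ (in place of the $N_4$ weights $1-3\overline{p}$ and $1-2\overline{p}$) arise because $y$ now ranges over only three states.

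Substituting and expanding, the $\widehat{p}^2$ and $\overline{p}\,\widehat{p}^2$ contributions cancel and one is left with $RA(\{x_1,x_2\}) = 1 - 2\overline{p} - 2\widehat{p} + 6\,\overline{p}\,\widehat{p}$. The final step is to recognise this as $1-2p$ using the three-state analogue of \eqref{defp}, namely $p = \overline{p} + \widehat{p} - 3\,\overline{p}\,\widehat{p}$, which follows by enumerating the three intermediate states of $y$ on the path from $\rho$ to a leaf. I do not anticipate a genuine obstacle: the argument is structurally identical to the four-state case, and the only point demanding care is the bookkeeping of the symmetry-reduced conditional weights, which differ from the $N_4$ values precisely because there are two rather than three non-root states available for $y$.
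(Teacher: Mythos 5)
Your proposal is correct and follows essentially the same route as the paper's own proof: the same two-case split via the standard decomposition, the same reduction to $P_{\alpha}(\{x_1,x_2\})+P_{\alpha\beta}(\{x_1,x_2\})$ with conditioning on the state of $y$, and the identical closed forms $(1-2\overline{p})(1-2\widehat{p})^2+2\overline{p}\,\widehat{p}^2$ and $2(1-\overline{p})(1-2\widehat{p})\widehat{p}+2\overline{p}\,\widehat{p}^2$ combined via $p=\overline{p}+\widehat{p}-3\overline{p}\,\widehat{p}$. Nothing further is needed.
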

\begin{proof}
Let $x_1, x_2 \in X$ be two terminal taxa of any rooted binary ultrametric phylogenetic tree $T$. Moreover, we consider the standard decomposition of $T$ into its two maximal pending subtrees $T_1$ and $T_2$ as depicted in Figure \ref{tree}. Thus, the proof is divided  into two cases. \\
In the first case we have without loss of generality $x_1 \in Y_1$ and $x_2 \in Y_2$. By Corollary \ref{twoleaves3states} the reconstruction accuracy using $x_1$ and $x_2$ is then $RA(\{x_1,x_2\})=1-2p$. 
\\
In the second case we have either $x_1, x_2 \in Y_1$ or $x_1,x_2 \in Y_2$. Thus, without loss of generality we consider $x_1, x_2 \in Y_1$ as depicted in Figure \ref{tree2}. Let $y$ be the last common ancestor of $x_1$ and $x_2$, i.e. the first node that occurs both on the path from $x_1$ to $\rho$ as well as on the path from $x_2$ to $\rho$. Let $\widehat{T}$ be the subtree of $T_1$ consisting of the paths from $y$ to $x_1$ and $x_2$, respectively. $\widehat{T}$ is depicted with dotted lines in Figure \ref{tree2}. Thus, the root of $\widehat{T}$ is $y$. In addition, let $\overline{p}$ be the probability for one specific change from $\rho$ to $y$, and let $\widehat{p}$ be the probability for one specific change from $y$ to $x_1$ or $x_2$.
By \eqref{ra3states} we have 
\begin{align}
RA(\{x_1,x_2\})=P_{\alpha}(\{x_1,x_2\}) + P_{\alpha\beta}(\{x_1,x_2\}). \label{ra2taxa3states}
\end{align}
Note that $P_{\alpha\beta\gamma}(\{x_1,x_2\})=0$ since we cannot obtain sets with more than two elements with the Fitch algorithm when only $x_1$ and $x_2$ are used for the reconstruction.
\\
In the following, we use the notation $f|_{\{x_1,x_2\}}$ for the restriction of character $f$ on taxa $x_1$ and $x_2$. \\
Furthermore, we have
\begin{align}
P_{\alpha}(\{x_1,x_2\}) &= \PP(\MP(f|_{\{x_1,x_2\}},\widehat{T})=\{\alpha\} | \rho=\alpha) \nonumber \\
&= (1-2\overline{p})~ \PP(\MP(f|_{\{x_1,x_2\}},\widehat{T})=\{\alpha\} | y=\alpha,\rho=\alpha) \nonumber \\
&\quad + \overline{p}~ \PP(\MP(f|_{\{x_1,x_2\}},\widehat{T})=\{\alpha\} | y=\beta,\rho=\alpha) \nonumber \\
&\quad + \overline{p}~ \PP(\MP(f|_{\{x_1,x_2\}},\widehat{T})=\{\alpha\} | y=\gamma,\rho=\alpha) \nonumber \\
&= (1-2\overline{p})~ \PP(\MP(f|_{\{x_1,x_2\}},\widehat{T})=\{\alpha\} | y=\alpha,\rho=\alpha) \nonumber \\
&\quad + 2\overline{p}~ \PP(\MP(f|_{\{x_1,x_2\}},\widehat{T})=\{\beta\} | y=\alpha,\rho=\alpha) \nonumber \\
&\qquad \text{by the symmetry of the } N_3 \text{-model} \nonumber \\
&= (1-2\overline{p}) (1-2\widehat{p})^2 + 2\overline{p} \widehat{p}^2 \label{2alpha3states}
\end{align}
Moreover,
\begin{align}
P_{\alpha\beta}(\{x_1,x_2\}) &= \PP(\MP(f|_{\{x_1,x_2\}},\widehat{T})=\{\alpha,\beta\} | \rho=\alpha) \nonumber \\
&= (1-2\overline{p})~ \PP(\MP(f|_{\{x_1,x_2\}},\widehat{T})=\{\alpha,\beta\} | y=\alpha,\rho=\alpha) \nonumber \\
&\quad + \overline{p}~ \PP(\MP(f|_{\{x_1,x_2\}},\widehat{T})=\{\alpha,\beta\} | y=\beta,\rho=\alpha) \nonumber \\
&\quad + \overline{p}~ \PP(\MP(f|_{\{x_1,x_2\}},\widehat{T})=\{\alpha,\beta\} | y=\gamma,\rho=\alpha) \nonumber \\
&= (1-\overline{p})~ \PP(\MP(f|_{\{x_1,x_2\}},\widehat{T})=\{\alpha,\beta\} | y=\alpha,\rho=\alpha) \nonumber \\
&\quad + \overline{p}~ \PP(\MP(f|_{\{x_1,x_2\}},\widehat{T})=\{\beta,\gamma\} | y=\alpha,\rho=\alpha) \nonumber \\
&\qquad \text{by the symmetry of the } N_3 \text{-model} \nonumber \\
&= (1-\overline{p}) ~2~ (1-2\widehat{p})~ \widehat{p} + \overline{p} ~2~ \widehat{p}^2 \label{2alphabeta3states}
\end{align}
Thus by \ref{2alpha3states} and \eqref{2alphabeta3states}, \eqref{ra2taxa3states} becomes
\begin{align*}
RA(\{x_1,x_2\})&= (1-2\overline{p}) (1-2\widehat{p})^2 + 2\overline{p} \widehat{p}^2 + (1-\overline{p}) ~2~ (1-2\widehat{p})~ \widehat{p} + \overline{p} ~2~ \widehat{p}^2 \\
&=1- 2~\overline{p} -2~\widehat{p} + 6~\overline{p}~\widehat{p} \\
&=1-2p \\
&\quad \text{since } p=\overline{p}+\widehat{p}-3~\overline{p} \widehat{p}.  
\end{align*}
Therefore, in both cases $RA(\{x_1,x_2\})=1-2p$ which completes the proof. 
\end{proof}
\begin{corollary}
For any rooted binary phylogenetic ultrametric tree and the $N_3$-model, the Fitch algorithm using all terminal taxa is more accurate, or at least as accurate, for ancestral state reconstruction than using any two terminal taxa, that is
$$RA(X) \geq 1-2p.$$
\end{corollary}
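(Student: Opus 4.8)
The plan is to derive this statement directly by combining two results already established for the $N_3$-model, exactly mirroring the way Corollary \ref{cor2} follows from Theorem \ref{theoremRA} and Proposition \ref{twoleavescor} in the four-state case. The two ingredients are Theorem \ref{theoremRA3}, which asserts that $RA(X) \geq 1-2p$ for any rooted binary ultrametric phylogenetic tree under the $N_3$-model, and the immediately preceding proposition, which shows that the reconstruction accuracy obtained from any two terminal taxa $x_1, x_2 \in X$ is exactly $RA(\{x_1,x_2\}) = 1-2p$.

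First I would invoke the preceding proposition to rewrite the target quantity $1-2p$ as $RA(\{x_1,x_2\})$, valid for an arbitrary pair of terminal taxa. Then I would apply Theorem \ref{theoremRA3} to conclude $RA(X) \geq 1-2p = RA(\{x_1,x_2\})$. Since the choice of $x_1$ and $x_2$ was arbitrary, this shows that the full-taxon reconstruction accuracy is at least as large as the two-taxon accuracy for every pair, which is precisely the claim. No induction or term manipulation is required at this stage, because all the combinatorial work has already been absorbed into Theorem \ref{theoremRA3} and the preceding proposition.

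Consequently there is no genuine obstacle in this final step: the content lies entirely in the two results it cites. The only point that warrants a sentence of care is the interpretive one, namely emphasising, as in the four-state discussion following Corollary \ref{cor2}, that because the two-taxon accuracy coincides exactly with the single-taxon conservation probability $1-2p$, passing from one taxon to two taxa does not raise the lower bound. Hence the inequality $RA(X) \geq 1-2p$ is already tight with respect to both the single-leaf and the two-leaf benchmarks under the $N_3$-model.
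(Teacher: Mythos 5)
Your proposal is correct and matches the paper's own (implicit) derivation exactly: the corollary is obtained by combining Theorem \ref{theoremRA3} with the preceding proposition giving $RA(\{x_1,x_2\})=1-2p$, just as Corollary \ref{cor2} follows from Theorem \ref{theoremRA} and Proposition \ref{twoleavescor} in the four-state case. Your closing remark that the two-taxon benchmark does not improve on the single-taxon bound is also the same observation the paper makes after Corollary \ref{cor2}.
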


\bibliography{references}

\end{document}